\newtheorem{thm}{Theorem}
\newtheorem{asm}{Assumption}
\renewenvironment{proof}[1][\proofname]{{\bfseries #1. }}{\qed}
\newtheorem{lemappB}{Lemma}
\newtheorem{lemappC}{Lemma}
\theoremstyle{definition}
\newtheorem{rem}{Remark}
\title{Testing for Coefficient Randomness in Local-to-Unity Autoregressions}
\author[1]{Mikihito Nishi\footnote{I am greatly indebted to Eiji Kurozumi, my advisor, for discussions and his help, support and encouragement. I am also grateful to Mototsugu Shintani and Yohei Yamamoto for their helpful comments. All errors are mine. Address correspondence to: Graduate School of Economics, Hitotsubashi University, 2-1 Naka, Kunitachi, Tokyo 186-8601, Japan; e-mail: ed225007@g.hit-u.ac.jp}}
\affil[1]{Graduate School of Economics, Hitotsubashi University}
\date{\today}
\begin{document}
\baselineskip= 6mm
	
    \begin{titlingpage}
        \maketitle
        
        \begin{abstract}
            In this study, we propose a test for the coefficient randomness in autoregressive models where the autoregressive coefficient is local to unity, which is empirically relevant given the results of earlier studies. Under this specification, we theoretically analyze the effect of the correlation between the random coefficient and disturbance on tests' properties, which remains largely unexplored in the literature. Our analysis reveals that the correlation crucially affects the power of tests for coefficient randomness and that tests proposed by earlier studies can perform poorly when the degree of the correlation is moderate to large. The test we propose in this paper is designed to have a power function robust to the correlation. Because the asymptotic null distribution of our test statistic depends on the correlation $\psi$ between the disturbance and its square as earlier tests do, we also propose a modified version of the test statistic such that its asymptotic null distribution is free from the nuisance parameter $\psi$. The modified test is shown to have better power properties than existing ones in large and finite samples.
        \end{abstract}

\medskip

\noindent
\emph{Keywords}: random coefficient autoregression, local to unity, Bonferroni test

\medskip

\noindent
\emph{JEL Codes}: C12, C22

\end{titlingpage}

\section{Introduction}

In this paper, we consider the random-coefficient autoregressive (RCA) model,
\begin{align}
	y_t = (\rho+ \omega v_t)y_{t-1} + \varepsilon_t, \ \ t=1,2,\cdots,T, \label{model:rca}
\end{align}
where $(\varepsilon_t,v_t)'$ is a random vector with $\mathbb{E}[\varepsilon_t] = \mathbb{E}[v_t] = 0$ and $\mathbb{V}[v_t]=1$. Model \eqref{model:rca} is a generalization of the usual AR(1) model, in that the autoregressive coefficient fluctuates over time around its mean $\rho$ with variance $\omega^2$, instead of being constant at $\rho$. Since \citet{nicholls1982RandomCoefficient}, much attention has been paid to the estimation and inference theory of model \eqref{model:rca}; see, for example, \citet{hwang2005ExplosiveRandomCoefficientb}, \citet{aue2011QuasiLikelihoodEstimation}, \citet{horvath2019Testingrandomnessa} and references therein.

One aspect of the inferential theory for \eqref{model:rca} that has attracted econometricians and statisticians is how to test the hypothesis of $\omega^2=0$, that is, how to test the null hypothesis of the usual autoregressive specification against the RCA alternative. \citet{nicholls1982RandomCoefficient} and \citet{lee1998Coefficientconstancya} proposed test statistics for testing $H_0: \omega^2=0$ under the stationarity condition $\rho^2+\omega^2<1$. \citet{nagakura2009Testingcoefficienta} showed the \citet{lee1998Coefficientconstancya} test statistic has the same asymptotic null distribution when $\rho=1$ as when $|\rho|<1$.\footnote{\citet{nagakura2009Asymptotictheory} also showed the \citet{lee1998Coefficientconstancya} test is consistent when $\rho^2+\omega^2>1$ and some other conditions hold. However, he did not show the test statistic has the same limiting null distribution when $\rho>1$ as when $|\rho|\leq1$.} The condition, $\rho^2+\omega^2\leq1$, which these tests rest on, however, is restrictive and limits their applicability in practice, in view of the empirical analyses by  \citet{hill2014UnifiedIntervalb} and \citet{horvath2019Testingrandomnessa}. They applied model \eqref{model:rca} to several macroeconomic variables and estimated $\rho$ to be near 1 for most of the variables, with $\rho$ estimated to be greater than 1 for some. For instance, the $\rho$ estimates obtained by \citet{horvath2019Testingrandomnessa} took values between 0.9884 and 1.0021. Their results indicate that nonstationary RCA models where $\rho^2+\omega^2$ and $\rho$ may be greater than 1 with $\rho$ near unity are empirically relevant, and thus testing procedures are required that are valid under these conditions. \citet{horvath2019Testingrandomnessa} proposed a test statistic for $H_0: \omega^2=0$ valid under the nonstationarity conditions (as well as under the stationarity conditions).

Another insight earlier studies provide is that the variation of the autoregressive root is smaller (if it exists) than assumed in the extant literature. For example, according to the empirical analysis conducted by \citet{horvath2019Testingrandomnessa}, 4 variables out of 7 were estimated to have positive variance ($\omega^2>0$) in their coefficient, which is between $7.2\times10^{-5}$ and $8.2\times10^{-3}$. However, \citet{horvath2019Testingrandomnessa} studied the finite-sample power of their test only for $\omega^2\geq2.5\times10^{-1}$, which are of larger magnitude than observed in practice (as long as macroeconomic variables are concerned). \footnote{\citet{horvath2022ChangepointDetection} applied the RCA model to a cryptocurrency index and estimated $\omega^2$ to be around $10^{-1}$.} Therefore, it is unclear whether their test performs well when $\omega^2$ is of small magnitude that seems to be typical in empirical studies. In fact, their test has almost no distinguishing power when $\omega^2$ is relatively small, as our simulation studies will reveal.

Given these observations, we consider model \eqref{model:rca} with $\rho$ near unity and $\omega^2$ near zero and employ the following near unit root RCA model:
\begin{align}
	y_t=(\rho_T + \omega_T v_t)y_{t-1} + \varepsilon_t, \label{model:near_unity_rca}
\end{align}
where $\rho_T=1+a/T$ and $\omega_T=c/T^{3/4}$. When $\omega_T^2=0$, \eqref{model:near_unity_rca} is the popular (conventional) local-to-unity AR(1) model. This specification has been useful in analyzing the power of unit root tests \citep{elliott1996Efficienttestsa} and in developing estimation and inference theory for models involving persistent variables \citep[see, \textit{inter alia},][]{phillips1987Unifiedasymptotica,stock1991Confidenceintervals,campbell2006Efficienttests,phillips2014ConfidenceIntervals}. The local-to-zero variance $\omega_T^2=c^2/T^{3/2}$ has been employed under $\rho_T=1$ by \citet{mccabe1998Powertestsa} and \cite{nishi2022StochasticLocal} to derive and compare the local asymptotic power functions of unit root tests of $\omega^2=0$ against $\omega^2>0$. Introducing the local-to-zero variance provides us with a convenient framework in which we can evaluate power functions of several tests against the alternatives that are close to the null of $\omega^2=0$ and seem relevant for empirical applications.\footnote{For example, when $T=200$ and $0<c^2\leq50$, the variance $\omega_T^2$ takes values between 0 and $1.8\times10^{-2}$.} Model \eqref{model:near_unity_rca} integrates these two local-to-unit-root parametrizations, thereby rendering itself an empirically relevant random-coefficient model.

In the literature, testing procedures for $H_0: \omega^2=0$ have often been studied in the special case with $\rho_T=1$. Such a model is called stochastic unit root (STUR) model. Test statistics for $H_0: \omega^2=0$ in the STUR model have been proposed by earlier studies, including \citet{mccabe1995Testingtimea}, \citet{leybourne1996Caneconomica} and \citet{distaso2008Testingunit}. Moreover, \citet{mccabe1998Powertestsa} and \citet{nishi2022StochasticLocal} analyzed the power properties of several tests for $H_0: \omega^2=0$ under the STUR modelling. On the one hand, the STUR model is a generalization of the pure unit root model, which is a main reason authors have paid much attention to it. But on the other hand, the assumption of $\rho_T$ exactly being unity is a restrictive condition that is unlikely to hold in empirical analysis. Our model, \eqref{model:near_unity_rca}, generalizes the STUR model by allowing $\rho_T$ to take a value different from unity (but near it), which is a more realistic assumption given the empirical analyses conducted by earlier studies.

\citet{nishi2022StochasticLocal} revealed that, under the STUR ($\rho_T=1$) modelling with $\mathrm{Corr}(\varepsilon_t,v_t)=0$, the test for $H_0: \omega^2=0$ proposed by \citet{lee1998Coefficientconstancya} and \cite{nagakura2009Testingcoefficienta} (hereafter, LN) has a higher local asymptotic power function than other tests. We will demonstrate that this is also the case when $\rho_T = 1+a/T$ and $\mathrm{Corr}(\varepsilon_t,v_t)=0$. In this paper, however, it will be shown that the LN test can perform poorly when $\mathrm{Corr}(\varepsilon_t,v_t)\neq0$, for both the cases $\rho_T=1$ and $\rho_T=1+a/T$. We will therefore propose several tests for $H_0: \omega^2=0$ whose power properties are robust to the value of $\mathrm{Corr}(\varepsilon_t,v_t)$. One of those tests turns out to be more powerful than the LN test for moderate to large values of $\mathrm{Corr}(\varepsilon_t,v_t)$. To the best of our knowledge, this study is the first to investigate how the value of $\mathrm{Corr}(\varepsilon_t,v_t)$ affects the power properties of tests for $H_0: \omega^2=0$, with the only exception of \citet{su2012Examiningpowera}, who analyzed through simulation this effect in finite samples under $\rho=1$.

The other issue that this paper tackles is how to remove the effect of nuisance parameters from the null distributions of the test statistics mentioned above. As pointed out by \cite{nagakura2009Testingcoefficienta}, the limiting null distribution of the LN test statistic depends on the correlation between $\varepsilon_t$ and $\varepsilon_t^2-\sigma_{\varepsilon}^2$, $\mathrm{Corr}(\varepsilon_t,\varepsilon_t^2-\sigma_{\varepsilon}^2)$, and so do the test statistics proposed in this article. If the true value of $\rho_T$ is known, the nuisance parameter can be removed straightforwardly by a similar way to the modification proposed by \citet{nagakura2009Testingcoefficienta}, but this is not the case when the true $\rho_T$ is unknown. This problem is caused by the fact that the localizing parameter $a$ cannot be consistently estimated, as will be made clear later. As a solution for this problem, we propose a testing procedure based on the Bonferroni approach as \citet{campbell2006Efficienttests} and \citet{phillips2014ConfidenceIntervals} did in the context of predictive regressions.

The remainder of this paper is organized as follows. In Section 2, we consider the local-to-unity model \eqref{model:near_unity_rca} with the true value of $\rho_T$ (or equivalently $a$) known, to uncover the effect $\mathrm{Corr}(\varepsilon_t,v_t)$ has on power properties of tests for $H_0: \omega^2=0$ and propose new tests that have power functions robust to this effect. We also propose a modification to make these tests independent of $\mathrm{Corr}(\varepsilon_t,\varepsilon_t^2-\sigma_{\varepsilon}^2)$, a nuisance parameter, under the null. In Section 3, we consider model \eqref{model:near_unity_rca} with unknown $\rho_T$. Because the modification proposed in the preceding section cannot be directly applied in this case due to the fact that $a$ is not consistently estimable, we base our tests on the Bonferroni approach by constructing a confidence interval for $a$. Section 4 analyzes the finite-sample properties of our tests and compares them with those of existing tests. In Section 5, we apply our tests to real data. Section 6 concludes.

\section{The Case of Known $\rho_T$}
\subsection{The effect of $\mathrm{Corr}(\varepsilon_t,v_t)$}
In this section, we begin our analysis under the assumption that the true value of $\rho_T$ is known. This assumption will be relaxed in the next section. Our analysis for model \eqref{model:near_unity_rca} in this and the next section is conducted under the following assumption.
\begin{asm}
	$(\varepsilon_t, v_t)' \sim \mathrm{i.i.d} \ (0,\Omega)$, where 
	\begin{align*}
		\Omega \coloneqq \begin{pmatrix}
			\sigma_\varepsilon^2 & \sigma_{\varepsilon v} \\
			\sigma_{\varepsilon v} & 1
		\end{pmatrix}.
	\end{align*}
	 Also, $\mathbb{E}[\varepsilon_t^4] < \infty$ and $\mathbb{E}[v_t^8] < \infty$. Moreover, $y_0 = o_p(T^{1/2})$. \label{asm:local}
\end{asm}
Define $\eta_t \coloneqq \varepsilon_t^2 - \sigma_{\varepsilon}^2$ and $\sigma_\eta^2 \coloneqq \mathbb{E}[\eta_t^2]$. Define also the partial sum process $(W_{\varepsilon,T}, W_{\eta,T})'$ on $[0,1]$ by $W_{\varepsilon,T}(r) \coloneqq T^{-1/2}\sigma_{\varepsilon}^{-1}\sum_{t=1}^{\lfloor Tr\rfloor}\varepsilon_t$ and $W_{\eta,T}(r) \coloneqq T^{-1/2}\sigma_\eta^{-1}\sum_{t=1}^{\lfloor Tr\rfloor}\eta_t$. Then, it follows from the functional central limit theorem (FCLT) that under Assumption \ref{asm:local}, as $T \to \infty$
\begin{equation*}
	\begin{pmatrix}
		W_{\varepsilon,T} \\ W_{\eta,T}
	\end{pmatrix}
	\Rightarrow \begin{pmatrix}
		W_{\varepsilon} \\ W_{\eta}
	\end{pmatrix},
\end{equation*}
in the Skorokhod space $D[0,1]$, where $(W_{\varepsilon},W_{\eta})'$ is a vector Brownian motion with the covariance coefficient
\begin{align*}
	\begin{pmatrix}
		1 & \psi \\
		\psi & 1
	\end{pmatrix}
\end{align*}
with $\psi \coloneqq \mathbb{E}[\varepsilon_t\eta_t]/(\sigma_{\varepsilon}\sigma_\eta)$. Note that $W_\varepsilon$ and $W_\eta$ are not necessarily independent because of the covariance $\psi$, and the Brownian motion $W_\eta$ satisfies the following equality in distribution:
\begin{align}
	W_\eta \stackrel{d}{=} \psi W_\varepsilon + \sqrt{1-\psi^2}W_1, \label{eqn:w_eta_varepsilon_1}
\end{align}
where $W_1$ is a standard Brownian motion independent of $W_\varepsilon$. Following the argument by \citet{nishi2022StochasticLocal}, we can show that under Assumption \ref{asm:local}, the standardized process $T^{-1/2}y_{\lfloor T\cdot \rfloor}$ on $[0,1]$ constructed from \eqref{model:near_unity_rca} weakly converges to the Ornstein-Uhlenbeck (OU) process $\sigma_{\varepsilon}J_a(\cdot)$, where $J_a$ solves $dJ_a(r) = aJ_a(r)dr + dW_\varepsilon(r)$. We can also construct consistent estimators of variances, namely, $\sigma_{\varepsilon}^2$ and $\sigma_\eta^2$. Define $z_t(\rho_T) \coloneqq y_t - \rho_T y_{t-1}(=\omega_Tv_ty_{t-1}+\varepsilon_t)$. Then, the estimators $\hat{\sigma}_{\varepsilon,T}^2(\rho_T) \coloneqq T^{-1} \sum_{t=1}^{T}z_t^2(\rho_T)$ and $\hat{\sigma}^2_{\eta,T}(\rho_T) \coloneqq T^{-1}\sum_{t=1}^{T}\{z_t^2(\rho_T) - \hat{\sigma}_{\varepsilon,T}^2(\rho_T)\}^2$ are consistent for $\sigma_{\varepsilon}^2$ and $\sigma_\eta^2$, respectively, which is proven in Appendix B.

Several tests of $H_0: \omega_T^2=0$ for model \eqref{model:near_unity_rca} have been proposed by earlier work such as LN. The LN test statistic is defined by
\begin{align*}
	\mathrm{LN}_T(\rho_T) \coloneqq \frac{\sum_{t=1}^{T}(y_{t-1}^2-T^{-1}\sum_{t=1}^{T}y_{t-1}^2)z_t^2(\rho_T)}{\hat{\sigma}_{\eta,T}(\rho_T)\{\sum_{t=1}^{T}(y_{t-1}^2-T^{-1}\sum_{t=1}^{T}y_{t-1}^2)^2\}^{1/2}}. 
\end{align*}
\citet{nishi2022StochasticLocal} found that the LN test has a high power function 
under the assumption that $\sigma_{\varepsilon v}=0$ and $\rho_T=1$.

As pointed out by \citet{nishi2022StochasticLocal}, the LN test, which was originally derived as a locally best invariant test, can also be obtained as the t-test for $H_0: \omega_T^2=0$ under $\sigma_{\varepsilon v}=0$. To see this, note that from model \eqref{model:near_unity_rca}, a simple calculation gives
	\begin{align}
		z_t^2(\rho_T) = \sigma_{\varepsilon}^2 + \omega_T^2y_{t-1}^2 + \xi_t, \label{model:linearized}
	\end{align}
where $\xi_t \coloneqq \omega_T^2y_{t-1}^2(v_t^2-1) + 2\omega_Ty_{t-1}\varepsilon_tv_t + (\varepsilon_t^2-\sigma_{\varepsilon}^2)$. Because $\mathbb{E}[\xi_t]=0$ and $\mathbb{E}[y_{t-1}^2\xi_t]=0$ under Assumption \ref{asm:local} with $\sigma_{\varepsilon v}=0$, model \eqref{model:linearized} may be viewed as the linear regression model with $\xi_t$ playing the role of the disturbance, and the LN test statistic is obtained as the t-test for $H_0: \omega_T^2=0$ (with the variance estimator under the null used).

In view of this observation, the LN test seems to be a natural test for $H_0: \omega_T^2=0$. However, this may not be the case when $\sigma_{\varepsilon v}\neq0$, because it results in  $\mathbb{E}[y_{t-1}^2\xi_t]=2\omega_T\sigma_{\varepsilon v}\mathbb{E}[y_{t-1}^3]\neq0$, that is, endogeneity. In fact, as we will shortly show, the power of the LN test is crucially affected by the value of $\sigma_{\varepsilon v}$, and the greater the value of $\sigma_{\varepsilon v}$ (in absolute value), the more poorly the LN test performs. Because we consider the localized model \eqref{model:near_unity_rca}, the setup for analyzing the influence $\sigma_{\varepsilon v}$ has is also a localized one. To derive relevant local asymptotic distributions, we localize the correlation (rather than the covariance $\sigma_{\varepsilon v}$) between $\varepsilon_t$ and $v_t$ in the following way:
\begin{asm}
	$\mathrm{Corr}(\varepsilon_t,v_t) = \sigma_{\varepsilon v}/\sigma_\varepsilon = q/T^{1/4}$. \label{asm:localized_corr}
\end{asm} 
Here, $q$ is interpreted as the correlation coefficient in the limit as $T\to \infty$. Under this localization, the LN test statistic has the following asymptotic distribution:
\begin{thm}
	Consider model \eqref{model:near_unity_rca}. Under Assumptions \ref{asm:local} and \ref{asm:localized_corr}, we have
	\begin{align}
		\mathrm{LN}_T(\rho_T) \Rightarrow \frac{\int_{0}^{1}\widetilde{J_{a,2}}(r)dW_\eta(r)}{\bigl\{\int_{0}^{1}\bigl(\widetilde{J_{a,2}}\bigr)^2(r)dr\bigr\}^{1/2}} +  \frac{\sigma_\varepsilon^2}{\sigma_\eta}\Biggl[\frac{c^2\int_{0}^{1}\bigl(\widetilde{J_{a,2}}\bigr)^2(r)dr + 2cq\int_{0}^{1}\widetilde{J_{a,2}}(r)\widetilde{J_{a,1}}(r)dr}{\bigl\{\int_{0}^{1}\bigl(\widetilde{J_{a,2}}\bigr)^2(r)dr\bigr\}^{1/2}}\Biggr], \label{dist:ln}
	\end{align}
	where $\widetilde{J_{a,1}}(r) \coloneqq J_a(r) - \int_{0}^{1}J_a(s)ds$ and
	$\widetilde{J_{a,2}}(r) \coloneqq J_a^2(r) - \int_{0}^{1}J_a^2(s)ds$. \label{thm:ln_dist}
\end{thm}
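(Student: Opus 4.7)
The plan is to substitute the linearization \eqref{model:linearized}, $z_t^2(\rho_T) = \sigma_\varepsilon^2 + \omega_T^2 y_{t-1}^2 + \xi_t$, into the numerator of $\mathrm{LN}_T(\rho_T)$. Because $y_{t-1}^2 - T^{-1}\sum_s y_{s-1}^2$ has sample mean zero, the constant $\sigma_\varepsilon^2$ is annihilated and the statistic splits as
\begin{align*}
\mathrm{LN}_T(\rho_T) = \frac{\omega_T^2 \bigl\{\sum_{t=1}^T (y_{t-1}^2 - T^{-1}\sum_s y_{s-1}^2)^2\bigr\}^{1/2}}{\hat\sigma_{\eta,T}(\rho_T)} + \frac{\sum_{t=1}^T (y_{t-1}^2 - T^{-1}\sum_s y_{s-1}^2)\xi_t}{\hat\sigma_{\eta,T}(\rho_T)\bigl\{\sum_{t=1}^T (y_{t-1}^2 - T^{-1}\sum_s y_{s-1}^2)^2\bigr\}^{1/2}},
\end{align*}
a deterministic ``signal'' piece plus a ``noise'' piece. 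The basic ingredients for the limits are the FCLT $T^{-1/2}y_{\lfloor T\cdot\rfloor} \Rightarrow \sigma_\varepsilon J_a(\cdot)$ already cited in the paper, the consistency $\hat\sigma_{\eta,T}(\rho_T) \stackrel{p}{\to} \sigma_\eta$ proved in Appendix B, and the joint FCLT $(W_{\varepsilon,T}, W_{\eta,T}) \Rightarrow (W_\varepsilon, W_\eta)$ with correlation $\psi$.

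The signal piece is immediate: CMT gives $T^{-3}\sum_t (y_{t-1}^2 - T^{-1}\sum_s y_{s-1}^2)^2 \Rightarrow \sigma_\varepsilon^4 \int_0^1 (\widetilde{J_{a,2}})^2$, and combined with $\omega_T^2 = c^2/T^{3/2}$ it converges to $(c^2\sigma_\varepsilon^2/\sigma_\eta)\bigl\{\int (\widetilde{J_{a,2}})^2\bigr\}^{1/2}$, which is the $c^2$ term in \eqref{dist:ln} once a common denominator is imposed. For the noise piece I would decompose $\xi_t = \omega_T^2 y_{t-1}^2(v_t^2-1) + 2\omega_T y_{t-1}\varepsilon_t v_t + \eta_t$ and treat the three components in turn. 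A conditional-variance calculation based on the MDS $v_t^2-1$ gives $\sum(y_{t-1}^2 - T^{-1}\sum y_{s-1}^2) y_{t-1}^2(v_t^2-1) = O_p(T^{5/2})$, so after multiplication by $\omega_T^2 = O(T^{-3/2})$ and division by the denominator of order $T^{3/2}$ this component contributes $O_p(T^{-1/2}) = o_p(1)$. For the middle component I split $\varepsilon_t v_t = \sigma_{\varepsilon v} + (\varepsilon_t v_t - \sigma_{\varepsilon v})$: an analogous variance bound makes the martingale remainder $O_p(T^{-1/4})$ in the ratio, while the drift $2\omega_T\sigma_{\varepsilon v}\sum(y_{t-1}^2 - T^{-1}\sum y_{s-1}^2)y_{t-1}$ uses the localized scaling $2\omega_T\sigma_{\varepsilon v} = 2cq\sigma_\varepsilon/T$ together with the CMT limit
\begin{align*}
T^{-5/2}\sum_t \bigl(y_{t-1}^2 - T^{-1}\textstyle\sum_s y_{s-1}^2\bigr) y_{t-1} \Rightarrow \sigma_\varepsilon^3 \int_0^1 \widetilde{J_{a,2}}(r) J_a(r)\,dr = \sigma_\varepsilon^3 \int_0^1 \widetilde{J_{a,2}}(r)\widetilde{J_{a,1}}(r)\,dr,
\end{align*}
the identity holding because $\int \widetilde{J_{a,2}} = 0$. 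This yields the $2cq$ term in \eqref{dist:ln}.

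The third component, $\sum (y_{t-1}^2 - T^{-1}\sum y_{s-1}^2)\eta_t$, is the only $O_p(T^{3/2})$ summand and supplies the leading stochastic integral. Writing it as $\sum y_{t-1}^2 \eta_t - (T^{-1}\sum y_{s-1}^2)\sum\eta_t$, the main obstacle is the convergence $T^{-3/2}\sum y_{t-1}^2\eta_t \Rightarrow \sigma_\varepsilon^2\sigma_\eta \int_0^1 J_a^2\,dW_\eta$. The delicate point is that $J_a$ (driven by $W_\varepsilon$) and $W_\eta$ are not independent, since they share the Brownian component $W_\varepsilon$ with correlation $\psi$, so $\int J_a^2\,dW_\eta$ must be interpreted as a genuine It\^o integral rather than a conditional normal. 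Because $y_{t-1}^2$ is adapted to the filtration generated by $(\varepsilon_s, v_s)_{s \leq t-1}$ and $\eta_t$ is an MDS with respect to its natural filtration, standard stochastic integral convergence results (Hansen, 1992; Kurtz and Protter, 1991) apply once the joint weak convergence of integrand and integrator on the Skorokhod space and the uniform-tightness of the integrator are verified. Combining this with the elementary limit $T^{-3/2}(T^{-1}\sum y_{s-1}^2)\sum\eta_t \Rightarrow \sigma_\varepsilon^2\sigma_\eta (\int J_a^2) W_\eta(1)$ and using $W_\eta(1) = \int_0^1 dW_\eta$ produces $\sigma_\varepsilon^2 \sigma_\eta \int \widetilde{J_{a,2}}\,dW_\eta$. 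Assembling the four pieces via Slutsky's theorem and placing them over the common denominator $\sigma_\eta\sigma_\varepsilon^2\{\int(\widetilde{J_{a,2}})^2\}^{1/2}$ then yields the representation in \eqref{dist:ln}.
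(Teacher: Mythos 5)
Your proposal is correct and follows essentially the same route as the paper: the paper proves the result by establishing the weak limit of $T^{-3/2}\sum_t \widetilde{y_{t-1}^2}z_t^2(\rho_T)$ (Lemma B.3(a)) via exactly the decomposition you use --- the $\eta_t$ term handled by Hansen's (1992) stochastic-integral convergence, the $v_t^2$ term split into the deterministic $c^2$ signal plus a negligible $v_t^2-1$ martingale, and the cross term split into the $2cq$ drift (using $2\omega_T\sigma_{\varepsilon v}=2cq\sigma_\varepsilon T^{-1}$) plus an $O_p(T^{-1/4})$ martingale remainder --- then combines this with $\hat\sigma_{\eta,T}(\rho_T)\stackrel{p}{\to}\sigma_\eta$ and the CMT. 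Your order-of-magnitude bounds and the identity $\int_0^1\widetilde{J_{a,2}}J_a=\int_0^1\widetilde{J_{a,2}}\widetilde{J_{a,1}}$ all match the paper's computations.
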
 

Note that when $q=0$ and $\rho_T=1$ (or $a=0$), the limit distribution reduces to the one \citet{nishi2022StochasticLocal} derived (their equation (19)). \citet{nishi2022StochasticLocal} found that the LN test performs better than other tests when $q=0$. To see how the value of $q$ alters the LN test's power properties, we simulate the asymptotic distribution in \eqref{dist:ln} by 100,000 replications. The replications are based on $\varepsilon_t \sim \mathrm{i.i.d} \ N(0,1)$, so that $\sigma_\varepsilon^2=1$, $\sigma_\eta^2=2$ and $\psi=\mathbb{E}[\varepsilon_t\eta_t]/(\sigma_\varepsilon\sigma_\eta)=0$. Note that the effect of $q$ on the limit distribution is symmetric when $\psi=0$; that is, $q=\bar{q}$ and $q=-\bar{q}$ produce the identical distribution. This is because $(W_\varepsilon,W_\eta)' \stackrel{d}{=}(-W_\varepsilon,W_\eta)'$ when $\psi=0$, and hence $(J_a,W_\eta)' \stackrel{d}{=} (-J_a,W_\eta)'$ in this case. Thus, in this simulation, and in the replications conducted later where $\psi=0$ holds, we only consider positive values of $q$. Figure \ref{fig:power_asym_Lee} shows the power functions of the LN test for $q=0,1,2,3$ under $a=0$. One noticeable feature is that as $q$ gets larger, the power function gets lower and flatter for $c^2\geq 5$, while the power becomes greater for $c^2\leq5$. Since the power gains over $c^2\leq5$ are obviously outweighed by the power losses over $c^2\geq5$, we conclude that the LN test performs poorly when $q$ is large (in absolute value).\footnote{In our unreported simulations, we also found a similar tendency in the power function of \citet{distaso2008Testingunit}'s LM test, which is based on the assumption of $\varepsilon_t$ and $v_t$ being independent. The results are omitted to save space.}

The LN test's poor performance for large $q$ could be attributed to the endogeneity $\mathbb{E}[y_{t-1}^2\xi_t]=2c\sigma_\varepsilon qT^{-1}\mathbb{E}[y_{t-1}^3] \neq 0$ in \eqref{model:linearized}. One solution for this endogeneity is to augment model \eqref{model:linearized} by adding $y_{t-1}$ as a regressor,
\begin{align}
		z_t^2(\rho_T) = \sigma_{\varepsilon}^2 + \delta_Ty_{t-1} + \omega_T^2y_{t-1}^2 + \xi_t^*, \label{model:linearized_augmented}
\end{align}
where $\delta_T \coloneqq 2c\sigma_\varepsilon qT^{-1}$ and $\xi^*_t \coloneqq \omega_T^2y_{t-1}^2(v_t^2-1) + 2\omega_Ty_{t-1}(\varepsilon_tv_t - \sigma_{\varepsilon v}) + (\varepsilon_t^2-\sigma_{\varepsilon}^2)$. Because this augmented model is free from the endogeneity, that is, $\mathbb{E}[\xi_t^*] = \mathbb{E}[y_{t-1}\xi_t^*] = \mathbb{E}[y_{t-1}^2\xi_t^*]=0$ under Assumption \ref{asm:local}, we propose using the t-test for $H_0: \omega_T^2=0$ in model \eqref{model:linearized_augmented}. We also propose using the Wald test for $(\delta_T,\omega_T^2)=(0,0)$, because $\omega_T^2=0$ if and only if $(\delta_T,\omega_T^2)=(0,0)$. To express the t and Wald test statistics, first regress $z_t^2(\rho_T)$, $y_{t-1}$ and $y_{t-1}^2$ on a constant:
\begin{align}
		\widetilde{z_t^2}(\rho_T) = \delta_T\widetilde{y_{t-1}} + \omega_T^2\widetilde{y_{t-1}^2} + \widetilde{\xi_t^*}, \label{model:linearized_augmented_demeaned}
\end{align}
where $\widetilde{z_t^2}(\rho_T) \coloneqq z_t^2(\rho_T) - T^{-1}\sum_{t=1}^{T}z_t^2(\rho_T)$, and $\widetilde{y_{t-1}}$, $\widetilde{y_{t-1}^2}$ and $\widetilde{\xi_t^*}$ are defined similarly. Let $\widetilde{Z_2}(\rho_T) \coloneqq (\widetilde{z_1^2}(\rho_T),\widetilde{z_2^2}(\rho_T),\ldots,\widetilde{z_T^2}(\rho_T))'$, $\widetilde{X_1} \coloneqq (\widetilde{y_{0}},\widetilde{y_{1}},\ldots,\widetilde{y_{T-1}})'$, $\widetilde{X_2} \coloneqq (\widetilde{y_{0}^2},\widetilde{y_{1}^2},\ldots,\widetilde{y_{T-1}^2})'$, and $\widetilde{\Xi^*} \coloneqq (\widetilde{\xi_1^*},\widetilde{\xi_2^*},\ldots,\widetilde{\xi_T^*})'$. Then, model \eqref{model:linearized_augmented_demeaned} can be expressed in matrix notation as
\begin{align}
	\widetilde{Z_2}(\rho_T) = \widetilde{X}\theta_T+\widetilde{\Xi^*}, \label{model:linearized_augmented_matrix}
\end{align}
where $\widetilde{X} \coloneqq (\widetilde{X_1},\widetilde{X_2})$ and $\theta_T \coloneqq (\delta_T,\omega_T^2)'$. The t and Wald test statistics are then given by
\begin{align*}
	t_{\hat{\omega}_T^2}(\rho_T) \coloneqq \hat{\sigma}_{\widetilde{\xi^*}}^{-1}(\rho_T)(\widetilde{X_2}'M_1\widetilde{X_2})^{-1/2}\widetilde{X_2}'M_1\widetilde{Z_2}(\rho_T) \\
	\intertext{and}
	W_T(\rho_T) \coloneqq \frac{\hat{\theta}_T(\rho_T)'(\widetilde{X}'\widetilde{X})\hat{\theta}_T(\rho_T)}{\hat{\sigma}_{\widetilde{\xi^*}}^2(\rho_T)},
\end{align*}
where $M_1 \coloneqq I_T - \widetilde{X_1}(\widetilde{X_1}'\widetilde{X_1})^{-1}\widetilde{X_1}'$, and $\hat{\sigma}_{\widetilde{\xi^*}}^2(\rho_T)$ and $\hat{\theta}_T(\rho_T)$ are the OLS variance and coefficient estimators of \eqref{model:linearized_augmented_matrix}, respectively. We shall call the t and Wald tests in model \eqref{model:linearized_augmented} augmented t and Wald tests, respectively. The limiting distributions of these augmented test statistics are collected in the following theorem.

\begin{thm}
	Consider model \eqref{model:near_unity_rca}. Under Assumptions \ref{asm:local} and \ref{asm:localized_corr}, we have
	\begin{align}
		t_{\hat{\omega}_T^2}(\rho_T) &\Rightarrow \frac{\int_{0}^{1}Q_a(r)dW_\eta(r)}{\bigl[\int_{0}^{1}Q_a^2(r)dr\bigr]^{1/2}} + \frac{c^2\sigma_\varepsilon^2}{\sigma_\eta}\Biggl[\int_{0}^{1}Q_a^2(r)dr\Biggr]^{1/2}, \label{dist:t}
	\end{align}
	where
	\begin{align*}
		Q_a(r) \coloneqq \widetilde{J_{a,2}}(r) -  \frac{\int_{0}^{1}\widetilde{J_{a,1}}(s)\widetilde{J_{a,2}}(s)ds}{\int_{0}^{1}\bigl(\widetilde{J_{a,1}}\bigr)^2(s)ds}\widetilde{J_{a,1}}(r),
	\end{align*}
	and
	\begin{align}
		W_T(\rho_T) \Rightarrow& \Bigg\{\begin{pmatrix}
			\int_{0}^{1}\widetilde{J_{a,1}}(r)dW_\eta(r) \\ \int_{0}^{1}\widetilde{J_{a,2}}(r)dW_\eta(r) \end{pmatrix} + \frac{\sigma_\varepsilon^2}{\sigma_\eta} \begin{pmatrix} c^2\int_{0}^{1}\widetilde{J_{a,1}}(r)\widetilde{J_{a,2}}(r)dr+2cq\int_{0}^{1}\bigl(\widetilde{J_{a,1}}\bigr)^2(r)dr \\  c^2\int_{0}^{1}\bigl(\widetilde{J_{a,2}}\bigr)^2(r)dr+2cq\int_{0}^{1}\widetilde{J_{a,1}}(r)\widetilde{J_{a,2}}(r)dr
		\end{pmatrix}\Biggr\}' \notag \\ &\times \begin{pmatrix}
			\int_{0}^{1}\bigl(\widetilde{J_{a,1}}\bigr)^2(r)dr & \int_{0}^{1}\widetilde{J_{a,1}}(r)\widetilde{J_{a,2}}(r)dr \\ \int_{0}^{1}\widetilde{J_{a,2}}(r)\widetilde{J_{a,1}}(r)dr & \int_{0}^{1}\bigl(\widetilde{J_{a,2}}\bigr)^2(r)dr
		\end{pmatrix}^{-1} \notag \\ &\times \Bigg\{\begin{pmatrix}
			\int_{0}^{1}\widetilde{J_{a,1}}(r)dW_\eta(r) \\ \int_{0}^{1}\widetilde{J_{a,2}}(r)dW_\eta(r) \end{pmatrix} + \frac{\sigma_\varepsilon^2}{\sigma_\eta} \begin{pmatrix} c^2\int_{0}^{1}\widetilde{J_{a,1}}(r)\widetilde{J_{a,2}}(r)dr+2cq\int_{0}^{1}\bigl(\widetilde{J_{a,1}}\bigr)^2(r)dr \\  c^2\int_{0}^{1}\bigl(\widetilde{J_{a,2}}\bigr)^2(r)dr+2cq\int_{0}^{1}\widetilde{J_{a,1}}(r)\widetilde{J_{a,2}}(r)dr
		\end{pmatrix}\Biggr\}. \label{dist:wald}
	\end{align}
	In particular, the asymptotic null distributions ($c^2=0$) are
		\begin{align*}
		t_{\hat{\omega}_T^2}(\rho_T) &\Rightarrow \frac{\int_{0}^{1}Q_a(r)dW_\eta(r)}{\bigl[\int_{0}^{1}Q_a^2(r)dr\bigr]^{1/2}}
	\end{align*}
	and
	\begin{align*}
		W_T(\rho_T) \Rightarrow \begin{pmatrix}
			\int_{0}^{1}\widetilde{J_{a,1}}(r)dW_\eta(r) \\ \int_{0}^{1}\widetilde{J_{a,2}}(r)dW_\eta(r) \end{pmatrix}'  \begin{pmatrix}
			\int_{0}^{1}\bigl(\widetilde{J_{a,1}}\bigr)^2(r)dr & \int_{0}^{1}\widetilde{J_{a,1}}(r)\widetilde{J_{a,2}}(r)dr \\ \int_{0}^{1}\widetilde{J_{a,2}}(r)\widetilde{J_{a,1}}(r)dr & \int_{0}^{1}\bigl(\widetilde{J_{a,2}}\bigr)^2(r)dr
		\end{pmatrix}^{-1} \begin{pmatrix}
			\int_{0}^{1}\widetilde{J_{a,1}}(r)dW_\eta(r) \\ \int_{0}^{1}\widetilde{J_{a,2}}(r)dW_\eta(r) \end{pmatrix}, 
	\end{align*}
	which are standard normal and chi square with 2 degrees of freedom, respectively, when $\psi=0$.
 \label{thm:t_wald_dist}
\end{thm}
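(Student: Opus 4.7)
The plan is to treat both statistics as functionals of OLS quantities for regression~\eqref{model:linearized_augmented_matrix}, normalize by the diagonal matrix $D_T \coloneqq \mathrm{diag}(T, T^{3/2})$, and invoke the continuous mapping theorem. Writing
\begin{align*}
	W_T(\rho_T) = \hat{\sigma}_{\widetilde{\xi^*}}^{-2}(\rho_T)\bigl(D_T^{-1}\widetilde{X}'\widetilde{Z_2}(\rho_T)\bigr)'\bigl(D_T^{-1}\widetilde{X}'\widetilde{X}D_T^{-1}\bigr)^{-1}\bigl(D_T^{-1}\widetilde{X}'\widetilde{Z_2}(\rho_T)\bigr)
\end{align*}
and $D_T^{-1}\widetilde{X}'\widetilde{Z_2}(\rho_T) = (D_T^{-1}\widetilde{X}'\widetilde{X}D_T^{-1})(D_T\theta_T) + D_T^{-1}\widetilde{X}'\widetilde{\Xi^*}$, the Wald limit reduces to the joint limits of the normalized Gram matrix, the scaled parameter $D_T\theta_T = (2cq\sigma_\varepsilon, c^2)'$, and the score. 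Frisch--Waugh likewise gives $t_{\hat\omega_T^2}(\rho_T) = (T^{-3/2}\widetilde{X_2}'M_1\widetilde{Z_2}(\rho_T))/\bigl(\hat{\sigma}_{\widetilde{\xi^*}}(\rho_T)\sqrt{T^{-3}\widetilde{X_2}'M_1\widetilde{X_2}}\bigr)$, to be analyzed from the same ingredients.

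First I would dispatch the design matrix. From the convergence $T^{-1/2}y_{\lfloor Tr\rfloor} \Rightarrow \sigma_\varepsilon J_a(r)$ noted after~\eqref{eqn:w_eta_varepsilon_1}, the CMT yields $T^{-1/2}\widetilde{y_{\lfloor Tr\rfloor}} \Rightarrow \sigma_\varepsilon \widetilde{J_{a,1}}(r)$ and $T^{-1}\widetilde{y_{\lfloor Tr\rfloor}^2} \Rightarrow \sigma_\varepsilon^2 \widetilde{J_{a,2}}(r)$ in $D[0,1]$, so with $A \coloneqq \mathrm{diag}(\sigma_\varepsilon, \sigma_\varepsilon^2)$,
\begin{align*}
	D_T^{-1}\widetilde{X}'\widetilde{X}D_T^{-1} \Rightarrow A \begin{pmatrix} \int_0^1 \bigl(\widetilde{J_{a,1}}\bigr)^2 & \int_0^1 \widetilde{J_{a,1}}\widetilde{J_{a,2}} \\ \int_0^1 \widetilde{J_{a,1}}\widetilde{J_{a,2}} & \int_0^1 \bigl(\widetilde{J_{a,2}}\bigr)^2 \end{pmatrix} A.
\end{align*}
Multiplying this limit by the exact value $D_T\theta_T = (2cq\sigma_\varepsilon, c^2)'$ produces the drift vector in \eqref{dist:wald}, and a Schur-complement computation on $AMA$ (with $M$ the middle matrix above) delivers $T^{-3}\widetilde{X_2}'M_1\widetilde{X_2} \Rightarrow \sigma_\varepsilon^4 \int_0^1 Q_a^2$.

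Next I would handle the score. Decompose $\xi_t^* = \omega_T^2 y_{t-1}^2(v_t^2-1) + 2\omega_T y_{t-1}(\varepsilon_t v_t - \sigma_{\varepsilon v}) + \eta_t$ into three martingale differences. Using $\omega_T^2 = c^2/T^{3/2}$, $\sigma_{\varepsilon v} = q\sigma_\varepsilon/T^{1/4}$ and the moment conditions of Assumption~\ref{asm:local}, direct variance bounds show that, for the second coordinate of $D_T^{-1}\widetilde{X}'\widetilde{\Xi^*}$, the $\omega_T^2$- and $\omega_T$-pieces are of orders $O_p(T^{-1/2})$ and $O_p(T^{-1/4})$ (analogous bounds in the first coordinate), so both are $o_p(1)$. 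The surviving $\eta_t$-piece is handled by the convergence-to-stochastic-integrals theorem applied to the joint limit $(T^{-1/2}y_{\lfloor T\cdot\rfloor}, W_{\eta,T}) \Rightarrow (\sigma_\varepsilon J_a, W_\eta)$:
\begin{align*}
	D_T^{-1}\widetilde{X}'\widetilde{\Xi^*} \Rightarrow \sigma_\eta \, A \begin{pmatrix} \int_0^1 \widetilde{J_{a,1}}(r)\,dW_\eta(r) \\ \int_0^1 \widetilde{J_{a,2}}(r)\,dW_\eta(r) \end{pmatrix},
\end{align*}
the demeaning being transferred via $\int_0^1 \widetilde{J_{a,j}}\,dW_\eta = \int_0^1 J_a^j\,dW_\eta - W_\eta(1)\int_0^1 J_a^j$.

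Finally, $\hat{\sigma}_{\widetilde{\xi^*}}^2(\rho_T) \xrightarrow{p} \sigma_\eta^2$ by an argument parallel to the consistency of $\hat{\sigma}_{\eta,T}^2(\rho_T)$ stated before Theorem~\ref{thm:ln_dist} (proved in Appendix~B), once one notes that the OLS drift $\widetilde{X}\theta_T$ is uniformly $o_p(1)$ under the chosen localization. Plugging the three limits into the reduced forms, the outer $A$ factors cancel against $A^{-1}M^{-1}A^{-1}$ inside the Wald quadratic form to deliver~\eqref{dist:wald}, and the $\sigma_\varepsilon^2$ in the numerator of the t-form combines with $\sqrt{\sigma_\varepsilon^4 \int Q_a^2}$ in the denominator to give~\eqref{dist:t}. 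When $\psi=0$ and $c^2=0$, $W_\eta$ is independent of $W_\varepsilon$ and hence of $J_a, Q_a$ and $M$, so conditional on $J_a$ the stochastic-integral vector is Gaussian with covariance $M$ and $\int Q_a dW_\eta$ is Gaussian with variance $\int Q_a^2$; the $\chi_2^2$ and $N(0,1)$ null distributions follow by unconditioning. The main technical obstacle is the careful bookkeeping of orders that forces the $\omega_T$- and $\omega_T^2$-contributions to the score to vanish in probability, where the higher moment conditions of Assumption~\ref{asm:local} are essential.
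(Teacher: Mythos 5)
Your proposal is correct and follows essentially the same route as the paper: the paper's Lemma B.3 establishes exactly the limits you need (the normalized cross-products $T^{-1}\sum\widetilde{y_{t-1}}z_t^2(\rho_T)$ and $T^{-3/2}\sum\widetilde{y_{t-1}^2}z_t^2(\rho_T)$, obtained by the same drift-plus-martingale split with the same order bounds, together with $\hat{\sigma}_{\widetilde{\xi^*}}^2(\rho_T)\stackrel{p}{\to}\sigma_\eta^2$), and the theorem then follows by the continuous mapping theorem just as you describe. Your organization via the identity $D_T^{-1}\widetilde{X}'\widetilde{Z_2}=(D_T^{-1}\widetilde{X}'\widetilde{X}D_T^{-1})(D_T\theta_T)+D_T^{-1}\widetilde{X}'\widetilde{\Xi^*}$ is merely a regrouping of the paper's term-by-term computation, not a different argument.
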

There are two points worth mentioning. First, the augmented t test statistic in  model \eqref{model:linearized_augmented} has the asymptotic null distribution independent of $q$. This is a direct result of the augmentation, which is intended to remove the endogeneity from the linearized model \eqref{model:linearized}. When performing the augmented t test, we regress the endogenous regressor $\widetilde{y_{t-1}^2}$ (and $\widetilde{z_t^2}(\rho_T)$) on $\widetilde{y_{t-1}}$. In the limit, this projection amounts to replacing $\widetilde{J_{a,2}}$ in the limiting distribution of $\mathrm{LN}_T$ with $Q_a$, the residual from the linear projection of $\widetilde{J_{a,2}}$ on $\widetilde{J_{a,1}}$ in the Hilbert space (see, for example, \citet{phillips1990AsymptoticProperties}).

The second point is that the limiting null distributions of the augmented t and Wald test statistics are standard normal and chi square with 2 degrees of freedom, respectively, when $\psi=\mathbb{E}[\varepsilon_t\eta_t]/(\sigma_\varepsilon \sigma_\eta)=0$. This will be seen immediately upon noting $J_a$ and $W_\eta$ are independent when $\psi=0$ (due to the independence between $W_\varepsilon$ and $W_\eta$), and the limit distributions under $c^2=0$
conditional on $W_\varepsilon$ are standard normal and chi square with 2 degrees of freedom (and so are they unconditionally). Along the lines of this argument, the asymptotic null distribution of the LN test statistic is seen to be standard normal when $\psi=0$.

\subsection{Removing $\psi$ from the limiting null distributions}
Unless $\psi=0$, the test statistics discussed thus far have asymptotic null distributions dependent on $\psi$. Actually, this dependence stems from the strong persistence of the regressors $y_{t-1}$ and $y_{t-1}^2$ and the long run endogeneity present in the linearized models \eqref{model:linearized} and \eqref{model:linearized_augmented}. To illustrate this, consider model \eqref{model:linearized} and the LN test. Under the null of $\omega_T^2=0$, the model reduces to
\begin{align}
	z_t^2(\rho_T) = \sigma_{\varepsilon}^2 + \omega_T^2y_{t-1}^2 + \eta_t, \label{model:linearized_null}
\end{align}
where $\eta_t=\varepsilon_t^2-\sigma_{\varepsilon}^2$ and $y_{t} = \rho_T y_{t-1} + \varepsilon_t$. This model is free from endogeneity because $\mathbb{E}[\eta_t] = \mathbb{E}[y_{t-1}^2\eta_t]=0$. However, the regressor $y_{t-1}^2$ is, in a sense, ``endogenous" in the limit as $T\to\infty$, because of the correlation $\psi$ between its innovation $\varepsilon_t$ and the disturbance $\eta_t=\varepsilon_t^2-\sigma_{\varepsilon}^2$. Indeed, the LN test statistic becomes
\begin{align*}
	\mathrm{LN}_T(\rho_T) = \frac{\sum_{t=1}^{T}\widetilde{y_{t-1}^2}\eta_t}{\hat{\sigma}_{\eta,T}(\rho_T)\{\sum_{t=1}^{T}(\widetilde{y_{t-1}^2})^2\}^{1/2}} \Rightarrow \frac{\int_{0}^{1}\widetilde{J_{a,2}}(r)dW_\eta(r)}{\bigl\{\int_{0}^{1}\bigl(\widetilde{J_{a,2}}\bigr)^2(r)dr\bigr\}^{1/2}},
\end{align*}
where $\widetilde{J_{a,2}}(r)$ ($\approx T^{-1}\widetilde{y_{\lfloor Tr \rfloor}^2}$) is correlated with the differential $dW_\eta(r)$ ($\approx T^{-1/2}\eta_{\lfloor Tr \rfloor}$). This correlation originates from that between $W_\varepsilon$ and $W_\eta$, which is denoted by $\psi$. Therefore, the correlation between the regressor's innovation $\varepsilon_t$ and the regression disturbance $\eta_t$ affects the test statistic's behavior in the limit. 

Interestingly, the situation we are in is analogous to the one that has been considered in the literature on predictive regressions. In predictive regressions, a main aim is typically to investigate whether stock returns ($r_t$) can be predicted by another economic or financial variable ($x_t$). To test the predictability of $r_t$, predictive regressions involve regressing $r_t$ on a constant and the lag of $x_t$:
\begin{align*}
	r_t=\mu + \beta x_{t-1} + u_{r,t}.
\end{align*}
Here, $\beta$ represents the predictability of $r_t$; the stock return is not predictable by $x_t$ if $\beta=0$. In the literature on predictive regressions, it has been well known that the usual t test for the hypothesis $\beta=0$ can be misleading when the regressor $x_t$ is persistent, or has a generating mechanism of the form $x_t = \rho_{x,T}x_{t-1} + u_{x,t}$ with $\rho_{x,T}= 1+a_x/T$, and its innovation $u_{x,t}$ is correlated with the regression disturbance $u_{r,t}$. The problem arising in such a case is that the asymptotic null distribution of the t statistic depends on $\mathrm{Corr}(u_{x,t},u_{r,t})$ and is not standard normal unless $\mathrm{Corr}(u_{x,t},u_{r,t})=0$ \citep[see, for example,][]{campbell2006Efficienttests}. Our situation here is essentially the same: the regressor $y_{t-1}^2$ in \eqref{model:linearized_null} is persistent with the mechanism $y_t=\rho_Ty_{t-1}+\varepsilon_t$ and its innovation $\varepsilon_t$ is correlated with the regression disturbance $\eta_t$, which results in the test statistic having the limiting null distribution dependent on the correlation $\psi=\mathrm{Corr}(\varepsilon_t,\eta_t)$. 

For the predictive regression model, there is an extensive literature on this problem, and numerous solutions have been proposed \citep{campbell2006Efficienttests,phillips2013Predictiveregression,phillips2014ConfidenceIntervals,kostakis2015RobustEconometric}. For our case, fortunately, one of those solutions can be applied. Specifically, following \citet{campbell2006Efficienttests}, we modify the test statistics (LN, augmented t and augmented Wald) so that their asymptotic null distributions are standard normal and chi square with 2 degrees of freedom, irrespective of the value of $\psi$.  To explain the idea, take the LN test as an example. Note that under the null of $c^2=0$, $z_t^2(\rho_T)$ in the numerator of the LN test statistic may be asymptotically expressed as
\begin{align*}
	z_{\lfloor Tr \rfloor}^2(\rho_T) = \sigma_\varepsilon^2 + (\varepsilon_{\lfloor Tr \rfloor}^2 - \sigma_\varepsilon^2) &\approx \sigma_\varepsilon^2 +  T^{1/2}\times \sigma_\eta dW_\eta(r) \\
	&\stackrel{d}{=}\sigma_\varepsilon^2 + T^{1/2}\times \sigma_\eta\psi dW_\varepsilon(r) + T^{1/2}\times \sigma_\eta\sqrt{1-\psi^2}dW_1(r),
\end{align*}
given the distributional equivalence \eqref{eqn:w_eta_varepsilon_1}. This observation leads us to propose the following modification to remove the effect of $\psi$:
\begin{align}
	z_t^2(\rho_T) \to z_t^{2*}(\rho_T) \coloneqq \frac{z_t^2(\rho_T) - \hat{\sigma}_{\eta,T}(\rho_T)\hat{\psi}_T(\rho_T)(z_t(\rho_T)/\hat{\sigma}_{\varepsilon,T}(\rho_T))}{\sqrt{1-\hat{\psi}^2_T(\rho_T)}}, \label{modification}
\end{align}
where $\hat{\psi}_T(\rho_T) \coloneqq T^{-1}\sum_{t=1}^{T}z_t(\rho_T)\{z_t^2(\rho_T) - \hat{\sigma}_{\varepsilon,T}^2(\rho_T)\}/(\hat{\sigma}_{\varepsilon,T}(\rho_T)\hat{\sigma}_{\eta,T}(\rho_T))$ is an estimator of $\psi$.\footnote{In fact, \cite{campbell2006Efficienttests} proposed the modification based on the optimality argument.} In Appendix B, we show that $\hat{\psi}_T(\rho_T)$ is consistent. Note that $z_t(\rho_T)$ is used as a proxy for $T^{1/2}\times \sigma_{\varepsilon}dW_\varepsilon$.  With the replacement given in \eqref{modification}, the modified LN test statistic is defined by
\begin{align*}
	\mathrm{LN}_T^*(\rho_T) \coloneqq \frac{\sum_{t=1}^{T}(y_{t-1}^2-T^{-1}\sum_{t=1}^{T}y_{t-1}^2)z_t^{2*}(\rho_T)}{\hat{\sigma}_{\eta,T}(\rho_T)\{\sum_{t=1}^{T}(y_{t-1}^2-T^{-1}\sum_{t=1}^{T}y_{t-1}^2)^2\}^{1/2}}
\end{align*}
The modified augmented t and Wald tests are based on the following regression model:
\begin{align}
	\widetilde{Z_2^*}(\rho_T) = \widetilde{X}\theta_T+\widetilde{\Xi^{**}}, \label{model:linearized_augmented_matrix_modified}
\end{align}
where $\widetilde{Z_2^*}(\rho_T) \coloneqq (\widetilde{z_1^{2*}}(\rho_T),\widetilde{z_2^{2*}}(\rho_T),\ldots,\widetilde{z_T^{2*}}(\rho_T))'$ with $\widetilde{z_t^{2*}}(\rho_T) \coloneqq z_t^{2*}(\rho_T) - T^{-1}\sum_{t=1}^{T}z_t^{2*}(\rho_T)$. The modified augmented test statistics are
\begin{align*}
	t_{\hat{\omega}_T^2}^*(\rho_T) \coloneqq \hat{\sigma}_{\widetilde{\xi^{**}}}^{-1}(\rho_T)(\widetilde{X_2}'M_1\widetilde{X_2})^{-1/2}\widetilde{X_2}'M_1\widetilde{Z_2^*}(\rho_T) \\
	\intertext{and}
	W_T^*(\rho_T) \coloneqq \frac{\hat{\theta}_T^{*'}(\rho_T)(\widetilde{X}'\widetilde{X})\hat{\theta}_T^*(\rho_T)}{\hat{\sigma}_{\widetilde{\xi^{**}}}^2(\rho_T)},
\end{align*}
where $\hat{\sigma}_{\widetilde{\xi^{**}}}^2(\rho_T)$ and $\hat{\theta}_T^{*}(\rho_T)$ are the OLS variance and coefficient estimators of \eqref{model:linearized_augmented_matrix_modified}, respectively.

\begin{thm}
	Consider model \eqref{model:near_unity_rca}. Under Assumptions \ref{asm:local} and \ref{asm:localized_corr}, we have
	\begin{align}
		\mathrm{LN}_T^*(\rho_T) &\Rightarrow \frac{\int_{0}^{1}\widetilde{J_{a,2}}(r)dW_1(r)}{\bigl\{\int_{0}^{1}\bigl(\widetilde{J_{a,2}}\bigr)^2(r)dr\bigr\}^{1/2}} + \frac{\sigma_\varepsilon^2}{\sigma_\eta\sqrt{1-\psi^2}}  \Biggl[\frac{c^2\int_{0}^{1}\bigl(\widetilde{J_{a,2}}\bigr)^2(r)dr + 2cq\int_{0}^{1}\widetilde{J_{a,2}}(r)\widetilde{J_{a,1}}(r)dr}{\bigl\{\int_{0}^{1}\bigl(\widetilde{J_{a,2}}\bigr)^2(r)dr\bigr\}^{1/2}}\Biggr], \label{dist:mod_ln} \\
		t_{\hat{\omega}_T^2}^*(\rho_T) &\Rightarrow \frac{\int_{0}^{1}Q_a(r)dW_1(r)}{\bigl[\int_{0}^{1}Q_a^2(r)dr\bigr]^{1/2}} + \frac{c^2\sigma_\varepsilon^2}{\sigma_\eta\sqrt{1-\psi^2}}\Biggl[\int_{0}^{1}Q_a^2(r)dr\Biggr]^{1/2}, \label{dist:mod_t}
	\end{align}
	and
	\begin{align}
			W_T^*(\rho_T) \Rightarrow& \Bigg\{\begin{pmatrix}
			\int_{0}^{1}\widetilde{J_{a,1}}(r)dW_1(r) \\ \int_{0}^{1}\widetilde{J_{a,2}}(r)dW_1(r) \end{pmatrix} + \frac{\sigma_\varepsilon^2}{\sigma_\eta\sqrt{1-\psi^2}}		\begin{pmatrix} c^2\int_{0}^{1}\widetilde{J_{a,1}}(r)\widetilde{J_{a,2}}(r)dr+2cq\int_{0}^{1}\bigl(\widetilde{J_{a,1}}\bigr)^2(r)dr \\  c^2\int_{0}^{1}\bigl(\widetilde{J_{a,2}}\bigr)^2(r)dr+2cq\int_{0}^{1}\widetilde{J_{a,1}}(r)\widetilde{J_{a,2}}(r)dr
		\end{pmatrix}\Biggr\}' \notag \\ &\times \begin{pmatrix}
			\int_{0}^{1}\bigl(\widetilde{J_{a,1}}\bigr)^2(r)dr & \int_{0}^{1}\widetilde{J_{a,1}}(r)\widetilde{J_{a,2}}(r)dr \\ \int_{0}^{1}\widetilde{J_{a,2}}(r)\widetilde{J_{a,1}}(r)dr & \int_{0}^{1}\bigl(\widetilde{J_{a,2}}\bigr)^2(r)dr
		\end{pmatrix}^{-1} \notag \\ &\times \Bigg\{\begin{pmatrix}
			\int_{0}^{1}\widetilde{J_{a,1}}(r)dW_1(r) \\ \int_{0}^{1}\widetilde{J_{a,2}}(r)dW_1(r) \end{pmatrix} + \frac{\sigma_\varepsilon^2}{\sigma_\eta\sqrt{1-\psi^2}} \begin{pmatrix} c^2\int_{0}^{1}\widetilde{J_{a,1}}(r)\widetilde{J_{a,2}}(r)dr+2cq\int_{0}^{1}\bigl(\widetilde{J_{a,1}}\bigr)^2(r)dr \\  c^2\int_{0}^{1}\bigl(\widetilde{J_{a,2}}\bigr)^2(r)dr+2cq\int_{0}^{1}\widetilde{J_{a,1}}(r)\widetilde{J_{a,2}}(r)dr
		\end{pmatrix}\Biggr\} \label{dist:mod_wald},
	\end{align}
	where $W_1$ is a standard Brownian motion independent of $W_\varepsilon$. \label{thm:mod_ln_t_Wald_dist}
\end{thm}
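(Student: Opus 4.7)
The plan is to reduce Theorem \ref{thm:mod_ln_t_Wald_dist} to Theorems \ref{thm:ln_dist} and \ref{thm:t_wald_dist} by tracking how the transformation \eqref{modification} propagates through each statistic. Using the consistency of $\hat{\psi}_T(\rho_T)$, $\hat{\sigma}_{\varepsilon,T}(\rho_T)$, and $\hat{\sigma}_{\eta,T}(\rho_T)$ established in Appendix B, I would first expand
\begin{align*}
z_t^{2*}(\rho_T) \;=\; \frac{1}{\sqrt{1-\psi^2}}\!\left[z_t^2(\rho_T) - \frac{\sigma_\eta\psi}{\sigma_\varepsilon}\,z_t(\rho_T)\right] + R_{t,T},
\end{align*}
where the remainder $R_{t,T}$ arises solely from replacing the population values by their consistent estimators. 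A standard Slutsky-type argument combined with the $O_p$-orders of the leading moments derived in the proofs of Theorems \ref{thm:ln_dist} and \ref{thm:t_wald_dist} shows that $R_{t,T}$ contributes only $o_p(1)$ to each suitably scaled sample moment $\sum_{t=1}^{T}\widetilde{y_{t-1}^k}\,z_t^{2*}(\rho_T)$, $k=1,2$, appearing in $\mathrm{LN}_T^*(\rho_T)$, $t^*_{\hat{\omega}_T^2}(\rho_T)$, and $W_T^*(\rho_T)$.

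Next, substituting $z_t^2(\rho_T) = \sigma_\varepsilon^2 + \omega_T^2 y_{t-1}^2 + \xi_t$ and $z_t(\rho_T) = \omega_T v_t y_{t-1} + \varepsilon_t$, I would follow the same decomposition used in proving Theorems \ref{thm:ln_dist} and \ref{thm:t_wald_dist}. The deterministic parts $\omega_T^2 y_{t-1}^2$ and the mean of $2\omega_T y_{t-1}\varepsilon_t v_t$ (using $\sigma_{\varepsilon v}=\sigma_\varepsilon q/T^{1/4}$) reproduce the $c^2$ and $2cq$ drift contributions, while the subtracted $z_t(\rho_T)$ has zero mean and hence contributes no new drift. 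The centered pieces $\omega_T^2 y_{t-1}^2(v_t^2-1)$, $2\omega_T y_{t-1}(\varepsilon_t v_t - \sigma_{\varepsilon v})$, and $\omega_T v_t y_{t-1}$ are negligible by martingale-variance bounds that exploit $\omega_T^2=c^2/T^{3/2}$. After the same scalings as in Theorems \ref{thm:ln_dist} and \ref{thm:t_wald_dist}, the leading stochastic contribution takes the form
\begin{align*}
\int_0^1 \widetilde{J_{a,k}}(r)\,[\,dW_\eta(r) - \psi\,dW_\varepsilon(r)\,],
\end{align*}
which by the distributional identity \eqref{eqn:w_eta_varepsilon_1} equals $\sqrt{1-\psi^2}\int_0^1 \widetilde{J_{a,k}}(r)\,dW_1(r)$. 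Dividing by the outer $\sqrt{1-\psi^2}$ from \eqref{modification} then cancels exactly one factor in the stochastic integral and inflates the drift by $1/\sqrt{1-\psi^2}$, producing the pattern seen in \eqref{dist:mod_ln}--\eqref{dist:mod_wald}.

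The remaining steps mirror the proofs of Theorems \ref{thm:ln_dist} and \ref{thm:t_wald_dist}. For $\mathrm{LN}_T^*(\rho_T)$, the denominator's limit follows from $\hat{\sigma}_{\eta,T}(\rho_T)\to\sigma_\eta$ and the standard weak limit of $T^{-3}\sum_{t=1}^{T}(\widetilde{y_{t-1}^2})^2$. For $t^*_{\hat{\omega}_T^2}(\rho_T)$, partialling out $\widetilde{X_1}$ in the numerator replaces $\widetilde{J_{a,2}}$ by $Q_a$ via the Hilbert-space projection argument of \citet{phillips1990AsymptoticProperties}, and I would separately verify $\hat{\sigma}^2_{\widetilde{\xi^{**}}}(\rho_T)\to\sigma_\eta^2$ using that the variance of $[\eta_t-(\sigma_\eta\psi/\sigma_\varepsilon)\varepsilon_t]/\sqrt{1-\psi^2}$ equals $\sigma_\eta^2$. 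For $W_T^*(\rho_T)$, the joint weak limit of the score vector and the design Gram matrix combines through the continuous mapping theorem to yield \eqref{dist:mod_wald}. The main obstacle is the bookkeeping in the first two steps: ensuring that the composite error from the plug-in step, together with the centered stochastic pieces of $\xi_t$ and $\omega_T v_t y_{t-1}$, remains uniformly negligible when contracted against the $O_p(T^{1/2})$ regressors $\widetilde{y_{t-1}}$ and $O_p(T)$ regressors $\widetilde{y_{t-1}^2}$, which requires carefully matching the $T^{-1/4}$ rates of $\omega_T y_{t-1}$ and $\sigma_{\varepsilon v}$ against the $T^{1/2}$ scale of the relevant martingale increments.
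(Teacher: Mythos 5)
Your proposal is correct and follows essentially the same route as the paper: the paper's Lemma B.4 performs exactly your decomposition of $\sum_t \widetilde{y_{t-1}^k}z_t^{2*}(\rho_T)$ into the unmodified moment minus $(\hat{\sigma}_{\eta,T}\hat{\psi}_T/\hat{\sigma}_{\varepsilon,T})\sum_t \widetilde{y_{t-1}^k}z_t(\rho_T)$, replaces the estimators by their probability limits, and invokes the identity $W_\eta \stackrel{d}{=}\psi W_\varepsilon+\sqrt{1-\psi^2}W_1$ to convert $\int\widetilde{J_{a,k}}\,d(W_\eta-\psi W_\varepsilon)/\sqrt{1-\psi^2}$ into $\int\widetilde{J_{a,k}}\,dW_1$, while Lemma B.3(d) establishes $\hat{\sigma}^2_{\widetilde{\xi^{**}}}(\rho_T)\stackrel{p}{\to}\sigma_\eta^2$ by the same variance computation you describe. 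The remaining assembly of the three statistics via the continuous mapping theorem mirrors the proofs of Theorems \ref{thm:ln_dist} and \ref{thm:t_wald_dist}, just as you indicate.
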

It should be noticed that the modified test statistics have pivotal asymptotic null distributions (standard normal and chi square with 2 degrees of freedom), thanks to the independence between $J_a$ and $W_1$. Also note that the limiting distributions are unaffected by our modification when $\psi=0$ (cf. Theorems \ref{thm:ln_dist} and \ref{thm:t_wald_dist}).

Figure \ref{fig:power_ln_t_wald} compares the asymptotic power functions of the modified LN, augmented t and augmented Wald tests under $q=0,1,2,3$ and $a=0$ with $\varepsilon_t \sim \mathrm{i.i.d} \ N(0,1)$. When $q=0,1$, the LN test performs best, and the Wald test's power function is slightly below the LN test's. As for the comparison between the augmented t and Wald tests, the Wald test has better power properties than the t test. When $q=2,3$, the Wald test outperforms the LN test (and the augmented t test) with the greater dominance by the Wald test for larger $q$. To translate these results into finite-sample ones, consider for example the case of $T=200$, in which case $-3.76<q<3.76$. Based on the results from the local asymptotic case, it is expected that the augmented Wald test performs more poorly than the LN test if $|q|\leq 1$, or $|\mathrm{Corr}(\varepsilon_t,v_t)|\leq0.266$ in this case, and outperforms the LN test otherwise. To see whether this reasoning gives a good approximation of finite-sample results, we present the size-adjusted power functions for $T=200$ in Figure \ref{fig:power_ln_t_wald_200}. The calculation of these power functions are based on 20,000 replications where $\varepsilon_t \sim \mathrm{i.i.d} \ N(0,1)$, $v_t \sim \mathrm{i.i.d} \ N(0,1)$, $\mathrm{Corr}(\varepsilon_t,v_t)\in\{0,0.25,0.5,0.75\}$ and $\omega_T^2\leq0.0177$ so that $c^2\leq50$ under $T=200$. Figure \ref{fig:power_ln_t_wald_200} shows the local asymptotic analysis can well predict the finite-sample results: the LN test performs slightly better than the augmented Wald test when $|\mathrm{Corr}(\varepsilon_t,v_t)|\leq0.25$, but the latter test outperforms the former otherwise.

We also compute the asymptotic power functions under $a\in\{-5,-10\}$, to investigate the effect of the $a$ value on the power properties. The computed power functions are displayed in Figures \ref{fig:power_ln_t_wald_general_5} and \ref{fig:power_ln_t_wald_general_10}. When $a<0$, the general pattern of the power properties is the same as when $a=0$: the LN test performs best for small $q$, and the Wald test performs best for moderate to large $q$. However, the powers of all the tests we consider get lower as $a$ deviates from 0 (as long as $a<0$).\footnote{According to the results of our simulation studies given later, deviations from 0 by positive $a$ seem to lead to higher powers.} A similar tendency of the power properties of tests for $H_0: \omega_T^2=0$ has been observed through simulation by earlier work such as \citet{nagakura2009Testingcoefficienta} and \citet{horvath2019Testingrandomnessa}. Nonetheless, even when $a<0$, the Wald test's power function is increasing in $q$ while the LN test's is decreasing in $q$. This renders the Wald test preferable in empirical applications, where in general the degree of the correlation between the random coefficient and disturbance is unknown to practitioners. 

Although all the above results are based on the assumption that the true $\rho_T$ is known so that the tests discussed so far are infeasible, they suggest the potential of the augmentation to enhance the ability to detect the nonzero variance of the autoregressive root.

\section{The Case of Unknown $\rho_T$}
In this section, we consider model \eqref{model:near_unity_rca} with the true $\rho_T$ unknown. To deal with the uncertainty about $\rho_T$, we use the OLS estimator $\hat{\rho}_T$ of $\rho_T$, which is defined by $\hat{\rho}_T\coloneqq \sum_{t=1}^{T}y_{t-1}y_t/\sum_{t=1}^{T}y_{t-1}^2$. In Appendix C, we show that $\hat{\rho}_T$ is $T-$consistent, and that other estimators defined in the preceding section such as $\hat{\sigma}_{\varepsilon,T}^2(\rho_T)$ remain consistent if they are computed using $\hat{\rho}_T$ instead of the true $\rho_T$. Given these consistency results, one may expect that the asymptotic behaviors of $\mathrm{LN}_T^*(\hat{\rho}_T)$, $t_{\hat{\omega}_T^2}^*(\hat{\rho}_T)$ and $W_T^*(\hat{\rho}_T)$ are the same as those of $\mathrm{LN}_T^*(\rho_T)$, $t_{\hat{\omega}_T^2}^*(\rho_T)$ and $W_T^*(\rho_T)$. Unfortunately, however, this is not the case. Indeed, it can be shown that the limiting null distributions of the test statistics, if calculated using $\hat{\rho}_T$, are no longer normal or chi square. This is essentially because $a$ is not consistently estimable, which has been well known in the literature \citep{campbell2006Efficienttests}. 

To deal with this problem, following \citet{cavanagh1995InferenceModels} and \citet{campbell2006Efficienttests}, we base our tests on the Bonferroni approach by using a confidence interval for $\rho_T$ instead of its point estimate $\hat{\rho}_T$. The Bonferroni-based test consists of two steps: first, construct a confidence interval for $\rho_T$, and then repeat either of the tests considered above using all the hypothetical $\rho_T$ values belonging to the confidence interval. Specifically, letting $S_T(\rho_T)$ denote either the modified LN, augmented t or augmented Wald test statistic (calculated using $\rho_T$), the testing procedure based on the Bonferroni approach is described as follows:

\begin{itemize} 
	
	\item[Step 1.] Given the data $\{y_t\}_{t=0}^T$, calculate for each hypothetical $\bar{\rho}_T = 1+\bar{a}/T$ (on some grid) the t statistic $t_{\bar{\rho_T}} \coloneqq (\hat{\rho}_T-\bar{\rho}_T)(\hat{\sigma}^{-2}_T(\hat{\rho}_T)\sum_{t=1}^{T}y_{t-1}^2)^{1/2}$, to construct the (equal-tailed) $100(1-\alpha_1)\%$ confidence interval for $\rho_T$, denoted by $\mathrm{CI}(\alpha_1)$. That is, $\mathrm{CI}(\alpha_1)$ is the collection of all the $\bar{\rho}_T = 1+\bar{a}/T$ that satisfies $cv(\bar{a})_{\alpha_1/2} \leq t_{\bar{\rho_T}} \leq cv(\bar{a})_{1-\alpha_1/2}$ with $cv(\bar{a})_{\alpha_1/2}$ and $cv(\bar{a})_{1-\alpha_1/2}$ denoting the lower and upper $\alpha_1/2$ quantiles of the asymptotic distribution of $t_{\bar{\rho_T}}$ derived under the assumption that $\rho_T=\bar{\rho}_T$.
		
	\item[Step 2.] For each $\bar{\rho}_T \in \mathrm{CI}(\alpha_1)$, calculate $S_T(\bar{\rho}_T)$ and compare its value with the critical value $cv_{\alpha_2}$ with significance level $\alpha_2$ (based on the standard normal or chi square distributions). Reject the null only if all the calculated $S_T(\bar{\rho}_T)$ ($\bar{\rho}_T \in \mathrm{CI}(\alpha_1)$) exceed the critical value.
\end{itemize}
We shall call tests following these two steps Bonferroni tests.

\begin{rem} 
	\begin{itemize}
		\item[ ]
		
		\item[1.] The asymptotic distribution of $t_{\bar{\rho_T}}$ under $\rho_T=\bar{\rho}_T$ is $\int_{0}^{1}J_{\bar{a}}(r)dW_\varepsilon(r)/\{\int_{0}^{1}J_{\bar{a}}^2(r)dr\}^{1/2}$. The quantiles $cv(\bar{a})_{\alpha_1/2}$ and $cv(\bar{a})_{1-\alpha_1/2}$ can be found through simulation.
		
		\item[2.] By the Bonferroni inequality, the resulted test is (asymptotically) a test with significance level $\alpha=\alpha_1+\alpha_2$: under $H_0$,
		\begin{align*}
			P_{a,\psi}\Bigl(\bigcap_{\bar{\rho}_T\in \mathrm{CI}(\alpha_1)}\Bigl\{S_T(\bar{\rho}_T)>cv_{\alpha_2}\Bigr\}\Bigr) &= P_{a,\psi}\Bigl(\bigcap_{\bar{\rho}_T\in \mathrm{CI}(\alpha_1)}\Bigl\{S_T(\bar{\rho}_T)>cv_{\alpha_2}\Bigr\}|a\in \mathrm{CI}(\alpha_1)\Bigr)P_{a,\psi}\Bigl(a\in \mathrm{CI}(\alpha_1)\Bigr) \\
			&+ P_{a,\psi}\Bigl(\bigcap_{\bar{\rho}_T\in \mathrm{CI}(\alpha_1)}\Bigl\{S_T(\bar{\rho}_T)>cv_{\alpha_2}\Bigr\}|a\notin \mathrm{CI}(\alpha_1)\Bigr)P_{a,\psi}\Bigl(a\notin \mathrm{CI}(\alpha_1)\Bigr) \\
			&\leq \alpha_2(1-\alpha_1) + \alpha_1 \leq \alpha,
		\end{align*}
		
		\item[3.] \citet{campbell2006Efficienttests} constructed the confidence interval for $\rho_T$ by inverting the Dickey-Fuller type t statistic, which is calculated by centering $\hat{\rho}_T$ by unity rather than $\bar{\rho}_T$. However, it has been known in the literature on predictive regressions that the use of the Dickey-Fuller type t ratio leads to severe size distortion \citep{phillips2014ConfidenceIntervals}. We found in our unreported simulations that our Bonferroni tests also suffer from the same problem when it is based on the Dickey-Fuller t ratio. Following the theoretical analysis by \citet{phillips2014ConfidenceIntervals}, we here construct the confidence interval for $\rho_T$ using the t statistic centered by hypothetical $\bar{\rho}_T$'s.
	\end{itemize}
\end{rem}

Although the Bonferroni test defined above is a valid test with significance level $\alpha=\alpha_1+\alpha_2$, the test's type 1 errors (dependent on $(a,\psi)$) can be quite smaller than $\alpha$, as pointed out by \citet{cavanagh1995InferenceModels} and \citet{campbell2006Efficienttests}. To make the type 1 errors close to the desired nominal level, $\tilde{\alpha}$ (say), they proposed using a pair $(\alpha_1,\alpha_2)$ with which the Bonferroni test's type 1 errors are close to the given $\tilde{\alpha}$. Theoretically, we would find such numerous pairs $(\alpha_1,\alpha_2)$ by changing both the values of $\alpha_1$ and $\alpha_2$. To simplify the searching process, we fix $\alpha_2$ and set  $\alpha_2=\tilde{\alpha}$, following \citet{campbell2006Efficienttests}. In this study, we consider the case $\tilde{\alpha}=\alpha_2=0.05$, giving the Bonferroni test with significance level 0.05. Then, for each $\psi \in (-1,1)$, we numerically find the value of $\alpha_1(\psi)$ such that under the null $P_{a,\psi}\Bigl(\bigcap_{\bar{\rho}_T\in \mathrm{CI}(\alpha_1(\psi))}\Bigl\{S_T(\bar{\rho}_T)>cv_{\alpha_2}\Bigr\}\Bigr)\leq\tilde{\alpha}=0.05$, and this probability is as close to 0.05 as possible, for all $a$ on some grid. The simulation process to find the $\alpha_1(\psi)$ values is described in Appendix A.

Table \ref{tab:significance_ci} displays the significance level $\alpha_1(\psi)$ of the confidence interval for $\rho_T$ along with corresponding intervals of $|\psi|$ values.\footnote{We assign $\alpha_1$ values to each interval of $\psi$ instead of each single $\psi$ (on some grid) for computational ease of the Bonferroni-test.} Given the good asymptotic performance by the modified Wald test, we report in Table \ref{tab:significance_ci} $\alpha_1$ values for the case of $S_T(\rho_T)$ being the modified augmented Wald test statistic. When performing the Bonferroni-Wald test, first estimate $\psi$ by its consistent estimator $\hat{\psi}(\hat{\rho}_T)$, and then select the value of $\alpha_1$ based on Table \ref{tab:significance_ci}. For example, if $|\hat{\psi}(\hat{\rho}_T)|=0.17$, $\alpha_1=0.31$ is selected. With the selected $\alpha_1$ value, one can perform the Bonferroni-Wald test following the two-step testing procedure outlined before. The finite-sample performances of the Bonferroni-Wald test are investigated through simulation in the next section.

\section{Finite-Sample Performance}
As described in Appendix A, the simulation exercise to determine the values of $\alpha_1$ is based on the asymptotic procedure. Thus, we need to verify whether the Bonferroni-Wald test we have proposed performs well in finite samples.

\subsection{Empirical size}
Following \citet{nagakura2009Testingcoefficienta}, we employ three data generating mechanism to evaluate empirical sizes: (i) $\varepsilon_t\sim \mathrm{i.i.d} \ N(0,1)$, so that $\psi=0$, (ii) $\varepsilon_t \sim \mathrm{i.i.d} \ (\chi^2(10)-10)/\sqrt{20}$, so that $\psi=0.5$, and (iii) $\varepsilon_t \sim \mathrm{i.i.d} \ (\chi^2(1)-1)/\sqrt{2}$, so that $\psi=0.756$. For each case, we set $y_0=0$. The sample size we use is $T\in\{200,500,1000\}$, and the number of replications is 5,000. The $\rho_T=\rho$ values are fixed across $T$, and we consider $\rho\in [0.7,1.01]$. The simulation results are collected in Figure \ref{fig:sizes_fs}. The general pattern is that the empirical rejection rates under $H_0$ are relatively small when $\rho<1$ and tend to be greater than the nominal level 0.05 when $\rho$ is near unity. As for the normal case (Figure \ref{fig:sizes_fs}(a)), rejection rates are stable around 0.05 over $\rho\in[0.7,1.01]$, with them approaching 0.05 as $T$ increases. As for the chi square cases (Figure \ref{fig:sizes_fs}(b) and (c)), rejection rates stay around the nominal level, but they get farther away from 0.05 when $|\psi|$ is larger, $T$ is smaller and $\rho$ is near 1. In particular, when $\varepsilon_t \sim \mathrm{i.i.d} \ (\chi^2(1)-1)/\sqrt{2}$ and $T=200$, the rejection rates can be as large as 0.09 (around $\rho=1$), although they approach 0.05 as $T$ increases. A similar tendency can be observed in the finite-sample behavior of modified LN tests proposed by \citet{nagakura2009Testingcoefficienta} according to his simulation results. One possible cause of this phenomenon would be the finite-sample bias involved in the estimation of $\psi$ when the $\psi$ value is large. One may be advised to use a more conservative Bonferroni-Wald test (with smaller $\alpha_1$ values) when the $|\psi|$ estimate is large and the sample size is not so large.

\subsection{Finite-sample power comparison}
Next, we investigate tests' ability to detect the nonzero variance in the autoregressive root. The simulation design is as follows: $T=200$, $\varepsilon_t\sim \mathrm{i.i.d} \ N(0,1)$ and $v_t \sim \mathrm{i.i.d} \ N(0,1)$ with $\mathrm{Corr}(\varepsilon_t,v_t)\in \{0,0.25,0.5,0.75\}$, $\rho\in\{1.01,1,0.98,0.95\}$, and $\omega^2 \in (0,0.01]$ (corresponding to $c^2 \in (0,28.28]$). The tests we consider here are the Bonferroni-Wald test, the infeasible modified Wald test (calculated using the true $\rho$), and one of the modified LN tests proposed by \citet{nagakura2009Testingcoefficienta}, which is denoted by $\widetilde{G}_{T,1}$ in his notation. The infeasible Wald test is taken as a benchmark, and thereby we can evaluate the power loss originating from using the confidence interval for $\rho$ to perform the Bonferroni-Wald test. \citet{nagakura2009Testingcoefficienta}'s modified LN test statistic is designed to converge in distribution under $H_0$ to the standard normal irrespective of the $\psi$ value, under the data generating mechanism with $\rho \in (-1,1]$ fixed (i.e., independent of $T$). Because \citet{nagakura2009Testingcoefficienta} did not show its asymptotic null distribution remains standard normal when $\rho>1$, we do not perform it for the case $\rho=1.01$. We also consider the test proposed recently by \citet{horvath2019Testingrandomnessa} (hereafter HT). The HT test is a randomized one, and their test statistic $\Theta_{T,R}$ (in their notation) converges in distribution to the chi square with one degree of freedom, for almost all realizations.\footnote{The HT test needs a tuning parameter, $x \in(0,0.5)$, to be performed, and they stated their test's performance is insensitive to the choice of the $x$ value and set $x=0.1$. However, in our simulation, their test's performance is somehow affected by the choice of $x$. Thus, we set $x=0.38$, to obtain simulation results similar to those of HT.} The HT test is not originally proposed under the local-to-unity specification, but it will be informative to practitioners to reveal the performance of the test under this setting.

The simulation results are shown in Figures \ref{fig:powers_fs_101} through \ref{fig:powers_fs_095}. For the case $\rho=1.01$, where all but the LN test are performed, the infeasible and Bonferroni-Wald tests have good power, and the discrepancy between their power functions is small. The latter result is due to the refinement on the construction of the confidence interval for $\rho$. In contrast, the power function of the HT test stays around the nominal level 0.05 over $\omega^2 \in (0,0.01]$, which implies it has almost no distinguishing power for these alternatives. Indeed, HT conjectured (based on some theoretical analysis) that their test would have no power when $1-\rho_T=O(T^{-1})$ and $\omega_T^2 = O(T^{-1})$, which is of larger magnitude than our local-to-zero variance $\omega_T^2=O(T^{-3/2})$.\footnote{\citet{nishi2022StochasticLocal} showed that under the STUR specification, the LN and some other tests are consistent when $\omega_T^2=O(T^{-1})$, from which they concluded that this case should be regarded as capturing stochastic ``moderate" departures from a unit root rather than local departures.} Our simulation results corroborate their statement.

Turning to the cases $\rho \leq 1$, it is noticeable that for each $\rho$, the LN test performs well for small values of $\mathrm{Corr}(\varepsilon_t,v_t)$, but Wald type tests outperform the LN test otherwise. In particular, the greater value of $\mathrm{Corr}(\varepsilon_t,v_t)$ leads to the greater dominance by the Wald type tests. The power function of HT test, again, stays around the nominal level for these cases. Overall, the Bonferroni-Wald test performs better than the LN and HT tests for moderate to large values of $\mathrm{Corr}(\varepsilon_t,v_t)$ (irrespective of the $\rho$ value) and performs almost as efficiently as its ideal, infeasible counterpart.
 
 \section{Empirical Application}
In this section, we apply the Bonferroni-Wald test along with the LN and HT tests\footnote{Based on our simulation results, we set the tuning parameter $x=0.38$ for the HT test.} to several U.S. macroeconomic and financial time series, following \citet{hill2014UnifiedIntervalb} and HT. The dataset includes CPI, real GDP, industrial production, M2, S\&P500, the 3 month Treasury bill rate and the unemployment rate. We take the logarithm of the first 5 series before detrending all the series. All data have been extracted from Federal Reserve Economic Data. The data description and testing results are displayed in Tables \ref{tab:empirical_description} and \ref{tab:empirical_results}, respectively.

Before discussing main results, it should be recalled from our simulation results that the Bonferroni-Wald test can be oversized when $T$ is not large, $\psi$ is large and $\rho$ is near unity (see Section 4). Thus, we need to check whether all the three conditions hold for our series or not. According to Table \ref{tab:empirical_results}, all the $\rho$ estimates are near unity, as expected. We also have obtained moderate estimates of $\psi$ for the CPI and S\&P500 series, but the sample sizes for them are large enough that it is unlikely the Bonferroni-Wald test is oversized for these series. As for GDP, the sample size is $T=292$ but the $\psi$ estimate is near zero, and hence we expect the Bonferroni-Wald test is not severely oversized for this series. Overall, all the three conditions for the potential oversize problem do not jointly hold. 

For GDP and T-bill rate, we have obtained consistent results from all the three tests: the null of $H_0:\omega^2=0$ is not rejected for these series. This coincidence could be viewed as evidence for nonrandomness of the autoregressive root for these series. As for CPI, industrial production and S\&P500, it is observed that the Bonferroni-Wald and LN tests reject the null while the HT test does not. This result will be attributed to the fact that the former two tests are much more powerful than the latter one. Finally, as for M2 and unemployment rate, only the Bonferroni-Wald test rejects the null. This outcome will be due to the fact that the Bonferroni-Wald test tends to be the most powerful among the three tests. 

In Table \ref{tab:empirical_results}, we also present the $\omega^2$ estimates proposed by \citet{horvath2019Testingrandomnessa} (denoted by $\hat{\omega}^2_\mathrm{HT}$). They showed this estimator is consistent under the non-local RCA models (with $\rho$ and $\omega^2$ fixed). The values of $\hat{\omega}^2_\mathrm{HT}$ seem to support our testing results. For instance, we have a negative $\hat{\omega}^2_\mathrm{HT}$ for T-bill rate, for which none of the tests reject the null. Moreover, for the other series, $\hat{\omega}^2_\mathrm{HT}$ takes values around $10^{-4}$ to $10^{-3}$, magnitudes of the coefficient randomness with which the HT test tends to be powerless under the sample size given in Table \ref{tab:empirical_results}. For example, for M2, the sample size is $T=2044$, and $\hat{\omega}^2_\mathrm{HT}$ is $3.12\times10^{-4}$, translating into $\hat{c}^2=\hat{\omega}^2_\mathrm{HT}\times T^{3/2}=28.83$ estimate of the localizing coefficient $c^2$. According to our simulation results, the HT test has almost no power against the alternative of this magnitude, hence the testing result given in Table \ref{tab:empirical_results}.

Given these findings, our empirical application illustrates the merit in choosing our Bonferroni-Wald test over existing ones.

\section{Conclusion and Discussion}
Given the results of empirical analyses conducted by earlier studies, the local-to-unity RCA models, which extend the STUR modelling, are empirically relevant. Under this setting, we can analyze the effect of the correlation between the random coefficient and disturbance on the power properties of
tests for coefficient randomness. Theoretical and simulation analyses reveal that tests proposed by earlier studies can perform poorly when the degree of the correlation is moderate to large and the coefficient randomness is local to zero, while the augmented-Wald test we have proposed performs well even in such cases. Our test is also independent of the nuisance parameter $\psi$, the correlation between the disturbance and its square, so that it is implementable without the knowledge about the value of $\psi$. To deal with the uncertainty about the mean $\rho$ of the autoregressive root, we have proposed using a confidence interval for $\rho$, leading to the Bonferroni-Wald test, where the significance level for the confidence interval is selected according to the value of the $\psi$ estimate. Embedding this selection process into the Bonferroni-Wald test helps stabilize the test's size and improve the test's power.

Several directions for future research are possible. First, from the similarity in the construction of test statistics between our model and predictive regressions, it is expected that the theory developed by numerous studies on predictive regressions can be applied to testing for coefficient randomness in local-to-unity autoregressions. For example \citet{phillips2013Predictiveregression,phillips2016Robusteconometric} proposed the use of the so-called IVX procedure for predictability testing \citep[see also][]{kostakis2015RobustEconometric}. This procedure leads to good size and power properties and also requires less computational burden for implementation than the Bonferroni approach employed by \citet{campbell2006Efficienttests}. The use of the IVX approach may facilitate testing for coefficient randomness in local-to-unity autoregressions. The analysis of tests' performance when $\rho$ is distant from unity will also be of interest. In such a case, more preferable tests might be available than the Bonferroni-Wald test, which is based on the local-to-unity asymptotics.

\bibliographystyle{standard}
\bibliography{stur_end}

\section*{Appendix A: Procedure to Determine $\alpha_1$ Values for the Bonferroni-Wald Test}
\setcounter{equation}{0}
\renewcommand{\theequation}{A.\arabic{equation}}

In this appendix, we describe the procedure to determine the $\alpha_1$ values for the confidence interval for $\rho_T$ explained in Section 3. The procedure is based on simulating asymptotic distributions with 5,000 replications. In each replication, we first generate $\{y_t\}_{t=1}^T$ by the mechanism
\begin{align}
y_t=(1+a/T)y_{t-1}+\varepsilon_t, \quad t=1,2,\ldots,T \label{eqnapp:mechanism}
\end{align}
with $y_0=0$, $T=2000$, $a\in [-300,10]$ and $\varepsilon_t \sim \mathrm{i.i.d} \ N(0,1)$. Then, with given $\alpha_1$, we conduct the two-step procedure for the Bonferroni-Wald test explained in Section 3 and calculate the frequency of $H_0$ being rejected for the $\alpha_1$.

Note that $\varepsilon_t$ used in \eqref{eqnapp:mechanism} satisfies $\psi=\mathrm{Corr}(\varepsilon_t,\varepsilon_t^2-\sigma_{\varepsilon}^2)=0$. To calculate the false rejection frequencies for other $\psi$ values, we artificially produce an environment where the Wald test statistic depends on $\psi\neq0$. Noting that under the null, 
\begin{align*}
	z_t^2(\bar{\rho}_T)=1+(\varepsilon_t^2-1)+(\bar{a}/T-a/T)^2y_{t-1}^2-2(\bar{a}/T-a/T)y_{t-1}\varepsilon_t,
\end{align*}
and $\eta_t=\varepsilon_t^2-1$ (combined with $\varepsilon_t$) determines the value of $\psi$ on which the asymptotic distribution of $W_T^*(\bar{\rho}_T)$ depends, we artificially replace $\eta_t=\varepsilon_t^2-1$ with $\eta_t^{rep} \coloneqq \sqrt{1-\psi^2}\eta_t+\psi\sqrt{2}\varepsilon_t$, obtaining
\begin{align*}
	z_t^2(\bar{\rho}_T)^{rep} &\coloneqq 1+\eta_t^{rep}+(\bar{a}/T-a/T)^2y_{t-1}^2-2(\bar{a}/T-a/T)y_{t-1}\varepsilon_t,
\end{align*}
in view of the distributional equivalence $W_\eta \stackrel{d}{=}\sqrt{1-\psi^2}W_1+\psi W_\varepsilon$ and the fact that $\mathbb{V}[\eta_t]= \sigma_\eta^2=2$ and $\mathrm{Corr}(\eta_t,\varepsilon_t)=0$. In this replacement, the newly crafted variable $\eta_t^{rep}$ takes over the role of $\eta_t=\varepsilon_t^2-1$, satisfying $\mathbb{E}[\eta_t^{rep}]=0$, $\mathbb{V}[\eta_t^{rep}]=2=\mathbb{V}[\eta_t]$ and $\mathrm{Corr}(\eta_t^{rep},\varepsilon_t)=\psi$ by construction. This replacement can be justified by the fact that under the null, the Wald test statistic asymptotically depends only on $\psi$ (and $a$) and is not dependent on any other moment. It follows that the Bonferroni-Wald test statistic using $z_t^2(\bar{\rho}_T)^{rep}$ in place of $z_t^2(\bar{\rho}_T)$ asymptotically depends on $\psi$, so that we can calculate the false rejection frequencies for any value of $\psi$.

For each $\psi$ on some grid, we calculate the false rejection rates of the Bonferroni-Wald test with $a$ moving over a grid on $[-300,10]$ and determine the value of $\alpha_1$ for the given $\psi$ such that the false rejection rate is less than or equal to 0.05 for all $a$.

\section*{Appendix B: Proofs of Results in Section 2}
\setcounter{equation}{0}
\renewcommand{\theequation}{B.\arabic{equation}}
In this appendix, we prove the theorems stated in Section 2.

\begin{lemappB}
	Consider model \eqref{model:near_unity_rca} under Assumptions \ref{asm:local} and \ref{asm:localized_corr}. Define the stochastic process $Y_T$ on $[0,1]$ by $Y_T(r)\coloneqq T^{-1/2}y_{\lfloor Tr \rfloor}, \ 0\leq r\leq 1$. Then, $Y_T\Rightarrow \sigma_{\varepsilon} J_a$ in the Skorokhod space $D[0,1]$, where $J_a$ solves $dJ_a(r) = aJ_a(r)dr + dW_\varepsilon(r)$. \label{lemappB:wc_Y}
\end{lemappB}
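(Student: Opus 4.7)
The plan is to decompose $y_t$ into a pure local-to-unity AR(1) component plus a remainder driven by the random-coefficient innovations, and show that after normalization by $T^{1/2}$ the remainder is asymptotically negligible. Specifically, define $u_t$ by the pure local-to-unity recursion $u_t = \rho_T u_{t-1}+\varepsilon_t$ with $u_0=y_0$, and set $d_t \coloneqq y_t-u_t$, so that $d_0=0$. Subtracting the two recursions yields $d_t = \rho_T d_{t-1}+\omega_T v_t y_{t-1}$, hence $d_t = \rho_T^t M_t$ where
\[
M_t \coloneqq \omega_T\sum_{s=1}^{t}\rho_T^{-s}v_s y_{s-1}.
\]
Since $(\varepsilon_s,v_s)$ is independent of $\mathcal{F}_{s-1}\coloneqq \sigma(\varepsilon_j,v_j: j\leq s-1)$ and $y_{s-1}$ is $\mathcal{F}_{s-1}$-measurable, $M_t$ is a mean-zero martingale. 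Standard local-to-unity asymptotics \citep{phillips1987Unifiedasymptotica} give $T^{-1/2}u_{\lfloor T\cdot\rfloor} \Rightarrow \sigma_\varepsilon J_a$ in $D[0,1]$ (the initial-value contribution is $\rho_T^{\lfloor Tr\rfloor}y_0/T^{1/2}=O(1)\cdot o_p(1)$ uniformly in $r$). It therefore suffices to prove $T^{-1/2}\max_{0\leq t\leq T}|d_t|=o_p(1)$.

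For the latter, I would first establish the a priori bound $\max_{t\leq T}E[y_t^2]=O(T)$. A direct calculation from $y_t = (\rho_T+\omega_T v_t)y_{t-1}+\varepsilon_t$, exploiting the independence of $(\varepsilon_t,v_t)$ from $y_{t-1}$, gives
\[
m_t\coloneqq E[y_t^2] = (\rho_T^2+\omega_T^2)m_{t-1}+2\omega_T\sigma_{\varepsilon v}E[y_{t-1}]+\sigma_\varepsilon^2,
\]
while $E[y_t]=\rho_T^t E[y_0]$ is uniformly bounded for $t\leq T$. Since $\rho_T^2+\omega_T^2 = 1+O(T^{-1})$ and $\omega_T\sigma_{\varepsilon v}=O(T^{-1})$ under Assumption~\ref{asm:localized_corr}, iterating this recursion yields $m_t=O(T)$ uniformly in $t\leq T$ (the case $y_0=o_p(T^{1/2})$ is handled by conditioning on $y_0$ and truncation, so one may treat $y_0$ as a bounded initial value without loss of generality). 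Orthogonality of martingale increments together with the uniform bound $\rho_T^{-2s}\leq e^{2|a|}$ for $s\leq T$ then gives
\[
E[M_T^2] = \omega_T^2\sum_{s=1}^{T}\rho_T^{-2s}E[y_{s-1}^2] = O(T^{-3/2})\cdot O(T)\cdot O(T) = O(T^{1/2}).
\]
Doob's maximal inequality upgrades this to $E[\max_{t\leq T}M_t^2]\leq 4E[M_T^2]=O(T^{1/2})$, whence $\max_{t\leq T}|d_t| \leq e^{|a|}\max_{t\leq T}|M_t| = O_p(T^{1/4})$, and consequently $T^{-1/2}\max_{t\leq T}|d_t| = O_p(T^{-1/4})=o_p(1)$. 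Combining with the convergence of $T^{-1/2}u_{\lfloor T\cdot\rfloor}$ yields $T^{-1/2}y_{\lfloor T\cdot\rfloor}\Rightarrow \sigma_\varepsilon J_a$ in $D[0,1]$.

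The main delicacy is the self-referential nature of the remainder: $d_t$ depends on $y_{s-1}$, which itself contains $d_{s-1}$. Working directly with the recursion for $m_t=E[y_t^2]$, rather than trying to bootstrap from a bound on $u_t$ alone, sidesteps this circularity. The specific scaling $\omega_T=c/T^{3/4}$, giving $\omega_T^2=O(T^{-3/2})$, is exactly what makes the martingale's quadratic variation of order $T^{1/2}$ and hence the remainder $o_p(T^{1/2})$ uniformly; any larger $\omega_T$ would alter the limit. A secondary, milder issue is the treatment of the random initial condition $y_0=o_p(T^{1/2})$ in the moment calculations, which as noted is absorbed by a standard conditioning argument.
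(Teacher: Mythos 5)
Your proof is correct. The paper itself gives no argument for this lemma, deferring entirely to Lemma 1(a) of \citet{nishi2022StochasticLocal}; your route --- splitting $y_t$ into the pure local-to-unity component $u_t$ and a remainder $d_t=\rho_T^t M_t$ with $M_t$ a square-integrable martingale whose maximum is $O_p(T^{1/4})$ by the moment recursion for $E[y_t^2]$ and Doob's inequality --- is the standard argument for results of this type and is essentially the approach the cited lemma takes, with the scaling $\omega_T^2=O(T^{-3/2})$ correctly identified as what makes the remainder negligible at the $T^{1/2}$ scale.
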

\noindent \begin{proof}
	The proof is essentially the same as that of Lemma 1(a) of \citet{nishi2022StochasticLocal} and hence is omitted.
\end{proof}

\begin{lemappB}
	Consider model \eqref{model:near_unity_rca} under Assumptions \ref{asm:local} and \ref{asm:localized_corr}. Then, we have
	\begin{itemize}
		\item[(a)] $\hat{\sigma}_{\varepsilon,T}^2(\rho_T) \stackrel{p}{\to}\sigma_{\varepsilon}^2$,
		\item[(b)] $\hat{\sigma}_{\eta,T}^2(\rho_T) \stackrel{p}{\to} \sigma_\eta^2$,
		\item[(c)] $\hat{\psi}_T(\rho_T)\stackrel{p}{\to}\psi$.
		\end{itemize} \label{lemappB:consistency}
\end{lemappB}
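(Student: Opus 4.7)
The strategy for all three parts is the same: decompose each sample moment around its ``oracle'' analogue built from the unobserved $\varepsilon_t$, apply the Khintchine LLN to the leading term, and show the residual is $o_p(1)$ by combining the rates $\omega_T = cT^{-3/4}$, $\sigma_{\varepsilon v} = q\sigma_\varepsilon T^{-1/4}$, and the $T^{1/2}$-scaling of $y_t$ supplied by Lemma \ref{lemappB:wc_Y}. The common algebraic setup I would establish first is $z_t(\rho_T) = \varepsilon_t + \omega_T v_t y_{t-1}$, so that $z_t^2(\rho_T) = \varepsilon_t^2 + R_t$ with $R_t \coloneqq 2\omega_T v_t y_{t-1}\varepsilon_t + \omega_T^2 v_t^2 y_{t-1}^2$.

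For (a), this gives $\hat{\sigma}_{\varepsilon,T}^2(\rho_T) - \sigma_{\varepsilon}^2 = T^{-1}\sum_{t=1}^T\eta_t + T^{-1}\sum_{t=1}^T R_t$, where the first piece is $o_p(1)$ by the LLN. The quadratic part of $R_t$ contributes $\omega_T^2 T^{-1}\sum v_t^2 y_{t-1}^2 = c^2 T^{-1/2}\cdot T^{-2}\sum v_t^2 y_{t-1}^2$, and the $T^{-2}$ average is $O_p(1)$ because $\mathbb{E}[y_{t-1}^2] = O(T)$ follows from the standard second-moment recurrence for the RCA (using independence of $v_t$ from the past). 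For the linear part, I would split $v_t\varepsilon_t = (v_t\varepsilon_t - \sigma_{\varepsilon v}) + \sigma_{\varepsilon v}$: the mean-zero piece yields a martingale-difference sum whose second moment is $O(T^2)$ by the same bound on $\mathbb{E}[y_{t-1}^2]$ together with $\mathbb{E}[(v_t\varepsilon_t)^2] < \infty$, while the constant piece contributes $\sigma_{\varepsilon v}\sum y_{t-1} = O_p(T^{-1/4})\cdot O_p(T^{3/2}) = O_p(T^{5/4})$ using $T^{-3/2}\sum y_{t-1} = O_p(1)$ from Lemma \ref{lemappB:wc_Y}. Multiplying through by $\omega_T T^{-1} = O(T^{-7/4})$ drives both contributions to zero.

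Parts (b) and (c) then fall out of the same template. For (b), I write $z_t^2(\rho_T) - \hat{\sigma}_{\varepsilon,T}^2(\rho_T) = \eta_t + R_t - (\hat{\sigma}_{\varepsilon,T}^2(\rho_T) - \sigma_{\varepsilon}^2)$, square, and average: $T^{-1}\sum \eta_t^2$ converges to $\sigma_\eta^2$ by LLN, and each of $T^{-1}\sum \eta_t R_t$, $T^{-1}\sum R_t^2$, and the contribution of the recentering correction (which is itself $o_p(1)$ by (a)) is driven to zero by the moment bounds established in (a), now applied to products with one or two extra factors of $\omega_T v_t y_{t-1}$. The worst term is $\omega_T^4 T^{-1}\sum v_t^4 y_{t-1}^4$, for which $\mathbb{E}[y_t^4]=O(T^2)$ (from a fourth-moment recurrence analogous to the second-moment one) combined with the finite moments of $v_t$ in Assumption \ref{asm:local} yields $O_p(T^{-1})$. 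For (c), the denominator $\hat{\sigma}_{\varepsilon,T}(\rho_T)\hat{\sigma}_{\eta,T}(\rho_T)$ converges to $\sigma_\varepsilon\sigma_\eta$ by (a), (b), and the continuous mapping theorem, while the numerator $T^{-1}\sum z_t(\rho_T)(z_t^2(\rho_T) - \hat{\sigma}_{\varepsilon,T}^2(\rho_T))$ expands into $T^{-1}\sum \varepsilon_t\eta_t \to \mathbb{E}[\varepsilon_t\eta_t] = \psi\sigma_\varepsilon\sigma_\eta$ plus remainder terms each carrying a factor of $\omega_T$ or $\hat{\sigma}_{\varepsilon,T}^2(\rho_T) - \sigma_\varepsilon^2$; these vanish by the same bounds.

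The main obstacle I anticipate is the bookkeeping in part (b). Each cross-product between $\eta_t$, $R_t$, and the recentering correction breaks into several sub-terms whose Cesaro averages must be controlled simultaneously, and one has to track the polynomial growth $\mathbb{E}[y_{t-1}^{2k}] = O(T^k)$ against the decay $\omega_T^k = O(T^{-3k/4})$ along the way. No individual estimate is deep, but organizing the cross terms so that every one is visibly smaller than the dominant $T^{-1}\sum \eta_t^2$ is where most of the labour goes.
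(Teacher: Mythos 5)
Your proposal is correct and follows essentially the same route as the paper: expand $z_t^2(\rho_T)$ around the oracle quantities built from $\varepsilon_t$, apply the LLN to the leading term (which for part (c) yields $\mathbb{E}[\varepsilon_t^3]=\mathbb{E}[\varepsilon_t\eta_t]$), and kill the remainders using the rates $\omega_T=cT^{-3/4}$, $\sigma_{\varepsilon v}=O(T^{-1/4})$ and the $T^{1/2}$-scaling of $y_t$. The only cosmetic difference is that the paper (which defers (a) and (b) to Lemma 1 of \citet{nishi2022StochasticLocal}) bounds the remainder sums by rewriting them as functionals of $Y_T$ and invoking the FCLT, whereas you use the moment recurrences $\mathbb{E}[y_{t-1}^{2k}]=O(T^k)$ with Markov-type bounds; both deliver the same orders.
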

\begin{proof}
	The proofs of parts (a) and (b) are identical to those of Lemma 1(b) and (c) of \citet{nishi2022StochasticLocal} and hence are omitted. To prove part (c), it suffices to show
	\begin{align*}
		T^{-1}\sum_{t=1}^{T}z_t(\rho_T)\Big\{z_t^2(\rho_T)-\hat{\sigma}_{\varepsilon,T}^2(\rho_T)\Bigr\} \stackrel{p}{\to} \mathbb{E}[\varepsilon_t^3].
	\end{align*}
	A simple calculation gives
	\begin{align*}
		&T^{-1}\sum_{t=1}^{T}z_t(\rho_T)\{z_t^2(\rho_T)-\hat{\sigma}_{\varepsilon,T}^2(\rho_T)\} \\ &= T^{-1}\sum_{t=1}^{T}(cT^{-3/4}y_{t-1}v_t + \varepsilon_t)\{c^2T^{-3/2}y_{t-1}^2v_t^2 + 2cT^{-3/4}y_{t-1}\varepsilon_tv_t + \varepsilon_t^2-\hat{\sigma}_{\varepsilon,T}^2(\rho_T)\} \\
		&=T^{-1}\sum_{t=1}^{T}\varepsilon_t^3 + A_T,
		\end{align*}
	where
	\begin{align*}
		A_T \coloneqq c^3T^{-13/4}\sum_{t=1}^{T}y_{t-1}^3v_t^3 &+ 3c^2T^{-5/2}\sum_{t=1}^{T}y_{t-1}^2\varepsilon_tv_t^2 + 3cT^{-7/4}\sum_{t=1}^{T}y_{t-1}\varepsilon_t^2v_t\\ &-\hat{\sigma}_{\varepsilon,T}^2(\rho_T)\times\Bigl\{cT^{-7/4}\sum_{t=1}^{T}y_{t-1}v_t+ T^{-1}\sum_{t=1}^{T}\varepsilon_t\Bigr\}.
	\end{align*}
	The first term of $A_T$ satisfies
	\begin{align*}
		c^3T^{-13/4}\sum_{t=1}^{T}y_{t-1}^3v_t^3 &= c^3\mathbb{E}[v_t^3]T^{-13/4}\sum_{t=1}^{T}y_{t-1}^3 + c^3T^{-13/4}\sum_{t=1}^{T}y_{t-1}^3(v_t^3- \mathbb{E}[v_t^3]) \\
		&=c^3\mathbb{E}[v_t^3]T^{-3/4}\int_{0}^{1}Y_T^3(r)dr + c^3T^{-5/4}\int_{0}^{1}Y_T^3(r)dW_{v^3-\mathbb{E}[v^3],T}(r) = O_p(T^{-3/4}),
	\end{align*}
	say, because $\{v_t^3 - \mathbb{E}[v_t^3]\}$ is i.i.d with zero mean and finite variance. Similarly, we can prove that the other terms of $A_T$ is $O_p(T^{-1/4})$. Thus
	\begin{align*}
		T^{-1}\sum_{t=1}^{T}z_t(\rho_T)\{z_t^2(\rho_T)-\hat{\sigma}_{\varepsilon,T}^2(\rho_T)\} = T^{-1}\sum_{t=1}^{T}\varepsilon_t^3 + o_p(1) \stackrel{p}{\to}\mathbb{E}[\varepsilon_t^3].
	\end{align*}
\end{proof}

For later reference, we give several results on the weak convergence of components of test statistics.

\begin{lemappB}
	Consider model \eqref{model:near_unity_rca} under Assumptions \ref{asm:local} and \ref{asm:localized_corr}. Then, we have
	\begin{itemize}
		\item[(a)] 
		\begin{align*}
			T^{-3/2}\sum_{t=1}^{T}\widetilde{y_{t-1}^2}z_t^2(\rho_T) \Rightarrow \sigma_\eta\sigma_{\varepsilon}^2\int_{0}^{1}\widetilde{J_{a,2}}(r)dW_\eta(r) + c^2\sigma_{\varepsilon}^4\int_{0}^{1}(\widetilde{J_{a,2}})^2(r)dr + 2c\sigma_{\varepsilon}^4q\int_{0}^{1}\widetilde{J_{a,1}}(r)\widetilde{J_{a,2}}(r)dr,
		\end{align*}
		
		\item[(b)] \begin{align*}
			T^{-1}\sum_{t=1}^{T}\widetilde{y_{t-1}}z_t^2(\rho_T) \Rightarrow \sigma_\eta\sigma_{\varepsilon}\int_{0}^{1}\widetilde{J_{a,1}}(r)dW_\eta(r) + c^2\sigma_{\varepsilon}^3\int_{0}^{1}\widetilde{J_{a,1}}(r)\widetilde{J_{a,2}}(r)dr + 2c\sigma_{\varepsilon}^3q\int_{0}^{1}(\widetilde{J_{a,1}})^2(r)dr.
		\end{align*}
	
		\item[(c)] $\hat{\sigma}_{\widetilde{\xi^*}}^2(\rho_T)\stackrel{p}{\to}\sigma_\eta^2$,
		
		\item[(d)] $\hat{\sigma}_{\widetilde{\xi^{**}}}^2(\rho_T)\stackrel{p}{\to}\sigma_\eta^2$,
	\end{itemize}
	where $\hat{\sigma}_{\widetilde{\xi^*}}^2(\rho_T)$ and $\hat{\sigma}_{\widetilde{\xi^{**}}}^2(\rho_T)$ are the OLS variance estimators of \eqref{model:linearized_augmented_matrix} and \eqref{model:linearized_augmented_matrix_modified}, respectively. \label{lemappB:wc_components}
\end{lemappB}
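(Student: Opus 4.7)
The plan is to treat parts (a) and (b) by a common decomposition of $z_t^2(\rho_T)$, and then to obtain (c) and (d) from the OLS sum-of-squares identity together with an order-counting argument.

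For part (a), expand
\[
z_t^2(\rho_T) = \sigma_\varepsilon^2 + \omega_T^2 v_t^2 y_{t-1}^2 + 2\omega_T v_t y_{t-1}\varepsilon_t + \eta_t,
\]
where $\omega_T = c/T^{3/4}$ and, under Assumption \ref{asm:localized_corr}, $\sigma_{\varepsilon v} = \sigma_\varepsilon q/T^{1/4}$. Multiplying by $\widetilde{y_{t-1}^2}$ and scaling by $T^{-3/2}$ produces four pieces. The $\sigma_\varepsilon^2$ piece vanishes because $\sum_t \widetilde{y_{t-1}^2} = 0$. The $\eta_t$ piece equals $T^{-3/2}\sum_t \widetilde{y_{t-1}^2}\eta_t$; using Lemma \ref{lemappB:wc_Y}, the weight satisfies $T^{-1}\widetilde{y_{\lfloor Tr \rfloor}^2} \Rightarrow \sigma_\varepsilon^2\widetilde{J_{a,2}}(r)$, and standard convergence of stochastic integrals of adapted cadlag integrands against martingale partial-sum processes delivers $\sigma_\eta\sigma_\varepsilon^2\int_0^1 \widetilde{J_{a,2}}(r)\,dW_\eta(r)$. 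The volatility piece is split via $\mathbb{E}[v_t^2]=1$ as $\omega_T^2 y_{t-1}^2 + \omega_T^2 y_{t-1}^2(v_t^2-1)$; after scaling, the mean part converges to $c^2\sigma_\varepsilon^4\int_0^1 (\widetilde{J_{a,2}})^2(r)\,dr$ (the centering on $J_a^2$ is automatic because the weight is already demeaned), and the martingale-difference residual is $o_p(1)$ by a second-moment bound. Similarly, $2\omega_T v_t y_{t-1}\varepsilon_t$ is decomposed as $2\omega_T y_{t-1}\sigma_{\varepsilon v} + 2\omega_T y_{t-1}(\varepsilon_t v_t - \sigma_{\varepsilon v})$; using $2\omega_T\sigma_{\varepsilon v} = 2cq\sigma_\varepsilon/T$, the deterministic part contributes $2c\sigma_\varepsilon^4 q\int_0^1 \widetilde{J_{a,1}}(r)\widetilde{J_{a,2}}(r)\,dr$ and the martingale remainder is again negligible by moment counting. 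Part (b) follows by the same decomposition with $\widetilde{y_{t-1}}$ replacing $\widetilde{y_{t-1}^2}$ and scaling $T^{-1}$ replacing $T^{-3/2}$.

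For parts (c) and (d), I would use the OLS identity
\[
\hat{\sigma}_{\widetilde{\xi^*}}^2(\rho_T) = T^{-1}\widetilde{Z_2}(\rho_T)'\widetilde{Z_2}(\rho_T) - \hat{\theta}_T(\rho_T)'\bigl(T^{-1}\widetilde{X}'\widetilde{X}\bigr)\hat{\theta}_T(\rho_T).
\]
First, expand $\widetilde{z_t^2}(\rho_T)^2$ exactly as in the proof of (a); every squared or cross term other than $T^{-1}\sum_t \eta_t^2$ carries a positive power of $T^{-1}$ once $y_{t-1}=O_p(T^{1/2})$, $\omega_T=O(T^{-3/4})$, and $\hat{\sigma}_{\varepsilon,T}^2-\sigma_\varepsilon^2 = o_p(1)$ (Lemma \ref{lemappB:consistency}(a)) are invoked, so $T^{-1}\widetilde{Z_2}'\widetilde{Z_2} \stackrel{p}{\to} \sigma_\eta^2$. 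Second, introduce the scaling $D_T = \mathrm{diag}(T, T^{3/2})$; Lemma \ref{lemappB:wc_Y} and continuous mapping give $D_T^{-1}\widetilde{X}'\widetilde{X}D_T^{-1} \Rightarrow \Sigma_J$, a $2\times 2$ matrix of integrals of products of $\widetilde{J_{a,1}}$ and $\widetilde{J_{a,2}}$, almost surely positive definite. Parts (a) and (b) imply $D_T^{-1}\widetilde{X}'\widetilde{Z_2} = O_p(1)$, hence $D_T\hat{\theta}_T = O_p(1)$ and $\hat{\theta}_T'(T^{-1}\widetilde{X}'\widetilde{X})\hat{\theta}_T = O_p(T^{-1}) = o_p(1)$, establishing (c). Part (d) follows by repeating this argument for the modified response $z_t^{2*}(\rho_T)$: Lemma \ref{lemappB:consistency}(c) permits replacing the hatted quantities in \eqref{modification} by their probability limits, after which a direct computation shows the asymptotic variance of $z_t^{2*}(\rho_T)$ equals $\sigma_\eta^2$, since the $1-\psi^2$ denominator exactly cancels the variance reduction produced by subtracting $\sigma_\eta\psi\varepsilon_t/\sigma_\varepsilon$ from $\eta_t$.

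The main obstacle is the rigorous justification of the stochastic-integral convergence used in the first step of (a): the integrand $T^{-1}\widetilde{y_{t-1}^2}$ is a continuous functional of a near unit root process, and the integrator is driven by $\eta_t$, which is correlated with the innovation $\varepsilon_t$ of $y_t$. One must therefore invoke a joint FCLT for $(T^{-1/2}y_{\lfloor T\cdot\rfloor}, W_{\eta,T})$ together with a Kurtz--Protter type convergence-of-stochastic-integrals theorem, rather than treating the two pieces separately. The accompanying order-counting must also be done carefully: the two localizations $\omega_T = c/T^{3/4}$ and $\sigma_{\varepsilon v}/\sigma_\varepsilon = q/T^{1/4}$ are calibrated so that the volatility and cross-product contributions survive at exactly the target scale $T^{-3/2}$ in (a) (and $T^{-1}$ in (b)), so a miscount at any stage would either discard a drift term or manufacture a spurious divergent one.
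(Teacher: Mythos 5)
Your proposal is correct and follows essentially the same route as the paper's proof: the same decomposition of $z_t^2(\rho_T)$ into the $\eta_t$, volatility, and cross-product pieces with the splits $v_t^2=1+(v_t^2-1)$ and $\varepsilon_tv_t=\sigma_{\varepsilon v}+(\varepsilon_tv_t-\sigma_{\varepsilon v})$, a convergence-of-stochastic-integrals theorem for the $\int\widetilde{J_{a,2}}\,dW_\eta$ term (the paper cites Hansen's 1992 result where you cite Kurtz--Protter), the same $O_p(T^{-1})$ order count for the projection term in (c), and the same $(1-\psi^2)$ cancellation using $\mathbb{E}[\varepsilon_t^3]=\psi\sigma_\varepsilon\sigma_\eta$ in (d). No substantive differences.
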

\noindent \begin{proof}
	For part (a), a straightforward calculation gives
	\begin{align*}
		T^{-3/2}\sum_{t=1}^{T}\widetilde{y_{t-1}^2}z_t^2(\rho_T) = B_{1,T}+B_{2,T}+B_{3,T},
	\end{align*}
	where
	\begin{align*}
		B_{1,T} &\coloneqq T^{-3/2}\sum_{t=1}^{T}\Bigl(y_{t-1}^2-T^{-1}\sum_{t=1}^{T}y_{t-1}^2\Bigr)(\varepsilon_t^2-\sigma_{\varepsilon}^2), \\
		B_{2,T} &\coloneqq c^2T^{-3}\sum_{t=1}^{T}\Bigl(y_{t-1}^2-T^{-1}\sum_{t=1}^{T}y_{t-1}^2\Bigr)y_{t-1}^2v_t^2, \\
		\intertext{and}
		B_{3,T} &\coloneqq 2cT^{-9/4}\sum_{t=1}^{T}\Bigl(y_{t-1}^2-T^{-1}\sum_{t=1}^{T}y_{t-1}^2\Bigr)y_{t-1}\varepsilon_tv_t.
	\end{align*}
	It is straightforward to show $B_{1,T} \Rightarrow \sigma_\eta \sigma_{\varepsilon}^2\int_{0}^{1}\widetilde{J_{a,2}}(r)dW_\eta(r)$, using Lemma \ref{lemappB:wc_Y} and Theorem 2.1 of \citet{hansen1992Convergencestochastica}. As for $B_{2,T}$, we have
	\begin{align*}
		B_{2,T}&= c^2T^{-3}\sum_{t=1}^{T}\Bigl(y_{t-1}^2 - T^{-1}\sum_{t=1}^{T}y_{t-1}^2\Bigr)^2 + c^2T^{-3}\sum_{t=1}^{T}\Bigl(y_{t-1}^2-T^{-1}\sum_{t=1}^{T}y_{t-1}^2\Bigr)y_{t-1}^2(v_t^2-1) \\
		&=c^2\int_{0}^{1}\Bigl(Y_T^2(r)-\int_{0}^{1}Y_T^2(s)ds\Bigr)^2dr + c^2T^{-1/2}\int_{0}^{1}\Bigl(Y_T^2(r)-\int_{0}^{1}Y_T^2(s)ds\Bigr)Y_T^2(r)dW_{v^2-1,T} \\ &\Rightarrow c^2\sigma_{\varepsilon}^4\int_{0}^{1}(\widetilde{J_{a,2}})^2(r)dr,
	\end{align*}
	where the last convergence follows from Lemma \ref{lemappB:wc_Y} and the continuous mapping theorem (CMT). By a similar argument, we obtain
	\begin{align*}
		B_{3,T}&=2c\sigma_{\varepsilon}qT^{-5/2}\sum_{t=1}^{T}\Bigl(y_{t-1}^2 -T^{-1}\sum_{t=1}^{T}y_{t-1}^2\Bigr)\Bigl(y_{t-1}-T^{-1}\sum_{t=1}^{T}y_{t-1}\Bigr) \\ &+2cT^{-9/4}\sum_{t=1}^{T}\Bigl(y_{t-1}^2 -T^{-1}\sum_{t=1}^{T}y_{t-1}^2\Bigr)y_{t-1}(\varepsilon_t v_t-\sigma_{\varepsilon v}) \\
		& \Rightarrow 2c\sigma_{\varepsilon}^4q\int_{0}^{1}\widetilde{J_{a,2}}(r)\widetilde{J_{a,1}}(r)dr.
	\end{align*}
	Therefore, we arrive at
	\begin{align*}
		T^{-3/2}\sum_{t=1}^{T}\widetilde{y_{t-1}^2}z_t^2(\rho_T) \Rightarrow \sigma_\eta \sigma_{\varepsilon}^2\int_{0}^{1}\widetilde{J_{a,2}}(r)dW_\eta(r)+c^2\sigma_{\varepsilon}^4\int_{0}^{1}(\widetilde{J_{a,2}})^2(r)dr+2c\sigma_{\varepsilon}^4q\int_{0}^{1}\widetilde{J_{a,2}}(r)\widetilde{J_{a,1}}(r)dr,
	\end{align*}
	as desired.
	
	Part (b) can be proven in a similar fashion. Write $T^{-1}\sum_{t=1}^{T}\widetilde{y_{t-1}}z_t^2(\rho_T)$ as
	\begin{align*}
		T^{-1}\sum_{t=1}^{T}\widetilde{y_{t-1}}z_t^2(\rho_T)=C_{1,T}+C_{2,T}+C_{3,T},
	\end{align*}
	where $C_{1,T} \coloneqq T^{-1}\sum_{t=1}^{T}\widetilde{y_{t-1}}(\varepsilon_t^2-\sigma_{\varepsilon}^2)$, $C_{2,T} \coloneqq c^2T^{-\frac{5}{2}}\sum_{t=1}^{T}\widetilde{y_{t-1}}y_{t-1}^2v_t^2$, and $C_{3,T} \coloneqq 2cT^{-\frac{7}{4}}\sum_{t=1}^{T}\widetilde{y_{t-1}}y_{t-1}\varepsilon_t v_t$. Then, it is straightforward to show
	\begin{align*}
		C_{1,T} &\Rightarrow \sigma_\eta \sigma_{\varepsilon} \int_{0}^{1}\widetilde{J_{a,1}}(r)dW_\eta(r), \\
		C_{2,T} &\Rightarrow c^2\sigma_{\varepsilon}^3\int_{0}^{1}\widetilde{J_{a,1}}(r)\widetilde{J_{a,2}}(r)dr, \\
		\intertext{and}
		C_{3,T} &\Rightarrow2c\sigma_{\varepsilon}^3q\int_{0}^{1}(\widetilde{J_{a,1}})^2(r)dr.
	\end{align*}
	Combining the above results completes the proof of part (b).
	
	To prove part (c), we define $M\coloneqq I_T-\widetilde{X}(\widetilde{X}'\widetilde{X})^{-1}\widetilde{X}$ and write $\hat{\sigma}_{\widetilde{\xi^*}}^2(\rho_T)$ as 
	\begin{align}
		\hat{\sigma}_{\widetilde{\xi^*}}^2(\rho_T) &= T^{-1}\widetilde{\Xi^*}'M\widetilde{\Xi^*} \notag \\ &=T^{-1}\sum_{t=1}^{T}(\widetilde{\xi^*_t})^2 - T^{-1}\begin{pmatrix} \sum_{t=1}^{T}\widetilde{y_{t-1}}\xi_t^* & \sum_{t=1}^{T}\widetilde{y_{t-1}^2}\xi_t^* \\
		\end{pmatrix} \notag \\
		&\times \begin{pmatrix}
			\sum_{t=1}^{T}(\widetilde{y_{t-1}})^2 & \sum_{t=1}^{T}\widetilde{y_{t-1}}\widetilde{y_{t-1}^2} \\
			\sum_{t=1}^{T}\widetilde{y_{t-1}^2}\widetilde{y_{t-1}} & \sum_{t=1}^{T}(\widetilde{y_{t-1}^2})^2 \\ 
		\end{pmatrix}^{-1} \begin{pmatrix} \sum_{t=1}^{T}\widetilde{y_{t-1}}\xi_t^* \\ \sum_{t=1}^{T}\widetilde{y_{t-1}^2}\xi_t^* \\
	\end{pmatrix}. \label{eqn:OLSvariance_expand}
	\end{align} 
	The first term of \eqref{eqn:OLSvariance_expand} becomes
	\begin{align*}
		T^{-1}\sum_{t=1}^{T}(\widetilde{\xi_t^*})^2 &= T^{-1}\sum_{t=1}^{T}(\xi_t^*)^2 - \Bigl(T^{-1}\sum_{t=1}^{T}\xi_t^*\Bigr)^2 \\ &=T^{-1}\sum_{t=1}^{T}\{c^2T^{-3/2}y_{t-1}^2(v_t^2-1)+2cT^{-3/4}y_{t-1}(\varepsilon_tv_t-\sigma_{\varepsilon v})+(\varepsilon_t^2-\sigma_{\varepsilon}^2)\}^2 \\
		&-\Bigl[T^{-1}\sum_{t=1}^{T}\{c^2T^{-3/2}y_{t-1}^2(v_t^2-1)+2cT^{-3/4}y_{t-1}(\varepsilon_tv_t-\sigma_{\varepsilon v})+(\varepsilon_t^2-\sigma_{\varepsilon}^2)\}\Bigr]^2 \\
		&=T^{-1}\sum_{t=1}^{T}(\varepsilon_t^2-\sigma_{\varepsilon}^2)^2 + o_p(1) \\& \stackrel{p}{\to}\sigma_\eta^2.
	\end{align*}
	The second term of \eqref{eqn:OLSvariance_expand} satisfies
	\begin{align}
		&T^{-1}\begin{pmatrix} \sum_{t=1}^{T}\widetilde{y_{t-1}}\xi_t^* & \sum_{t=1}^{T}\widetilde{y_{t-1}^2}\xi_t^* \\
		\end{pmatrix} \begin{pmatrix}
			\sum_{t=1}^{T}(\widetilde{y_{t-1}})^2 & \sum_{t=1}^{T}\widetilde{y_{t-1}}\widetilde{y_{t-1}^2} \\
			\sum_{t=1}^{T}\widetilde{y_{t-1}^2}\widetilde{y_{t-1}} & \sum_{t=1}^{T}(\widetilde{y_{t-1}^2})^2 \\ 
		\end{pmatrix}^{-1} \begin{pmatrix} \sum_{t=1}^{T}\widetilde{y_{t-1}}\xi_t^* \\ \sum_{t=1}^{T}\widetilde{y_{t-1}^2}\xi_t^* \\
		\end{pmatrix} \notag \\
		&=T^{-1}\begin{pmatrix} T^{-1}\sum_{t=1}^{T}\widetilde{y_{t-1}}(\varepsilon_t^2-\sigma_{\varepsilon}^2) + o_p(1) & T^{-3/2}\sum_{t=1}^{T}\widetilde{y_{t-1}^2}(\varepsilon_t^2-\sigma_{\varepsilon}^2)+o_p(1) \\
		\end{pmatrix} \notag \\ &\times \begin{pmatrix}
			T^{-2}\sum_{t=1}^{T}(\widetilde{y_{t-1}})^2 & T^{-5/2}\sum_{t=1}^{T}\widetilde{y_{t-1}}\widetilde{y_{t-1}^2} \\
			T^{-5/2}\sum_{t=1}^{T}\widetilde{y_{t-1}^2}\widetilde{y_{t-1}} & T^{-3}\sum_{t=1}^{T}(\widetilde{y_{t-1}^2})^2 \\ 
		\end{pmatrix}^{-1} \begin{pmatrix} T^{-1}\sum_{t=1}^{T}\widetilde{y_{t-1}}(\varepsilon_t^2-\sigma_{\varepsilon}^2) + o_p(1) \\ T^{-3/2}\sum_{t=1}^{T}\widetilde{y_{t-1}^2}(\varepsilon_t^2-\sigma_{\varepsilon}^2)+o_p(1) \\
	\end{pmatrix} \notag \\
	&=O_p(T^{-1}). \label{eqn:OLSvariance_negligible}
	\end{align}
	Hence, we obtain $\hat{\sigma}_{\widetilde{\xi^*}}^2(\rho_T) \stackrel{p}{\to}\sigma_\eta^2$, as desired.
	
	To prove part (d), let $\widetilde{Z_1} \coloneqq (\widetilde{z_1}(\rho_T), \widetilde{z_2}(\rho_T),\ldots,\widetilde{z_T}(\rho_T))'$. Then, $\widetilde{Z_2^*}(\rho_T)$ is expressed as
	\begin{align*}
		\widetilde{Z_2^*}(\rho_T) = \frac{1}{\sqrt{1-\hat{\psi}_T^2(\rho_T)}}\Bigl\{\widetilde{Z_2}(\rho_T)-\frac{\hat{\sigma}_{\eta,T}(\rho_T)\hat{\psi}_T(\rho_T)}{\hat{\sigma}_{\varepsilon,T}(\rho_T)}\widetilde{Z_1}(\rho_T)\Bigr\},
	\end{align*}
	which yields
	\begin{align}
		\hat{\sigma}_{\widetilde{\xi^{**}}}^2=T^{-1}\widetilde{Z_2^*}(\rho_T)'M\widetilde{Z_2^*}(\rho_T) =\frac{1}{1-\hat{\psi}_T^2(\rho_T)}\Bigl\{D_{1,T} - 2\frac{\hat{\sigma}_{\eta,T}(\rho_T)\hat{\psi}_T(\rho_T)}{\hat{\sigma}_{\varepsilon,T}(\rho_T)}D_{2,T} + \frac{\hat{\sigma}_{\eta,T}^2(\rho_T)\hat{\psi}_T^2(\rho_T)}{\hat{\sigma}_{\varepsilon,T}^2(\rho_T)}D_{3,T}\Bigr\}, \label{eqn:OLSvariance_mod_expand}
	\end{align}
	where $D_{1,T} \coloneqq T^{-1}\widetilde{Z_2}(\rho_T)'M\widetilde{Z_2}(\rho_T)$, $D_{2,T} \coloneqq T^{-1}\widetilde{Z_2}(\rho_T)'M\widetilde{Z_1}(\rho_T)$, and $D_{3,T} \coloneqq T^{-1}\widetilde{Z_1}(\rho_T)'M\widetilde{Z_1}(\rho_T)$. Since $D_{1,T}$ is $\hat{\sigma}_{\widetilde{\xi^*}}^2(\rho_T)$, we have already proven in part (c) that
	\begin{align}
		D_{1,T} \stackrel{p}{\to} \sigma_\eta^2. \label{eqn:D1}
	\end{align}
	In view of equation \eqref{model:linearized_augmented_matrix}, $D_{2,T}$ becomes
	\begin{align*}
		D_{2,T} &=T^{-1}\widetilde{\Xi^*}(\rho_T)'M\widetilde{Z_1}(\rho_T) \\
		&=T^{-1}\sum_{t=1}^{T}\xi_t^*\widetilde{z_t}(\rho_T) - T^{-1}\begin{pmatrix}
			\sum_{t=1}^{T}\widetilde{y_{t-1}}\xi_t^* & \sum_{t=1}^{T}\widetilde{y_{t-1}^2}\xi_t^* \\
		\end{pmatrix} \\
		& \ \ \ \times \begin{pmatrix}
			\sum_{t=1}^{T}(\widetilde{y_{t-1}})^2 & \sum_{t=1}^{T}\widetilde{y_{t-1}}\widetilde{y_{t-1}^2} \\
			\sum_{t=1}^{T}\widetilde{y_{t-1}^2}\widetilde{y_{t-1}} & \sum_{t=1}^{T}(\widetilde{y_{t-1}^2})^2
		\end{pmatrix}^{-1} \begin{pmatrix}
		\sum_{t=1}^{T}\widetilde{y_{t-1}}z_t(\rho_T) \\ \sum_{t=1}^{T}\widetilde{y_{t-1}^2}z_t(\rho_T) \\
	\end{pmatrix}.
	\end{align*}
	As for the first term of $D_{2,T}$, we have
	\begin{align*}
		T^{-1}\sum_{t=1}^{T}\xi_t^*\widetilde{z_t}(\rho_T) &= T^{-1}\sum_{t=1}^{T}\{c^2T^{-3/2}y_{t-1}^2(v_t^2-1) + 2cT^{-3/4}y_{t-1}(\varepsilon_tv_t-\sigma_{\varepsilon v})+(\varepsilon_t^2-\sigma_{\varepsilon}^2)\} \\ & \ \ \ \ \ \ \times \Bigl(z_t(\rho_T)-T^{-1}\sum_{t=1}^{T}z_t(\rho_T)\Bigr) \\
		&=T^{-1}\sum_{t=1}^{T}(\varepsilon_t^2-\sigma_{\varepsilon}^2)\varepsilon_t + o_p(1) \stackrel{p}{\to} \mathbb{E}[\varepsilon_t^3].
	\end{align*}
	We can also show that the second term of $D_{2,T}$ is $O_p(T^{-1})$ in the same way as we did in \eqref{eqn:OLSvariance_negligible}. Thus, we get
	\begin{align}
		D_{2,T} \stackrel{p}{\to} \mathbb{E}[\varepsilon_t^3]. \label{eqn:D2}
	\end{align}
	Lastly, $D_{3,T}$ becomes
	\begin{align}
		D_{3,T}&=T^{-1}\widetilde{Z_1}(\rho_T)'M\widetilde{Z_1}(\rho_T) \notag \\
		&=T^{-1}\sum_{t=1}^{T}\widetilde{z_t}^2(\rho_T) - T^{-1}\begin{pmatrix}
			\sum_{t=1}^{T}\widetilde{y_{t-1}}z_t(\rho_T) & \sum_{t=1}^{T}\widetilde{y_{t-1}^2}z_t(\rho_T) \\
		\end{pmatrix}' \notag \\ & \ \ \ \ \ \ \ \ \ \ \ \times \begin{pmatrix}
		\sum_{t=1}^{T}(\widetilde{y_{t-1}})^2 & \sum_{t=1}^{T}\widetilde{y_{t-1}}\widetilde{y_{t-1}^2} \\
		\sum_{t=1}^{T}\widetilde{y_{t-1}^2}\widetilde{y_{t-1}} & \sum_{t=1}^{T}(\widetilde{y_{t-1}^2})^2 \\
	\end{pmatrix}^{-1} \begin{pmatrix}
	\sum_{t=1}^{T}\widetilde{y_{t-1}}z_t(\rho_T) \\ \sum_{t=1}^{T}\widetilde{y_{t-1}^2}z_t(\rho_T) \\
	\end{pmatrix} \notag \\
	&=\hat{\sigma}_{\varepsilon,T}^2(\rho_T) - \Bigl(T^{-1}\sum_{t=1}^{T}z_t(\rho_T)\Bigr)^2 + O_p(T^{-1}) \stackrel{p}{\to} \sigma_{\varepsilon}^2. \label{eqn:D3}
	\end{align}
	Substituting \eqref{eqn:D1} through \eqref{eqn:D3} into \eqref{eqn:OLSvariance_mod_expand} and applying Lemma \ref{lemappB:consistency}, we deduce
	\begin{align*}
		\hat{\sigma}_{\widetilde{\xi^{**}}}^2 &\stackrel{p}{\to}\frac{1}{1-\psi^2}\Bigl\{\sigma_\eta^2 - 2\frac{\sigma_\eta\psi}{\sigma_{\varepsilon}}\mathbb{E}[\varepsilon_t^3] + \frac{\sigma_\eta^2\psi^2}{\sigma_{\varepsilon}^2}\sigma_{\varepsilon}^2\Bigr\} \\
		&=\frac{1}{1-\psi^2}(\sigma_\eta^2 - 2\sigma_\eta^2\psi^2 + \sigma_\eta^2\psi^2) = \sigma_\eta^2.
	\end{align*}
\end{proof}

\noindent \begin{proof}[Proof of Theorem \ref{thm:ln_dist}]
	First, note that
	\begin{align*}
		\mathrm{LN}_T(\rho_T)=\frac{T^{-3/2}\sum_{t=1}^{T}\widetilde{y_{t-1}^2}z_t^2(\rho_T)}{\hat{\sigma}_{\eta,T}(\rho_T)\Bigl\{T^{-3}\sum_{t=1}^{T}(\widetilde{y_{t-1}^2})^2\Bigr\}^{1/2}}.
	\end{align*}
	Then, using Lemmas \ref{lemappB:consistency} and \ref{lemappB:wc_components} and the CMT, we deduce
	\begin{align*}
		\mathrm{LN}_T(\rho_T)&\Rightarrow \frac{\sigma_\eta\sigma_{\varepsilon}^2\int_{0}^{1}\widetilde{J_{a,2}}(r)dW_\eta(r) + c^2\sigma_{\varepsilon}^4\int_{0}^{1}(\widetilde{J_{a,2}})^2(r)dr + 2c\sigma_{\varepsilon}^4q\int_{0}^{1}\widetilde{J_{a,1}}(r)\widetilde{J_{a,2}}(r)dr}{\sigma_\eta\Bigl\{\sigma_{\varepsilon}^4\int_{0}^{1}(\widetilde{J_{a,2}})^2(r)dr\Bigr\}^{1/2}} \\
		&=\frac{\int_{0}^{1}\widetilde{J_{a,2}}(r)dW_\eta(r)}{\bigl\{\int_{0}^{1}\bigl(\widetilde{J_{a,2}}\bigr)^2(r)dr\bigr\}^{1/2}} +  \frac{\sigma_\varepsilon^2}{\sigma_\eta}\Biggl[\frac{c^2\int_{0}^{1}\bigl(\widetilde{J_{a,2}}\bigr)^2(r)dr + 2cq\int_{0}^{1}\widetilde{J_{a,2}}(r)\widetilde{J_{a,1}}(r)dr}{\bigl\{\int_{0}^{1}\bigl(\widetilde{J_{a,2}}\bigr)^2(r)dr\bigr\}^{1/2}}\Biggr].
	\end{align*}
\end{proof}

\noindent \begin{proof}[Proof of Theorem \ref{thm:t_wald_dist}]
	First, by Lemma \ref{lemappB:wc_components}(c) and the CMT, the denominator of $t_{\hat{\omega}_T^2}(\rho_T)$ divided by $T^{3/2}$ becomes
	\begin{align*}
		\hat{\sigma}_{\widetilde{\xi^*}}(\rho_T)T^{-3/2}(\widetilde{X_2}'M_1\widetilde{X_2})^{1/2} &=\hat{\sigma}_{\widetilde{\xi^*}}(\rho_T) \{T^{-3}(M_1\widetilde{X_2})'(M_1\widetilde{X_2})\}^{1/2} \\
		&= \hat{\sigma}_{\widetilde{\xi^*}}(\rho_T)\biggl\{T^{-3}\sum_{t=1}^{T}\Bigl(\widetilde{y_{t-1}^2}-\frac{\sum_{t=1}^{T}\widetilde{y_{t-1}}\widetilde{y_{t-1}^2}}{\sum_{t=1}^{T}(\widetilde{y_{t-1}})^2}\widetilde{y_{t-1}}\Bigr)^2\biggr\}^{1/2} \\
		&=\hat{\sigma}_{\widetilde{\xi^*}}(\rho_T)\biggl\{\int_{0}^{1}\Bigl(\widetilde{Y_{2,T}}(r)-\frac{\int_{0}^{1}\widetilde{Y_{1,T}}(s)\widetilde{Y_{2,T}}(s)ds}{\int_{0}^{1}(\widetilde{Y_{1,T}})^2(s)ds}\widetilde{Y_{1,T}}(r)\Bigr)^2\biggr\}^{1/2} \\
		&\Rightarrow \sigma_\eta \biggl\{\sigma_{\varepsilon}^4\int_{0}^{1}\Bigl(\widetilde{J_{a,2}}(r)-\frac{\int_{0}^{1}\widetilde{J_{a,1}}(s)\widetilde{J_{a,2}}(s)ds}{\int_{0}^{1}(\widetilde{J_{a,1}})^2(s)ds}\widetilde{J_{a,1}}(r)\Bigr)^2dr\biggr\}^{1/2} \\
		&= \sigma_\eta \sigma_{\varepsilon}^2\Bigl[\int_{0}^{1}Q_a^2(r)dr\Bigr]^{1/2},
	\end{align*}
	where $\widetilde{Y_{1,T}}(r) \coloneqq Y_T(r)-\int_{0}^{1}Y_T(s)ds$ and $\widetilde{Y_{2,T}}(r) \coloneqq Y_T^2(r) - \int_{0}^{1}Y_T^2(s)ds$. Next, applying Lemma \ref{lemappB:wc_components}, the numerator of $t_{\hat{\omega}_T^2}(\rho_T)$ divided by $T^{3/2}$ is seen to satisfy
	\begin{align*}
		T^{-3/2}\widetilde{X_2}'M_1\widetilde{Z_2}(\rho_T) &= T^{-3/2}\sum_{t=1}^{T}\widetilde{y_{t-1}^2}z_t^2(\rho_T) - \frac{T^{-5/2}\sum_{t=1}^{T}\widetilde{y_{t-1}}\widetilde{y_{t-1}^2}T^{-1}\sum_{t=1}^{T}\widetilde{y_{t-1}}z_t^2(\rho_T)}{T^{-2}\sum_{t=1}^{T}(\widetilde{y_{t-1}})^2} \\
		&\Rightarrow \sigma_\eta\sigma_{\varepsilon}^2\int_{0}^{1}\widetilde{J_{a,2}}(r)dW_\eta(r) + c^2\sigma_{\varepsilon}^4\int_{0}^{1}(\widetilde{J_{a,2}})^2(r)dr + 2c\sigma_{\varepsilon}^4q\int_{0}^{1}\widetilde{J_{a,1}}(r)\widetilde{J_{a,2}}(r)dr \\
		&- \sigma_{\varepsilon}^3\int_{0}^{1}\widetilde{J_{a,1}}(r)\widetilde{J_{a,2}}(r)dr \\ &\times \frac{\sigma_\eta\sigma_{\varepsilon}\int_{0}^{1}\widetilde{J_{a,1}}(r)dW_\eta(r) + c^2\sigma_{\varepsilon}^3\int_{0}^{1}\widetilde{J_{a,1}}(r)\widetilde{J_{a,2}}(r)dr + 2c\sigma_{\varepsilon}^3q\int_{0}^{1}(\widetilde{J_{a,1}})^2(r)dr}{\sigma_{\varepsilon}^2\int_{0}^{1}(\widetilde{J_{a,1}})^2(r)dr} \\
		&=\sigma_\eta \sigma_{\varepsilon}^2 \int_{0}^{1}Q_a(r)dW_\eta(r)+c^2\sigma_{\varepsilon}^4\int_{0}^{1}Q_a^2(r)dr.
	\end{align*}
	Combining the above results gives
	\begin{align*}
		t_{\hat{\omega}_T^2}(\rho_T) \Rightarrow \frac{\int_{0}^{1}Q_a(r)dW_\eta(r)}{\bigl[\int_{0}^{1}Q_a^2(r)dr\bigr]^{1/2}} + \frac{c^2\sigma_\varepsilon^2}{\sigma_\eta}\Biggl[\int_{0}^{1}Q_a^2(r)dr\Biggr]^{1/2}.
	\end{align*}

	To derive the asymptotic distribution of $W_T(\rho_T)$, note that
	\begin{align*}
		W_T(\rho_T) &=\hat{\sigma}_{\widetilde{\xi^*}}^{-2}(\widetilde{X}'\widetilde{Z_2}(\rho_T))'(\widetilde{X}'\widetilde{X})^{-1}(\widetilde{X}'\widetilde{Z_2}(\rho_T)) \\
		&=\hat{\sigma}_{\widetilde{\xi^*}}^{-2}\begin{pmatrix} T^{-1}\sum_{t=1}^{T}\widetilde{y_{t-1}}z_t^2(\rho_T) \\ T^{-3/2}\sum_{t=1}^{T}\widetilde{y_{t-1}^2}z_t^2(\rho_T) \\
		\end{pmatrix}' \begin{pmatrix}
			T^{-2}\sum_{t=1}^{T}(\widetilde{y_{t-1}})^2 & T^{-5/2}\sum_{t=1}^{T}\widetilde{y_{t-1}}\widetilde{y_{t-1}^2} \\
			T^{-5/2}\sum_{t=1}^{T}\widetilde{y_{t-1}^2}\widetilde{y_{t-1}} & T^{-3}\sum_{t=1}^{T}(\widetilde{y_{t-1}^2})^2 \\ 
		\end{pmatrix}^{-1} \\
	& \ \ \ \ \ \ \ \ \ \ \ \times \begin{pmatrix} T^{-1}\sum_{t=1}^{T}\widetilde{y_{t-1}}z_t^2(\rho_T) \\ T^{-3/2}\sum_{t=1}^{T}\widetilde{y_{t-1}^2}z_t^2(\rho_T) \\
		\end{pmatrix}.
	\end{align*}
	Then, applying Lemma \ref{lemappB:wc_components} and the CMT, we get
	\begin{align*}
		W_T(\rho_T)&\Rightarrow \sigma_\eta^{-2}\begin{pmatrix} \sigma_\eta\sigma_{\varepsilon}\int_{0}^{1}\widetilde{J_{a,1}}(r)dW_\eta(r) + c^2\sigma_{\varepsilon}^3\int_{0}^{1}\widetilde{J_{a,1}}(r)\widetilde{J_{a,2}}(r)dr + 2c\sigma_{\varepsilon}^3q\int_{0}^{1}(\widetilde{J_{a,1}})^2(r)dr \\ \sigma_\eta\sigma_{\varepsilon}^2\int_{0}^{1}\widetilde{J_{a,2}}(r)dW_\eta(r) + c^2\sigma_{\varepsilon}^4\int_{0}^{1}(\widetilde{J_{a,2}})^2(r)dr + 2c\sigma_{\varepsilon}^4q\int_{0}^{1}\widetilde{J_{a,1}}(r)\widetilde{J_{a,2}}(r)dr \\
		\end{pmatrix}' \\
	&\times \begin{pmatrix}
		\sigma_{\varepsilon}^2\int_{0}^{1}\bigl(\widetilde{J_{a,1}}\bigr)^2(r)dr & \sigma_{\varepsilon}^3\int_{0}^{1}\widetilde{J_{a,1}}(r)\widetilde{J_{a,2}}(r)dr \\ 
		\sigma_{\varepsilon}^3\int_{0}^{1}\widetilde{J_{a,2}}(r)\widetilde{J_{a,1}}(r)dr &
		\sigma_{\varepsilon}^4\int_{0}^{1}\bigl(\widetilde{J_{a,2}}\bigr)^2(r)dr \\
	\end{pmatrix}^{-1} \\
	&\times \begin{pmatrix} \sigma_\eta\sigma_{\varepsilon}\int_{0}^{1}\widetilde{J_{a,1}}(r)dW_\eta(r) + c^2\sigma_{\varepsilon}^3\int_{0}^{1}\widetilde{J_{a,1}}(r)\widetilde{J_{a,2}}(r)dr + 2c\sigma_{\varepsilon}^3q\int_{0}^{1}(\widetilde{J_{a,1}})^2(r)dr \\ \sigma_\eta\sigma_{\varepsilon}^2\int_{0}^{1}\widetilde{J_{a,2}}(r)dW_\eta(r) + c^2\sigma_{\varepsilon}^4\int_{0}^{1}(\widetilde{J_{a,2}})^2(r)dr + 2c\sigma_{\varepsilon}^4q\int_{0}^{1}\widetilde{J_{a,1}}(r)\widetilde{J_{a,2}}(r)dr \\
	\end{pmatrix} \\
	&=\Bigg\{\begin{pmatrix}
		\int_{0}^{1}\widetilde{J_{a,1}}(r)dW_\eta(r) \\ \int_{0}^{1}\widetilde{J_{a,2}}(r)dW_\eta(r) \end{pmatrix} + \frac{\sigma_\varepsilon^2}{\sigma_\eta} \begin{pmatrix} c^2\int_{0}^{1}\widetilde{J_{a,1}}(r)\widetilde{J_{a,2}}(r)dr+2cq\int_{0}^{1}\bigl(\widetilde{J_{a,1}}\bigr)^2(r)dr \\  c^2\int_{0}^{1}\bigl(\widetilde{J_{a,2}}\bigr)^2(r)dr+2cq\int_{0}^{1}\widetilde{J_{a,1}}(r)\widetilde{J_{a,2}}(r)dr
	\end{pmatrix}\Biggr\}' \notag \\ &\times \begin{pmatrix}
		\int_{0}^{1}\bigl(\widetilde{J_{a,1}}\bigr)^2(r)dr & \int_{0}^{1}\widetilde{J_{a,1}}(r)\widetilde{J_{a,2}}(r)dr \\ \int_{0}^{1}\widetilde{J_{a,2}}(r)\widetilde{J_{a,1}}(r)dr & \int_{0}^{1}\bigl(\widetilde{J_{a,2}}\bigr)^2(r)dr
	\end{pmatrix}^{-1} \notag \\ &\times \Bigg\{\begin{pmatrix}
		\int_{0}^{1}\widetilde{J_{a,1}}(r)dW_\eta(r) \\ \int_{0}^{1}\widetilde{J_{a,2}}(r)dW_\eta(r) \end{pmatrix} + \frac{\sigma_\varepsilon^2}{\sigma_\eta} \begin{pmatrix} c^2\int_{0}^{1}\widetilde{J_{a,1}}(r)\widetilde{J_{a,2}}(r)dr+2cq\int_{0}^{1}\bigl(\widetilde{J_{a,1}}\bigr)^2(r)dr \\  c^2\int_{0}^{1}\bigl(\widetilde{J_{a,2}}\bigr)^2(r)dr+2cq\int_{0}^{1}\widetilde{J_{a,1}}(r)\widetilde{J_{a,2}}(r)dr
	\end{pmatrix}\Biggr\},
	\end{align*}
	completing the proof.
\end{proof}

To prove Theorem \ref{thm:mod_ln_t_Wald_dist}, we use the following lemma.
\begin{lemappB}
	Consider model \eqref{model:near_unity_rca} under Assumptions \ref{asm:local} and \ref{asm:localized_corr}. Then, we have
\begin{itemize}
	\item[(a)] 
	\begin{align*}
		T^{-3/2}\sum_{t=1}^{T}\widetilde{y_{t-1}^2}z_t^{2*}(\rho_T) \Rightarrow& \sigma_\eta\sigma_{\varepsilon}^2\int_{0}^{1}\widetilde{J_{a,2}}(r)dW_1(r) \\ &+ (1-\psi^2)^{-1/2}\Bigl\{ c^2\sigma_{\varepsilon}^4\int_{0}^{1}(\widetilde{J_{a,2}})^2(r)dr
		+   2c\sigma_{\varepsilon}^4q\int_{0}^{1}\widetilde{J_{a,1}}(r)\widetilde{J_{a,2}}(r)dr\Bigr\},
	\end{align*}
	
	\item[(b)] \begin{align*}
		T^{-1}\sum_{t=1}^{T}\widetilde{y_{t-1}}z_t^{2*}(\rho_T) \Rightarrow &\sigma_\eta\sigma_{\varepsilon}\int_{0}^{1}\widetilde{J_{a,1}}(r)dW_1(r) \\ &+ (1-\psi^2)^{-1/2}\Bigl\{ c^2\sigma_{\varepsilon}^3\int_{0}^{1}\widetilde{J_{a,1}}(r)\widetilde{J_{a,2}}(r)dr 
		+ 2c\sigma_{\varepsilon}^3q\int_{0}^{1}(\widetilde{J_{a,1}})^2(r)dr\Bigr\}.
	\end{align*}
\end{itemize} \label{lemappB:wc_components_mod}
\end{lemappB}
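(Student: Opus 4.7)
The natural approach is to exploit the linear decomposition of $z_t^{2*}(\rho_T)$ built into the modification \eqref{modification}. For part (a), I would write
\begin{align*}
    T^{-3/2}\sum_{t=1}^{T}\widetilde{y_{t-1}^2}z_t^{2*}(\rho_T)
    = \frac{1}{\sqrt{1-\hat{\psi}_T^2(\rho_T)}}\Biggl[T^{-3/2}\sum_{t=1}^{T}\widetilde{y_{t-1}^2}z_t^2(\rho_T)
    - \frac{\hat{\sigma}_{\eta,T}(\rho_T)\hat{\psi}_T(\rho_T)}{\hat{\sigma}_{\varepsilon,T}(\rho_T)}\,T^{-3/2}\sum_{t=1}^{T}\widetilde{y_{t-1}^2}z_t(\rho_T)\Biggr].
\end{align*}
The first bracketed sum has already been handled in Lemma \ref{lemappB:wc_components}(a), and the consistency results $\hat{\sigma}_{\eta,T}(\rho_T)\stackrel{p}{\to}\sigma_\eta$, $\hat{\sigma}_{\varepsilon,T}(\rho_T)\stackrel{p}{\to}\sigma_\varepsilon$, $\hat{\psi}_T(\rho_T)\stackrel{p}{\to}\psi$ from Lemma \ref{lemappB:consistency} give the scalar prefactors. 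All that remains is to compute the weak limit of $T^{-3/2}\sum_{t=1}^{T}\widetilde{y_{t-1}^2}z_t(\rho_T)$.

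To evaluate this new term, I would split $z_t(\rho_T)=\varepsilon_t+cT^{-3/4}y_{t-1}v_t$. Using $\widetilde{y_{t-1}^2}\approx T\,\widetilde{Y_{2,T}}((t-1)/T)$ together with $Y_T\Rightarrow\sigma_\varepsilon J_a$ from Lemma \ref{lemappB:wc_Y}, Theorem 2.1 of \citet{hansen1992Convergencestochastica} applied to $T^{-1/2}\sum_{t=1}^T \widetilde{Y_{2,T}}((t-1)/T)\varepsilon_t$ delivers the stochastic-integral limit $\sigma_\varepsilon^3\int_0^1\widetilde{J_{a,2}}(r)\,dW_\varepsilon(r)$. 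The remaining piece equals $cT^{-3/4}\sum_{t=1}^{T}\widetilde{Y_{2,T}}((t-1)/T)\,Y_T((t-1)/T)\,v_t$; decomposing $v_t$ into its projection onto $\varepsilon_t$ (of order $q/T^{1/4}$) and an orthogonal component $w_t$, each of the two resulting sums is bounded in probability after multiplying by $T^{-1/2}$, so the extra factor $T^{-1/4}$ makes the whole piece $o_p(1)$. Thus
\begin{align*}
    T^{-3/2}\sum_{t=1}^{T}\widetilde{y_{t-1}^2}z_t(\rho_T)\;\Rightarrow\;\sigma_\varepsilon^3\int_0^1 \widetilde{J_{a,2}}(r)\,dW_\varepsilon(r).
\end{align*}

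The final manipulation is to collect the pieces, substitute the plim values for $\hat{\sigma}_{\eta,T},\hat{\sigma}_{\varepsilon,T},\hat{\psi}_T$, and then use the distributional identity \eqref{eqn:w_eta_varepsilon_1}: the two $dW_\varepsilon$-terms combine into $\sigma_\eta\sigma_\varepsilon^2\int_0^1\widetilde{J_{a,2}}(r)[dW_\eta(r)-\psi\,dW_\varepsilon(r)]=\sigma_\eta\sigma_\varepsilon^2\sqrt{1-\psi^2}\int_0^1\widetilde{J_{a,2}}(r)\,dW_1(r)$, so the $\sqrt{1-\psi^2}$ in the denominator cancels against this factor, leaving the drift terms from Lemma \ref{lemappB:wc_components}(a) divided by $\sqrt{1-\psi^2}$, exactly as stated. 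Part (b) is proved by the same three-step recipe, substituting Lemma \ref{lemappB:wc_components}(b) for (a) and computing $T^{-1}\sum_{t=1}^T\widetilde{y_{t-1}}z_t(\rho_T)\Rightarrow\sigma_\varepsilon^2\int_0^1\widetilde{J_{a,1}}(r)\,dW_\varepsilon(r)$ via the same $\varepsilon_t$ vs.\ $cT^{-3/4}y_{t-1}v_t$ split.

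The main obstacle is the careful bookkeeping in showing that the $cT^{-3/4}y_{t-1}v_t$ contribution to $\sum\widetilde{y_{t-1}^2}z_t(\rho_T)$ (and its counterpart in part (b)) is genuinely negligible: the localized correlation $\mathrm{Corr}(\varepsilon_t,v_t)=q/T^{1/4}$ prevents a one-line appeal to independence, so one must separate $v_t$ into a component parallel to $\varepsilon_t$ and an orthogonal martingale-difference component and verify the rates for each. Everything else is a clean application of Lemmas \ref{lemappB:wc_Y}--\ref{lemappB:wc_components}, the CMT, and the identity \eqref{eqn:w_eta_varepsilon_1}.
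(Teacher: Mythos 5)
Your proposal is correct and follows essentially the same route as the paper: decompose $z_t^{2*}(\rho_T)$ via \eqref{modification} into the $z_t^2(\rho_T)$ part (handled by Lemma \ref{lemappB:wc_components}) and a new term $T^{-3/2}\sum_t \widetilde{y_{t-1}^2}z_t(\rho_T)$, show the latter converges to $\sigma_\varepsilon^3\int_0^1\widetilde{J_{a,2}}(r)dW_\varepsilon(r)$ with the $cT^{-3/4}y_{t-1}v_t$ contribution negligible, and then recombine using \eqref{eqn:w_eta_varepsilon_1}. The paper simply records the $v_t$-contribution as $O_p(T^{-1/4})$ where you spell out the projection argument, but this is a difference of detail, not of method.
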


\noindent \begin{proof}
	To prove part (a), note that
	\begin{align*}
		T^{-3/2}\sum_{t=1}^{T}\widetilde{y_{t-1}^2}z_t^{2*}(\rho_T) = \frac{1}{\sqrt{1-\hat{\psi}_T^2(\rho_T)}}\Bigl\{E_{1,T} - \frac{\hat{\sigma}_{\eta,T}(\rho_T)\hat{\psi}_T(\rho_T)}{\hat{\sigma}_{\varepsilon,T}(\rho_T)}E_{2,T}\Bigr\},
	\end{align*}
	where $E_{1,T} \coloneqq T^{-3/2}\sum_{t=1}^{T}\widetilde{y_{t-1}^2}z_t^2(\rho_T)$ and $E_{2,T}\coloneqq T^{-3/2}\sum_{t=1}^{T}\widetilde{y_{t-1}^2}z_t(\rho_T)$. By Lemma \ref{lemappB:wc_components}(a), $E_{1,T}$ satisfies
	\begin{align*}
		E_{1,T} \Rightarrow \sigma_\eta\sigma_{\varepsilon}^2\int_{0}^{1}\widetilde{J_{a,2}}(r)dW_\eta(r) + c^2\sigma_{\varepsilon}^4\int_{0}^{1}(\widetilde{J_{a,2}})^2(r)dr + 2c\sigma_{\varepsilon}^4q\int_{0}^{1}\widetilde{J_{a,1}}(r)\widetilde{J_{a,2}}(r)dr.
	\end{align*}
	As for $E_{2,T}$, a straightforward calculation yields
	\begin{align*}
		E_{2,T} &= T^{-3/2}\sum_{t=1}^{T}\widetilde{y_{t-1}^2}(cT^{-3/4}y_{t-1}v_t + \varepsilon_t) \\
		&=T^{-3/2}\sum_{t=1}^{T}\widetilde{y_{t-1}^2}\varepsilon_t + O_p(T^{-1/4}) \Rightarrow \sigma_{\varepsilon}^3\int_{0}^{1}\widetilde{J_{a,2}}(r)dW_\varepsilon(r).
	\end{align*}
	Hence, we obtain
	\begin{align*}
		T^{-3/2}\sum_{t=1}^{T}\widetilde{y_{t-1}^2}z_t^{2*}(\rho_T)  &\Rightarrow \frac{1}{\sqrt{1-\psi^2}}\biggl\{\sigma_\eta\sigma_{\varepsilon}^2\int_{0}^{1}\widetilde{J_{a,2}}(r)dW_\eta(r) + c^2\sigma_{\varepsilon}^4\int_{0}^{1}(\widetilde{J_{a,2}})^2(r)dr \\ & \ \ \ \ \ \ \ + 2c\sigma_{\varepsilon}^4q\int_{0}^{1}\widetilde{J_{a,1}}(r)\widetilde{J_{a,2}}(r)dr - \frac{\sigma_\eta \psi}{\sigma_\varepsilon}\sigma_{\varepsilon}^3\int_{0}^{1}\widetilde{J_{a,2}}(r)dW_\varepsilon(r)\biggr\} \\
		&=\sigma_\eta\sigma_{\varepsilon}^2\int_{0}^{1}\widetilde{J_{a,2}}(r)d\Bigl(\frac{W_\eta(r)-\psi W_\varepsilon(r)}{\sqrt{1-\psi^2}}\Bigr) \\
		& \ \ \ \ \ \ +\frac{1}{\sqrt{1-\psi^2}}\Bigl(c^2\sigma_{\varepsilon}^4\int_{0}^{1}(\widetilde{J_{a,2}})^2(r)dr+2c\sigma_{\varepsilon}^4q\int_{0}^{1}\widetilde{J_{a,1}}(r)\widetilde{J_{a,2}}(r)dr\Bigr) \\
		&\stackrel{d}{=}\sigma_\eta\sigma_{\varepsilon}^2\int_{0}^{1}\widetilde{J_{a,2}}(r)dW_1(r) \\
		& \ \ \ \ \ \ +\frac{1}{\sqrt{1-\psi^2}}\Bigl(c^2\sigma_{\varepsilon}^4\int_{0}^{1}(\widetilde{J_{a,2}})^2(r)dr+2c\sigma_{\varepsilon}^4q\int_{0}^{1}\widetilde{J_{a,1}}(r)\widetilde{J_{a,2}}(r)dr\Bigr),
	\end{align*}
	in view of \eqref{eqn:w_eta_varepsilon_1}. This proves part (a). The proof of part (b) is similar and thus is omitted.
\end{proof}

\noindent \begin{proof}[Proof of Theorem \ref{thm:mod_ln_t_Wald_dist}]
	The proof for $\mathrm{LN}_T^*(\rho_T)$ is essentially the same as that of Theorem \ref{thm:ln_dist} except that we consider $\sum_{t=1}^{T}\widetilde{y_{t-1}^2}z_t^{2*}(\rho_T)$ in the numerator of the test statistic. Dividing both the numerator and denominator by $T^{3/2}$ and applying Lemma \ref{lemappB:wc_components_mod}(a) leads to the desired result. The proof for the augmented tests goes along the same lines as those of Theorem \ref{thm:t_wald_dist} if we replace $z_t^2(\rho_T)$ with $z_t^{2*}(\rho_T)$ and apply Lemmas \ref{lemappB:wc_components}(d) and \ref{lemappB:wc_components_mod}.
\end{proof}

\section*{Appendix C: Proofs of Results in Section 3}
\setcounter{equation}{0}
\renewcommand{\theequation}{C.\arabic{equation}}
In this appendix, we prove the asymptotic results mentioned in Section 3: namely, the asymptotic distribution of $\hat{\rho}_T$ and the consistency of $\hat{\sigma}_{\varepsilon,T}^2(\hat{\rho}_T)$, $\hat{\sigma}_{\eta,T}^2(\hat{\rho}_T)$ and $\hat{\psi}_T(\hat{\rho}_T)$. 

\begin{lemappC}
	Consider model \eqref{model:near_unity_rca} under Assumptions \ref{asm:local} and \ref{asm:localized_corr}. Then, we have
	\begin{align*}
		T(\hat{\rho}_T - \rho_T) \Rightarrow \frac{\int_{0}^{1}J_a(r)W_\varepsilon(r)}{\int_{0}^{1}J_a^2(r)dr}.
	\end{align*} \label{lemappC:rho_hat}
\end{lemappC}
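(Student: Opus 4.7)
The plan is to start from the identity
\[
T(\hat\rho_T - \rho_T) = \frac{T^{-1}\sum_{t=1}^T y_{t-1}\, z_t(\rho_T)}{T^{-2}\sum_{t=1}^T y_{t-1}^2},
\]
which follows from $z_t(\rho_T) = y_t - \rho_T y_{t-1}$, and then identify the limits of the numerator and denominator separately before invoking the continuous mapping theorem. The denominator is easy: Lemma \ref{lemappB:wc_Y} and the CMT give $T^{-2}\sum y_{t-1}^2 = \int_0^1 Y_T^2(r)\, dr \Rightarrow \sigma_\varepsilon^2 \int_0^1 J_a^2(r)\, dr$.

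For the numerator, I would decompose $z_t(\rho_T) = \varepsilon_t + \omega_T v_t y_{t-1}$ to split it as
\[
T^{-1}\sum_{t=1}^T y_{t-1}\, z_t(\rho_T) \;=\; T^{-1}\sum_{t=1}^T y_{t-1}\varepsilon_t \;+\; \omega_T\, T^{-1}\sum_{t=1}^T y_{t-1}^2 v_t.
\]
The RCA-induced second piece should be shown to be $o_p(1)$ by a second-moment argument: since $(\varepsilon_t,v_t)'$ is i.i.d., $v_t$ is independent of $y_{t-1}$ with $\mathbb{E}[v_t] = 0$ and $\mathbb{E}[v_t^2] = 1$, so $\sum y_{t-1}^2 v_t$ is a martingale with variance $\sum \mathbb{E}[y_{t-1}^4]$. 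Standard moment bounds on the RCA recursion under Assumption \ref{asm:local} give $\sup_{t\leq T}\mathbb{E}[y_t^4] = O(T^2)$, so the sum is $O_p(T^{3/2})$, and multiplying by $\omega_T T^{-1} = (c/T^{3/4})\, T^{-1}$ produces an $O_p(T^{-1/4})$ quantity.

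For the leading piece I would write
\[
T^{-1}\sum_{t=1}^T y_{t-1}\varepsilon_t \;=\; \sigma_\varepsilon \int_0^1 Y_T(r{-})\, dW_{\varepsilon,T}(r),
\]
and then apply a convergence-to-stochastic-integral result (for instance Theorem 2.1 of \citet{hansen1992Convergencestochastica}) to the joint weak limit $(Y_T, W_{\varepsilon,T}) \Rightarrow (\sigma_\varepsilon J_a, W_\varepsilon)$ delivered by Lemma \ref{lemappB:wc_Y} together with the FCLT for $W_{\varepsilon,T}$. This yields
\[
T^{-1}\sum_{t=1}^T y_{t-1}\varepsilon_t \;\Rightarrow\; \sigma_\varepsilon^2 \int_0^1 J_a(r)\, dW_\varepsilon(r).
\]
Combining numerator and denominator through the CMT, the $\sigma_\varepsilon^2$ factors cancel and the stated limit follows.

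\textbf{Main obstacle.} The only nontrivial step is the stochastic-integral limit for the main numerator piece. In a pure local-to-unity AR the convergence is classical, but here the integrator $Y_T$ is generated by the random-coefficient recursion rather than by a linear recursion in $\varepsilon_t$, so one must verify that the RCA perturbation does not affect Hansen's conditions. Since the extra component of the innovation, $\omega_T v_t y_{t-1} = O_p(T^{-1/4})$, is asymptotically dominated by $\varepsilon_t$, the required tightness and joint weak convergence are inherited from the same arguments that underpin Lemma \ref{lemappB:wc_Y}, and the remainder of the argument is routine.
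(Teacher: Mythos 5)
Your proposal is correct and follows essentially the same route as the paper: the same identity for $T(\hat{\rho}_T-\rho_T)$, the same split of the numerator into $T^{-1}\sum y_{t-1}\varepsilon_t$ plus the RCA-induced remainder, convergence of the leading term to $\sigma_\varepsilon^2\int_0^1 J_a\,dW_\varepsilon$ via Lemma \ref{lemappB:wc_Y} and a Hansen-type stochastic-integral limit, and the CMT for the ratio. The only (immaterial) difference is that you bound the remainder by a martingale variance computation, whereas the paper writes it as $cT^{-1/4}\int_0^1 Y_T^2(r)\,dW_{v,T}(r)$ and lets the weak convergence of the stochastic integral deliver the $O_p(T^{-1/4})$ rate.
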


\noindent \begin{proof}
	From the definition of $\hat{\rho}_T$, we have
	\begin{align*}
		T(\hat{\rho}_T-\rho_T) &= \frac{T^{-1}\sum_{t=1}^{T}y_{t-1}(cT^{-3/4}y_{t-1}v_t+\varepsilon_t)}{T^{-2}\sum_{t=1}^{T}y_{t-1}^2} \\
		&=\frac{cT^{-1/4}\int_{0}^{1}Y_T^2(r)dW_{v,T}(r)+\sigma_{\varepsilon}\int_{0}^{1}Y_T(r)dW_{\varepsilon,T}(r)}{\int_{0}^{1}Y_T^2(r)dr} \Rightarrow \frac{\int_{0}^{1}J_a(r)dW_\varepsilon(r)}{\int_{0}^{1}J_a^2(r)}.
	\end{align*} 
\end{proof}

\begin{lemappC}
	Consider model \eqref{model:near_unity_rca} under Assumptions \ref{asm:local} and \ref{asm:localized_corr}. Then, we have
	\begin{itemize}
		\item[(a)] $\hat{\sigma}_{\varepsilon,T}^2(\hat{\rho}_T) \stackrel{p}{\to} \sigma_{\varepsilon}^2$,
		
		\item[(b)] $\hat{\sigma}_{\eta,T}^2(\hat{\rho}_T) \stackrel{p}{\to} \sigma_{\eta}^2$,
		
		\item[(c)] $\hat{\psi}_T(\hat{\rho}_T) \stackrel{p}{\to} \psi$.
	\end{itemize}
\end{lemappC}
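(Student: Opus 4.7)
The plan is to reduce each part to its counterpart in Lemma B.2 by writing $\hat{\rho}_T = \rho_T + (\hat{\rho}_T-\rho_T)$ and expanding, exploiting the $T$-consistency $\hat{\rho}_T - \rho_T = O_p(T^{-1})$ supplied by Lemma C.1. The basic identity I would use throughout is
\begin{align*}
	z_t(\hat{\rho}_T) = z_t(\rho_T) - (\hat{\rho}_T - \rho_T)y_{t-1},
\end{align*}
which yields the quadratic expansion $z_t^2(\hat{\rho}_T) = z_t^2(\rho_T) - 2(\hat{\rho}_T-\rho_T)y_{t-1}z_t(\rho_T) + (\hat{\rho}_T-\rho_T)^2 y_{t-1}^2$. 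Order bounds I will rely on follow from Lemma B.1 together with the continuous mapping theorem: $T^{-1}\sum y_{t-1}^k = O_p(T^{k/2})$ for $k=1,2,3,4$, $T^{-1}\sum y_{t-1}z_t(\rho_T) = O_p(1)$ (by the same FCLT argument used to obtain the limit of the numerator of $\hat{\rho}_T$), and analogous bounds for sums of $y_{t-1}^j z_t^k(\rho_T)$.

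For part (a), substituting the expansion of $z_t^2(\hat{\rho}_T)$ and averaging gives
\begin{align*}
	\hat{\sigma}_{\varepsilon,T}^2(\hat{\rho}_T) = \hat{\sigma}_{\varepsilon,T}^2(\rho_T) - 2(\hat{\rho}_T-\rho_T)T^{-1}\sum_{t=1}^{T}y_{t-1}z_t(\rho_T) + (\hat{\rho}_T-\rho_T)^2 T^{-1}\sum_{t=1}^{T}y_{t-1}^2.
\end{align*}
The second summand is $O_p(T^{-1})\cdot O_p(1) = o_p(1)$, and the third is $O_p(T^{-2})\cdot O_p(T)=O_p(T^{-1})=o_p(1)$. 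Combining with Lemma B.2(a) gives the result.

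For part (b), the plan is to set $A_t \coloneqq -2(\hat{\rho}_T-\rho_T)y_{t-1}z_t(\rho_T) + (\hat{\rho}_T-\rho_T)^2 y_{t-1}^2$ and $\bar{A}\coloneqq T^{-1}\sum A_t$, so that $z_t^2(\hat{\rho}_T) - \hat{\sigma}_{\varepsilon,T}^2(\hat{\rho}_T) = \{z_t^2(\rho_T)-\hat{\sigma}_{\varepsilon,T}^2(\rho_T)\} + (A_t - \bar{A})$. Squaring and averaging, one obtains Lemma B.2(b)'s object plus a cross term $2T^{-1}\sum\{z_t^2(\rho_T)-\hat{\sigma}_{\varepsilon,T}^2(\rho_T)\}(A_t-\bar{A})$ and a remainder $T^{-1}\sum (A_t-\bar{A})^2$. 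Each of these breaks into pieces of the form $(\hat{\rho}_T-\rho_T)^j T^{-1}\sum y_{t-1}^k z_t^\ell(\rho_T)\{z_t^2(\rho_T)-\hat{\sigma}_{\varepsilon,T}^2(\rho_T)\}^m$ with $j\in\{1,2,3,4\}$; by the order bookkeeping above (in particular $T^{-1}\sum y_{t-1}^4 = O_p(T^3)$ so that $(\hat{\rho}_T-\rho_T)^4 T^{-1}\sum y_{t-1}^4 = O_p(T^{-1})$, and $(\hat{\rho}_T-\rho_T) T^{-1}\sum y_{t-1}z_t(\rho_T)\{z_t^2(\rho_T)-\hat{\sigma}_{\varepsilon,T}^2(\rho_T)\} = O_p(T^{-1})\cdot O_p(T^{1/2})$ using $E[\varepsilon_t^3]\sum y_{t-1}=O_p(T^{3/2})$ as the dominating term) every such piece is $o_p(1)$. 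Part (c) is handled in the same spirit, expanding $T^{-1}\sum z_t(\hat{\rho}_T)\{z_t^2(\hat{\rho}_T)-\hat{\sigma}_{\varepsilon,T}^2(\hat{\rho}_T)\}$ into its value at $\rho_T$ plus remainder terms of exactly the same shape, and then invoking Lemma B.2(c) together with part (a) and part (b).

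The main obstacle will be the bookkeeping in part (b): one must verify that every one of the mixed-degree remainder terms vanishes, and the most delicate of these is the cross term $(\hat{\rho}_T-\rho_T)T^{-1}\sum\{z_t^2(\rho_T)-\hat{\sigma}_{\varepsilon,T}^2(\rho_T)\}y_{t-1}z_t(\rho_T)$, because $\eta_t\varepsilon_t=\varepsilon_t^3-\sigma_\varepsilon^2\varepsilon_t$ has the nonzero mean $E[\varepsilon_t^3]$, so the sum inside carries a deterministic $O_p(T^{3/2})$ piece from $\sum y_{t-1}$ rather than the $O_p(T)$ one would get from mean-zero summands. The saving margin $O_p(T^{-1})\cdot O_p(T^{1/2})=O_p(T^{-1/2})$ is the tightest in the entire argument, and it is exactly this margin that motivates presenting the expansion in the order above rather than trying to bound everything crudely.
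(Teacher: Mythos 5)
Your proposal is correct and follows essentially the same route as the paper: expand $z_t(\hat{\rho}_T)=z_t(\rho_T)-(\hat{\rho}_T-\rho_T)y_{t-1}$, use the $T$-consistency from Lemma C.1 to show every remainder term is $o_p(1)$, and reduce each part to Lemma B.2 (the paper merely streamlines part (b) by first writing $\hat{\sigma}^2_{\eta,T}(\hat{\rho}_T)=T^{-1}\sum_{t}z_t^4(\hat{\rho}_T)-\hat{\sigma}^4_{\varepsilon,T}(\hat{\rho}_T)$ and expanding the quartic, which sidesteps your $A_t-\bar{A}$ bookkeeping; its tightest bound, $T^{-2}\sum_{t}z_t^3(\rho_T)y_{t-1}=O_p(T^{-1/2})$ driven by the nonzero mean $\mathbb{E}[\varepsilon_t^3]$, is exactly your delicate cross term). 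The only blemish is the slip $T^{-1}\sum_{t}y_{t-1}^4=O_p(T^3)$, which contradicts your own correct rule $T^{-1}\sum_{t}y_{t-1}^k=O_p(T^{k/2})$; with the correct order $O_p(T^2)$ that term is $O_p(T^{-2})$, still $o_p(1)$, so nothing breaks.
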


\noindent \begin{proof}
	As for part (a), $\hat{\sigma}_{\varepsilon,T}^2(\hat{\rho}_T)$ satisfies
	\begin{align*}
		\hat{\sigma}_{\varepsilon,T}^2(\hat{\rho}_T) &= T^{-1}\sum_{t=1}^{T}z_t^2(\hat{\rho}_T) \\
		&=T^{-1}\sum_{t=1}^{T}\{z_t(\rho_T)-(\hat{\rho}_T-\rho_T)y_{t-1}\}^2 \\
		&=T^{-1}\sum_{t=1}^{T}z_t^2(\rho_T) - 2T(\hat{\rho}_T-\rho_T)T^{-2}\sum_{t=1}^{T}y_{t-1}z_t(\rho_T) + T^{2}(\hat{\rho}_T-\rho_T)^2T^{-3}\sum_{t=1}^{T}y_{t-1}^2,
	\end{align*}
	for which we have $T^{-1}\sum_{t=1}^{T}z_t^2(\rho_T)=\hat{\sigma}_{\varepsilon,T}^2(\rho_T)$, $T^2(\hat{\rho}_T-\rho_T)^2T^{-3}\sum_{t=1}^{T}y_{t-1}^2=O_p(T^{-1})$ by Lemma \ref{lemappC:rho_hat} and the CMT, and
	\begin{align*}
		T^{-2}\sum_{t=1}^{T}y_{t-1}z_t(\rho_T) &= T^{-2}\sum_{t=1}^{T}y_{t-1}(cT^{-3/4}y_{t-1}v_t+\varepsilon_t) \\
		&=cT^{-5/4}\int_{0}^{1}Y_T^2(r)dW_{T,v}(r) + T^{-1}\sigma_{\varepsilon}\int_{0}^{1}Y_T(r)dW_{\varepsilon,T}(r) = O_p(T^{-1}).
	\end{align*}
	Therefore
	\begin{align*}
		\hat{\sigma}_{\varepsilon,T}^2(\hat{\rho}_T) = \hat{\sigma}_{\varepsilon,T}^2(\rho_T) + o_p(1) \stackrel{p}{\to}\sigma_{\varepsilon}^2,
	\end{align*}
	given Lemma \ref{lemappB:consistency}(a).
	
	To prove part (b), write $\hat{\sigma}_{\eta,T}^2(\hat{\rho}_T)$ as
	\begin{align}
		\hat{\sigma}_{\eta,T}^2(\hat{\rho}_T) = T^{-1}\sum_{t=1}^{T}\{z_t^2(\hat{\rho}_T)-\hat{\sigma}_{\varepsilon,T}^2(\hat{\rho}_T)\}^2 = T^{-1}\sum_{t=1}^{T}z_t^4(\hat{\rho}_T) - \hat{\sigma}_{\varepsilon,T}^4(\hat{\rho}_T). \label{eqn:kappa_hat_extend}
	\end{align}
	The first term is
	\begin{align*}
		T^{-1}\sum_{t=1}^{T}z_t^4(\hat{\rho}_T) &= T^{-1}\sum_{t=1}^{T}\{z_t(\rho_T) - (\hat{\rho}_T-\rho_T)y_{t-1}\}^4 \\
		&=T^{-1}\sum_{t=1}^{T}z_t^4(\rho_T) - 4T(\hat{\rho}_T-\rho_T)T^{-2}\sum_{t=1}^{T}z_t^3(\rho_T)y_{t-1} + 6T^2(\hat{\rho}_T-\rho_T)^2T^{-3}\sum_{t=1}^{T}z_t^2(\rho_T)y_{t-1}^2 \\
		&-4T^3(\hat{\rho}_T-\rho_T)^3T^{-4}\sum_{t=1}^{T}z_t(\rho_T)y_{t-1}^3 + T^4(\hat{\rho}_T-\rho_T)^4T^{-5}\sum_{t=1}^{T}y_{t-1}^4.
	\end{align*}
	Straightforward calculations reveal
	\begin{align*}
		&T^{-2}\sum_{t=1}^{T}z_t^3(\rho_T)y_{t-1}=O_p(T^{-1/2}), \quad T^{-3}\sum_{t=1}^{T}z_t^2(\rho_T)y_{t-1}^2=O_p(T^{-1}), \\
		&T^{-4}\sum_{t=1}^{T}z_t(\rho_T)y_{t-1}^3=O_p(T^{-2}), \quad T^{-5}\sum_{t=1}^{T}y_{t-1}^4=O_p(T^{-2}),
	\end{align*}
	which gives
	\begin{align*}
		T^{-1}\sum_{t=1}^{T}z_t^4(\hat{\rho}_T)=T^{-1}\sum_{t=1}^{T}z_t^4(\rho_T)+O_p(T^{-1/2}).
	\end{align*}
	Substituting this and $\hat{\sigma}_{\varepsilon,T}^2(\hat{\rho}_T) = \hat{\sigma}_{\varepsilon,T}^2(\rho_T)+o_p(1)$ into \eqref{eqn:kappa_hat_extend}, we arrive at
	\begin{align*}
		\hat{\sigma}_{\eta,T}^2(\hat{\rho}_T) &= T^{-1}\sum_{t=1}^{T}z_t^4(\rho_T) - \hat{\sigma}_{\varepsilon,T}^4(\rho_T) + o_p(1) \\
		&=\hat{\sigma}_{\eta,T}^2(\rho_T) + o_p(1) \stackrel{p}{\to} \sigma_{\eta}^2,
	\end{align*}
	by Lemma \ref{lemappB:consistency}(b).
	
	To prove part (c), it suffices to show $T^{-1}\sum_{t=1}^{T}z_t(\hat{\rho}_T)\{z_t^2(\hat{\rho}_T)-\hat{\sigma}_{\varepsilon,T}^2(\hat{\rho}_T)\} \stackrel{p}{\to} \mathbb{E}[\varepsilon_t^3]$, given that $\hat{\sigma}_{\varepsilon,T}^2(\hat{\rho}_T) \stackrel{p}{\to} \sigma_{\varepsilon}^2$ and $\hat{\sigma}_{\eta,T}^2(\hat{\rho}_T) \stackrel{p}{\to}\sigma_{\eta}^2$. Now, we have
	\begin{align*}
		T^{-1}\sum_{t=1}^{T}&z_t(\hat{\rho}_T)\{z_t^2(\hat{\rho}_T)-\hat{\sigma}_{\varepsilon,T}^2(\hat{\rho}_T)\} \\
		&=T^{-1}\sum_{t=1}^{T}\{z_t(\rho_T)-(\hat{\rho}_T-\rho_T)y_{t-1}\}\{z_t^2(\rho_T) - 2(\hat{\rho}_T-\rho_T)y_{t-1}z_t(\rho_T)+(\hat{\rho}_T-\rho_T)^2y_{t-1}^2 -\hat{\sigma}_{\varepsilon,T}^2(\hat{\rho}_T)\} \\
		&=T^{-1}\sum_{t=1}^{T}z_t(\rho_T)\{z_t^2(\rho_T)-\hat{\sigma}_{\varepsilon,T}^2(\hat{\rho}_T)\} - 3T(\hat{\rho}_T-\rho_T)T^{-2}\sum_{t=1}^{T}y_{t-1}z_t^2(\rho_T) \\ &+ 3T^2(\hat{\rho}_T-\rho_T)^2T^{-3}\sum_{t=1}^{T}y_{t-1}^2z_t(\rho_T)
		-T^3(\hat{\rho}_T-\rho_T)^3T^{-4}\sum_{t=1}^{T}y_{t-1}^3 - \hat{\sigma}_{\varepsilon,T}^2(\hat{\rho}_T)T(\hat{\rho}_T-\rho_T)T^{-2}\sum_{t=1}^{T}y_{t-1} \\
		&=T^{-1}\sum_{t=1}^{T}z_t(\rho_T)\{z_t^2(\rho_T)-\hat{\sigma}_{\varepsilon,T}^2(\rho_T)\} + o_p(1) \stackrel{p}{\to}\mathbb{E}[\varepsilon_t^3],
	\end{align*}
	in view of the last line of the proof of Lemma \ref{lemappB:consistency}(c).
\end{proof}

\clearpage
\begin{table} \caption{Significance levels of the equal-tailed confidence interval for $a$}
	\centering
	\begin{threeparttable}
	\renewcommand{\arraystretch}{1.8}	\begin{tabular}{|c|@{\hskip 13pt}c@{\hskip 13pt}|@{\hskip 9pt}|c|@{\hskip 13pt}c@{\hskip 13pt}|}
		\hline $|\psi|\in$&$\alpha_1$&$|\psi|\in$&$\alpha_1$ \\ 
		\hline $[0,0.05)$&0.09&$(0.55,0.6]$&0.42 \\
		\hline $[0.05,0.1)$&0.17&$(0.6,0.65]$&0.38 \\
		\hline $[0.1,0.15)$&0.23&$(0.65,0.7]$&0.35 \\
		\hline $[0.15,0.2)$&0.31&$(0.7,0.75]$&0.31 \\
		\hline $[0.2,0.25)$&0.38&$(0.75,0.8]$&0.26 \\
		\hline $[0.25,0.3)$&0.45&$(0.8,0.85]$&0.22 \\
		\hline $[0.3,0.4]$&0.5&$(0.85,0.9]$&0.17 \\
		\hline $(0.4,0.45]$&0.48&$(0.9,0.95]$&0.11 \\
		\hline $(0.45,0.5]$&0.46&$(0.95,1)$&0.05 \\
		\hline $(0.5,0.55]$&0.44&\multicolumn{2}{|c|}{} \\
		\hline 
		\end{tabular}
		\begin{tablenotes}
			\footnotesize
			\item[a.] Entries in the second and fourth columns are the significance level of the equal-tailed confidence interval for $\rho_T$ when the significance levels of the Bonferroni-Wald test ($\tilde{\alpha}$) and individual modified Wald test ($\alpha_2$) are 0.05.
			\item[b.] To determine the $\alpha_1$ value for the interval $(0.95,1)$, we actually computed type 1 errors over $(0.95,0.99]$.
		\end{tablenotes}
	\end{threeparttable}	\label{tab:significance_ci}
\end{table} 

\clearpage
\begin{table} \caption{Data description of the series used in Section 5}
	\centering
	\begin{threeparttable}
		\renewcommand{\arraystretch}{1.8}	\begin{tabular}{|c|c|c|c|}
			\hline Series&Frequency&Sample period&$T$ \\
			\hline CPI (1982-1984=100)&Monthly&Jan. 1913-Dec. 2019&1284 \\
			\hline Real GDP (2012 chained)&Quarterly&Q1 1947-Q4 2019&292 \\
			\hline Industrial production (2017=100)&Monthly&Jan. 1919-Dec. 2019&1212 \\
			\hline M2&Weekly&Nov. 3, 1980-Dec. 30, 2019&2044 \\
			\hline S\&P 500&Daily&Dec. 3, 2012-Dec. 31, 2019&1782 \\
			\hline 3 month T-bill rate&Daily&Dec. 3, 2012-Dec. 31, 2019&1770 \\
			\hline Unemployment rate&Monthly&Jan. 1948-Dec. 2019&864 \\
			\hline 
		\end{tabular}
	\end{threeparttable}	\label{tab:empirical_description}
\end{table} 

\clearpage
\begin{table} \caption{Estimation and testing results from the empirical analysis in Section 5}
	\centering
	\begin{threeparttable}
		\renewcommand{\arraystretch}{1.8}	\begin{tabular}{|c|c|c|c|c|c|c|c|}
			\hline Series&$T$&$\hat{\rho}_T$&$\hat{\psi}(\hat{\rho_T})$&$\hat{\omega}^2_{\mathrm{HT}}$&Bonf-Wald&LN ($\widetilde{G}_{T,1}$)&HT ($\Theta_{T,R}$) \\
			\hline CPI&1284&0.999&0.211&$4.12\times10^{-4}$&Yes&Yes&- \\
			\hline GDP&292&0.994&-0.065&$1.16\times10^{-3}$&-&-&- \\
			\hline Industrial production&1212&0.998&0.054&$3.98\times10^{-3}$&Yes&Yes&- \\
			\hline M2&2044&0.995&0.066&$3.12\times10^{-4}$&Yes&-&- \\
			\hline S\&P 500&1782&0.984&-0.258&$9.38\times10^{-3}$&Yes&Yes&- \\
			\hline T-bill rate&1770&0.999&0.108&$-4.24\times10^{-5}$&-&-&- \\
			\hline Unemployment rate&864&0.993&0.142&$4.03\times10^{-4}$&Yes&-&- \\
			\hline 
		\end{tabular}
		\begin{tablenotes}
			\footnotesize
			\item For entries in the last three columns, ``Yes" (``-") signifies the rejection (nonrejection) of the null $H_0:\omega^2=0$ by the corresponding test with 5\% significance level.
		\end{tablenotes}
	\end{threeparttable}	\label{tab:empirical_results}
\end{table} 

\clearpage
\begin{figure}
	\includegraphics[width=14cm]{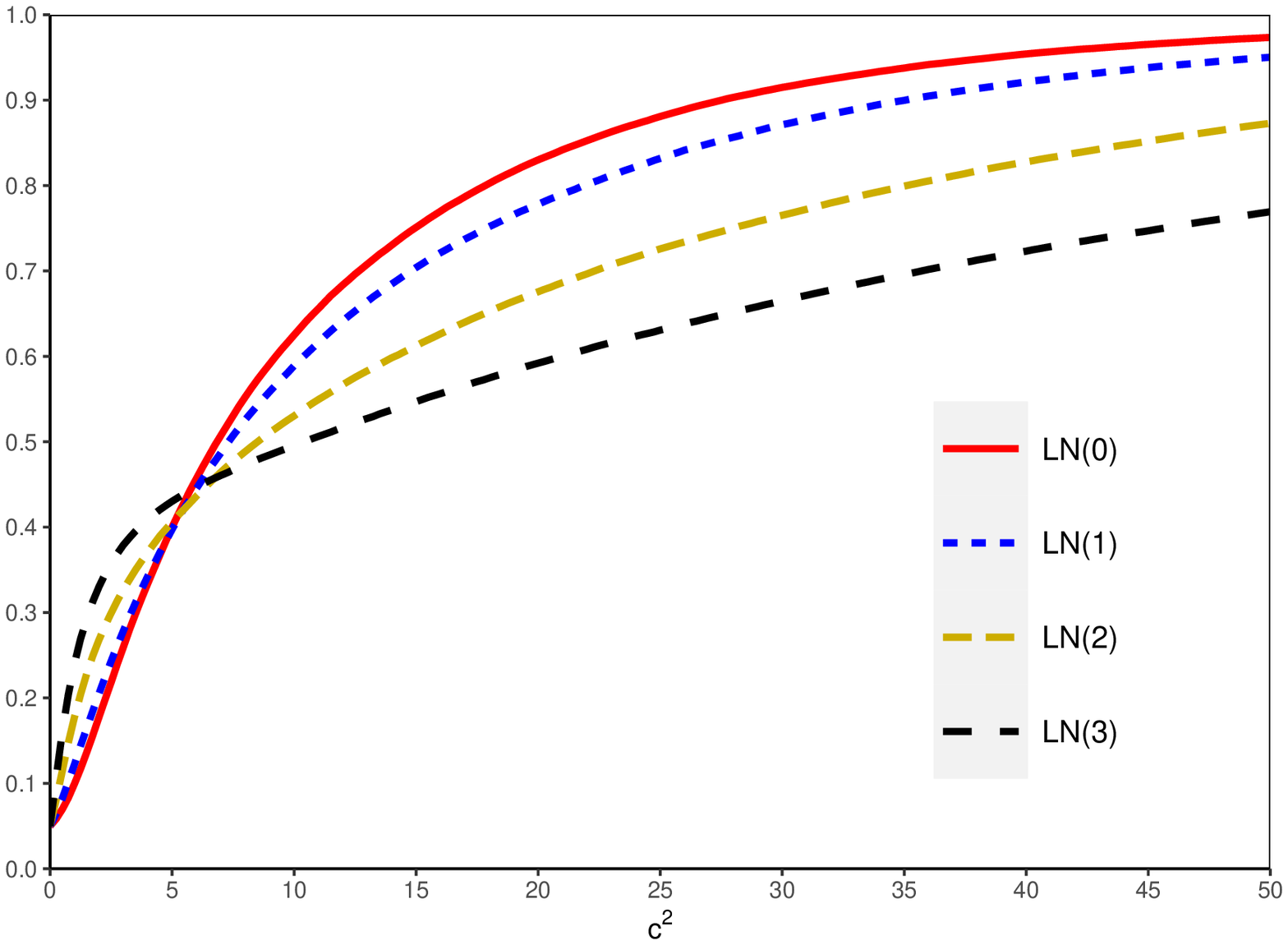}
	\caption{Asymptotic power functions of the Lee-Nagakura test for $q=0,1,2,3$}
	\medskip
	\centering
	\begin{minipage}{10cm}
		{(The values in the parentheses denote the values of $q$.) \par}
	\end{minipage}
	\label{fig:power_asym_Lee}
\end{figure}

\clearpage
\begin{sidewaysfigure}
	\centering
	\begin{subfigure}{0.47\textwidth}
	\includegraphics[width=\textwidth]{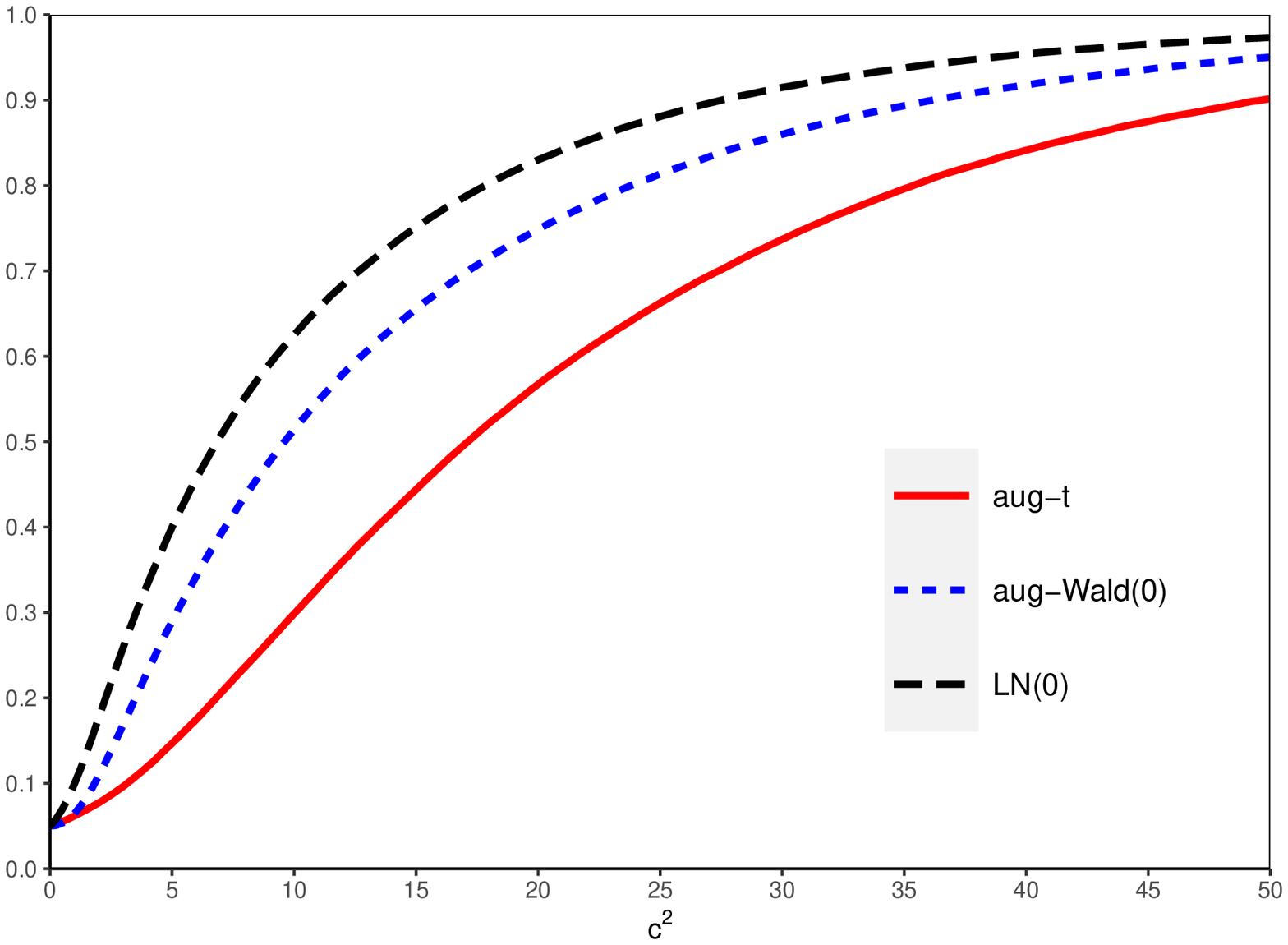}
	\caption{$q=0$}
	\end{subfigure} \quad
	\begin{subfigure}{0.47\textwidth}
	\includegraphics[width=\textwidth]{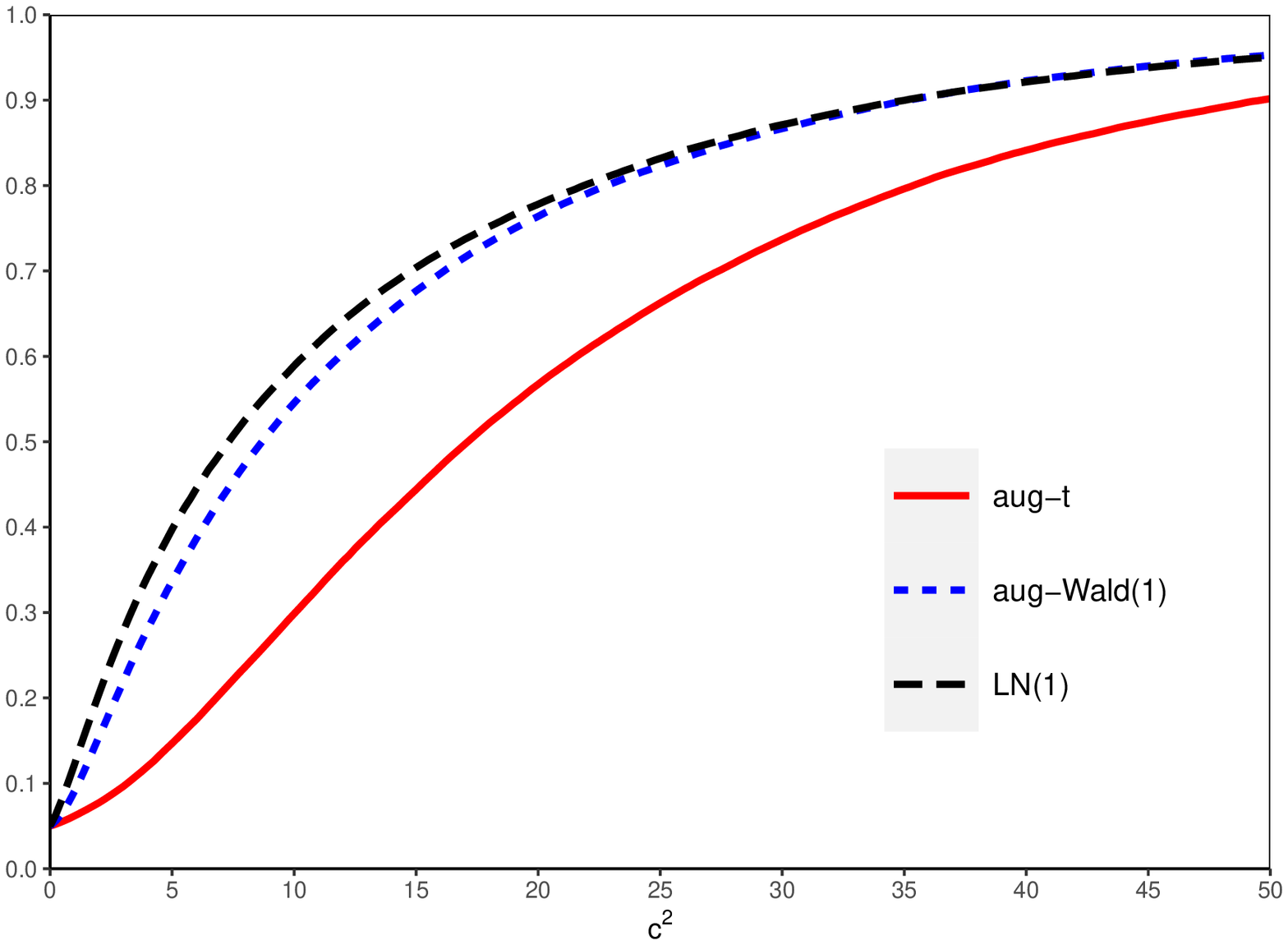}
	\caption{$q=1$}
	\end{subfigure} \quad
	\begin{subfigure}{0.47\textwidth}
	\includegraphics[width=\textwidth]{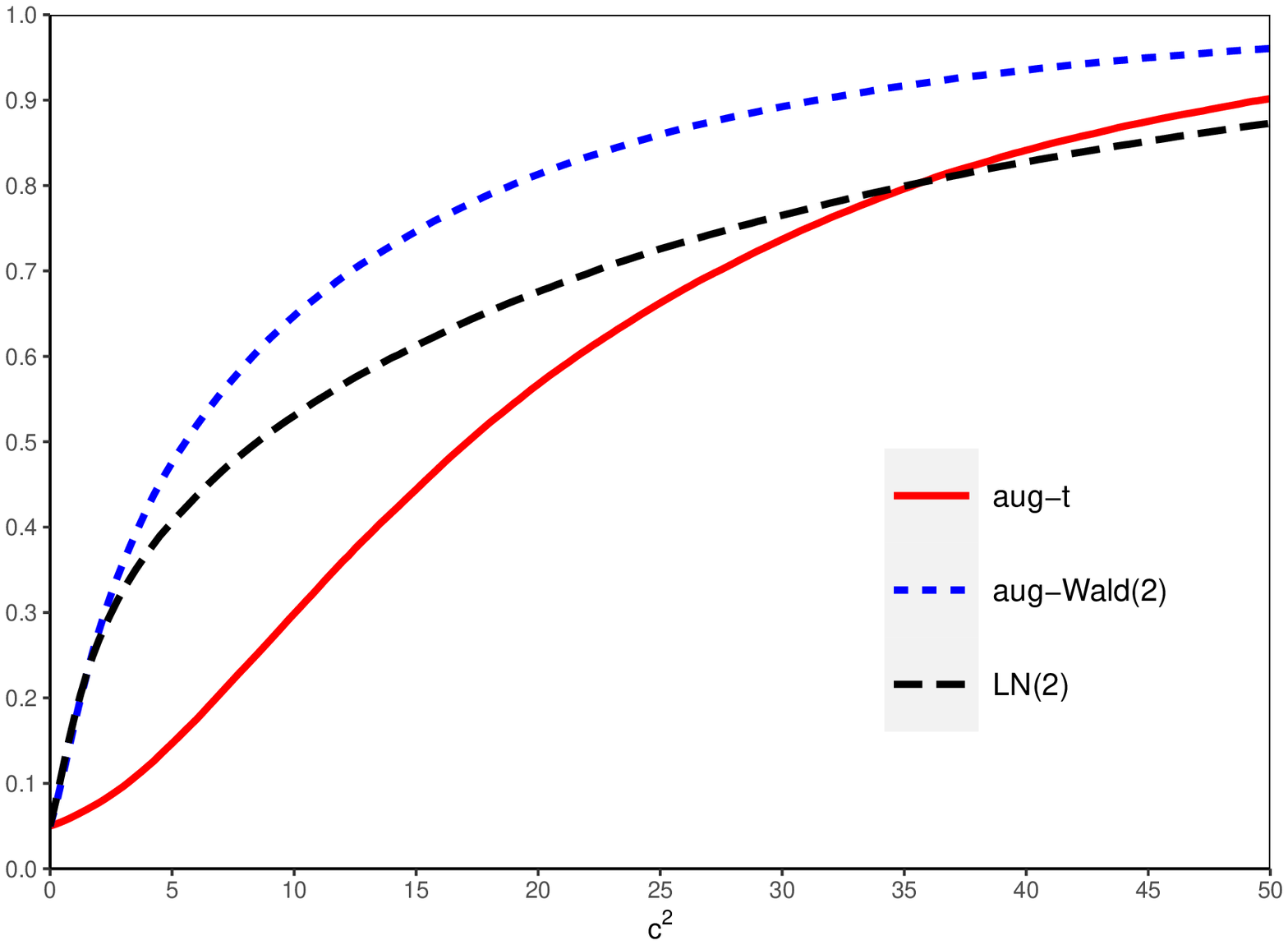}
	\caption{$q=2$}
	\end{subfigure} \quad
	\begin{subfigure}{0.47\textwidth}
	\includegraphics[width=\textwidth]{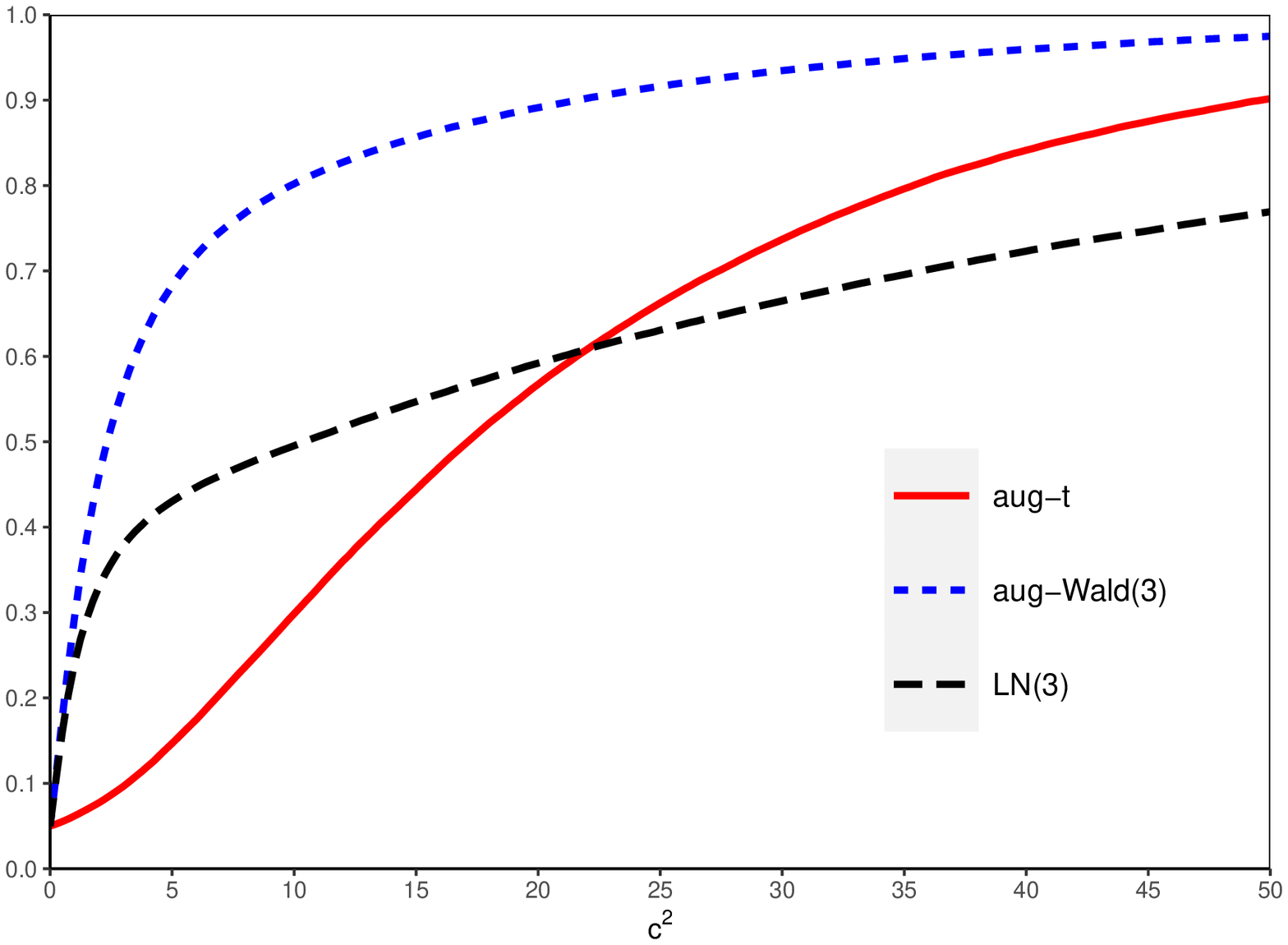}
	\caption{$q=3$}
	\end{subfigure} \quad
\caption{Asymptotic power functions under $a=0$}
\label{fig:power_ln_t_wald}
\end{sidewaysfigure}

\clearpage
\begin{sidewaysfigure}
	\centering
	\begin{subfigure}{0.47\textwidth}
		\includegraphics[width=\textwidth]{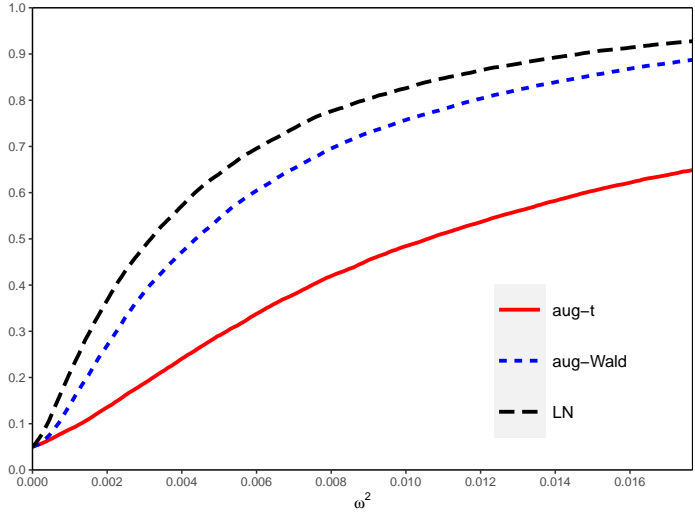}
		\caption{$\mathrm{Corr}(\varepsilon_t,v_t)=0$}
	\end{subfigure} \quad
	\begin{subfigure}{0.47\textwidth}
		\includegraphics[width=\textwidth]{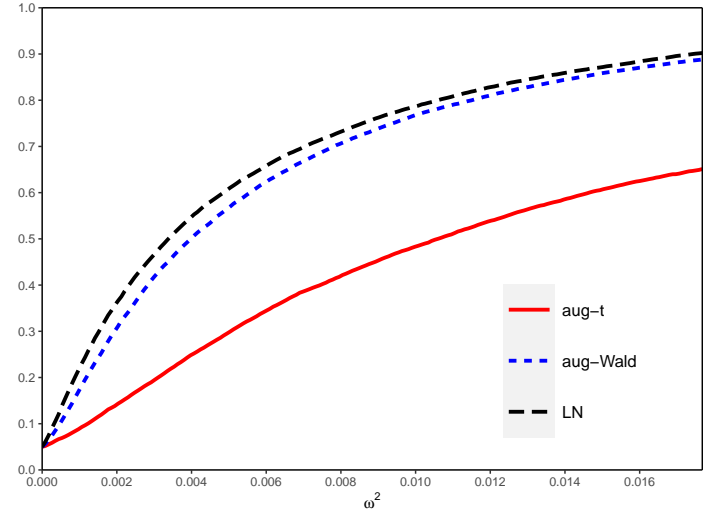}
		\caption{$\mathrm{Corr}(\varepsilon_t,v_t)=0.25$}
	\end{subfigure} \quad
	\begin{subfigure}{0.47\textwidth}
		\includegraphics[width=\textwidth]{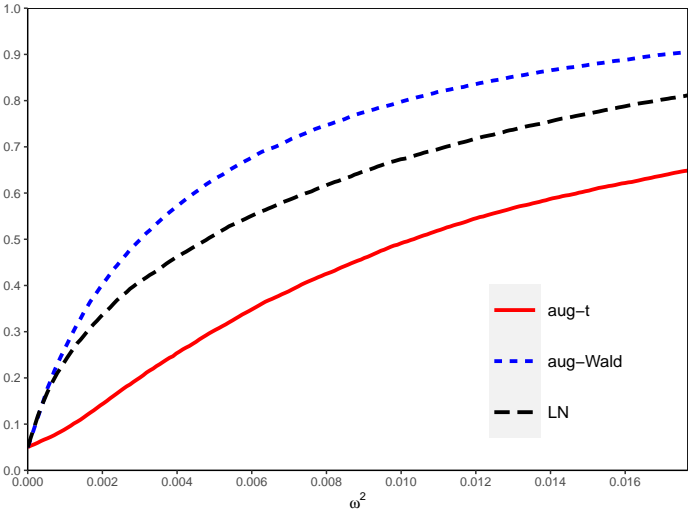}
		\caption{$\mathrm{Corr}(\varepsilon_t,v_t)=0.5$}
	\end{subfigure} \quad
	\begin{subfigure}{0.47\textwidth}
		\includegraphics[width=\textwidth]{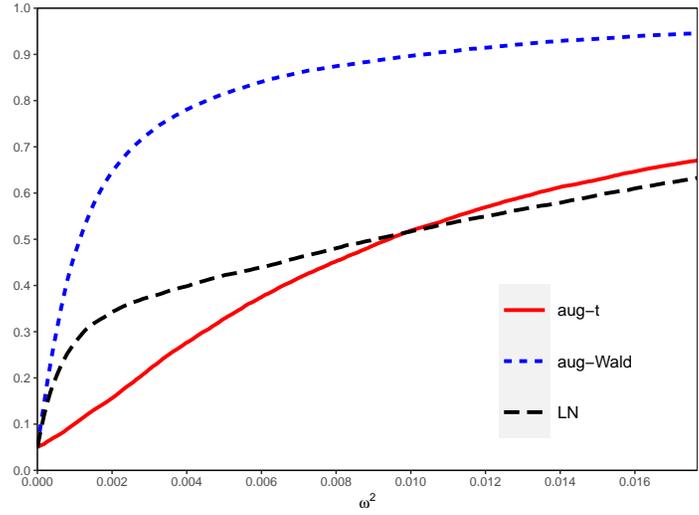}
		\caption{$\mathrm{Corr}(\varepsilon_t,v_t)=0.75$}
	\end{subfigure} \quad
	\caption{Size-adjusted power functions with $T=200$ and $\rho_T=1$}
	\label{fig:power_ln_t_wald_200}
\end{sidewaysfigure}

\clearpage
\begin{sidewaysfigure}
	\centering
	\begin{subfigure}{0.47\textwidth}
		\includegraphics[width=\textwidth]{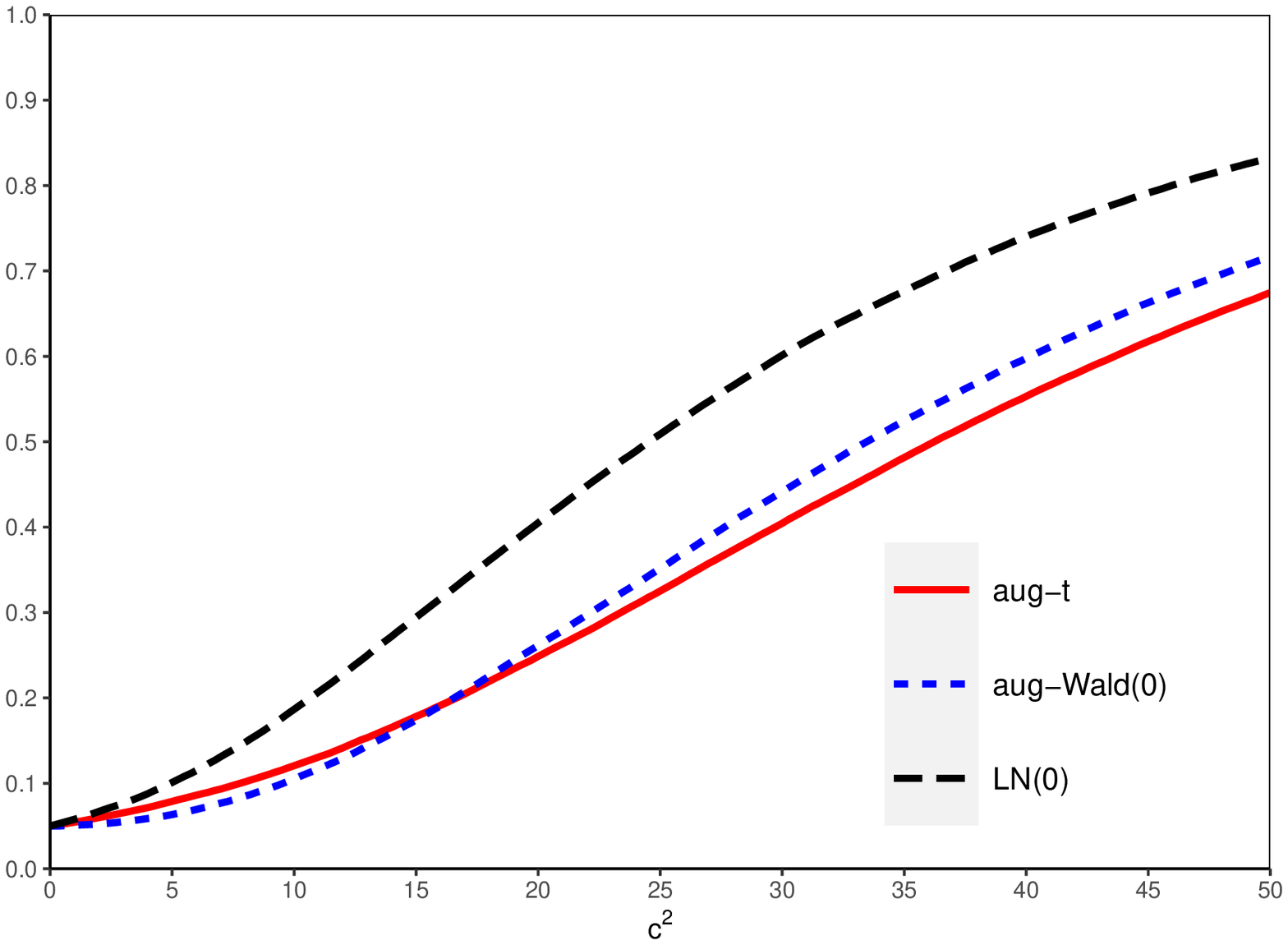}
		\caption{$q=0$}
	\end{subfigure} \quad
	\begin{subfigure}{0.47\textwidth}
		\includegraphics[width=\textwidth]{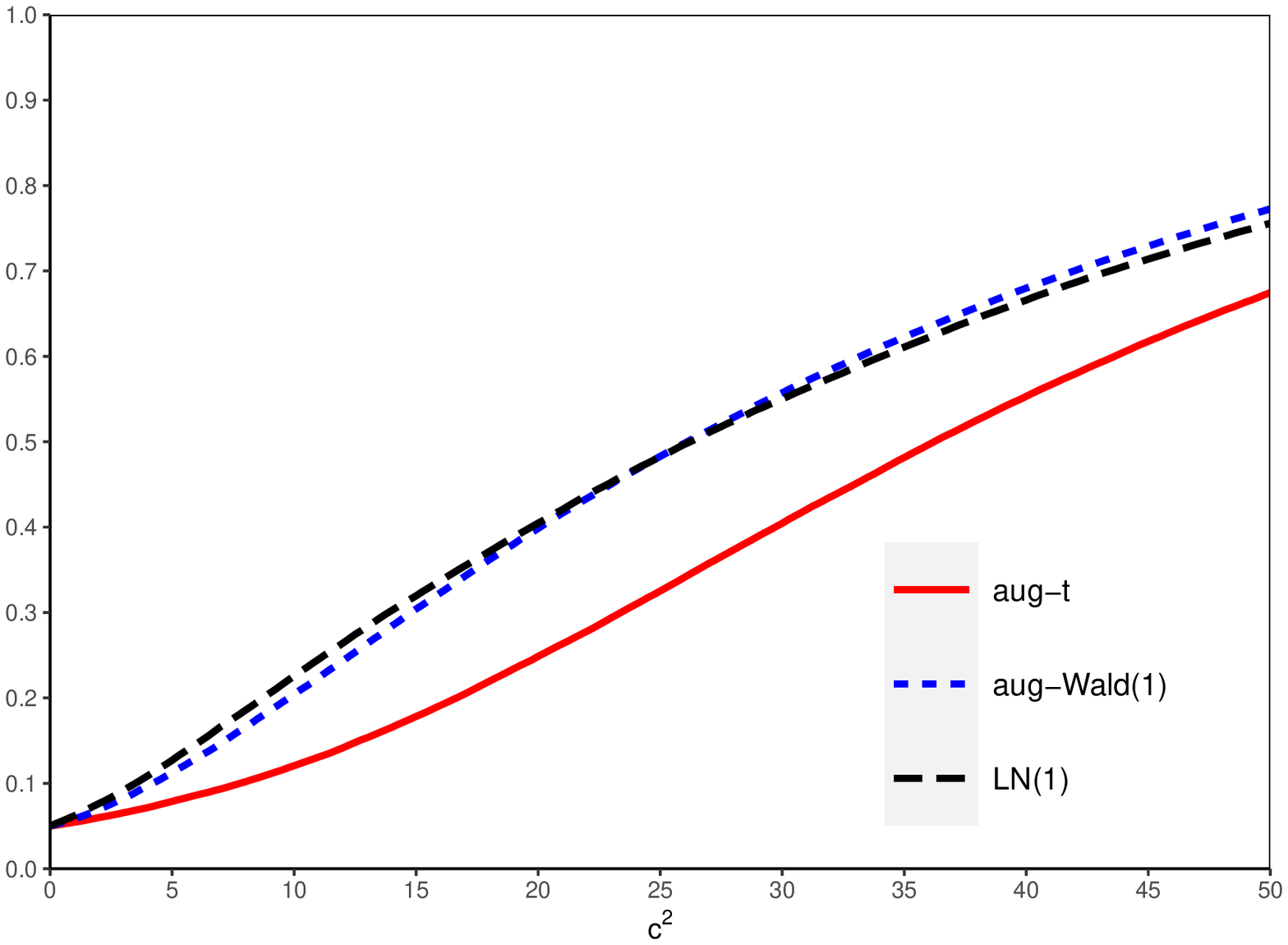}
		\caption{$q=1$}
	\end{subfigure} \quad
	\begin{subfigure}{0.47\textwidth}
		\includegraphics[width=\textwidth]{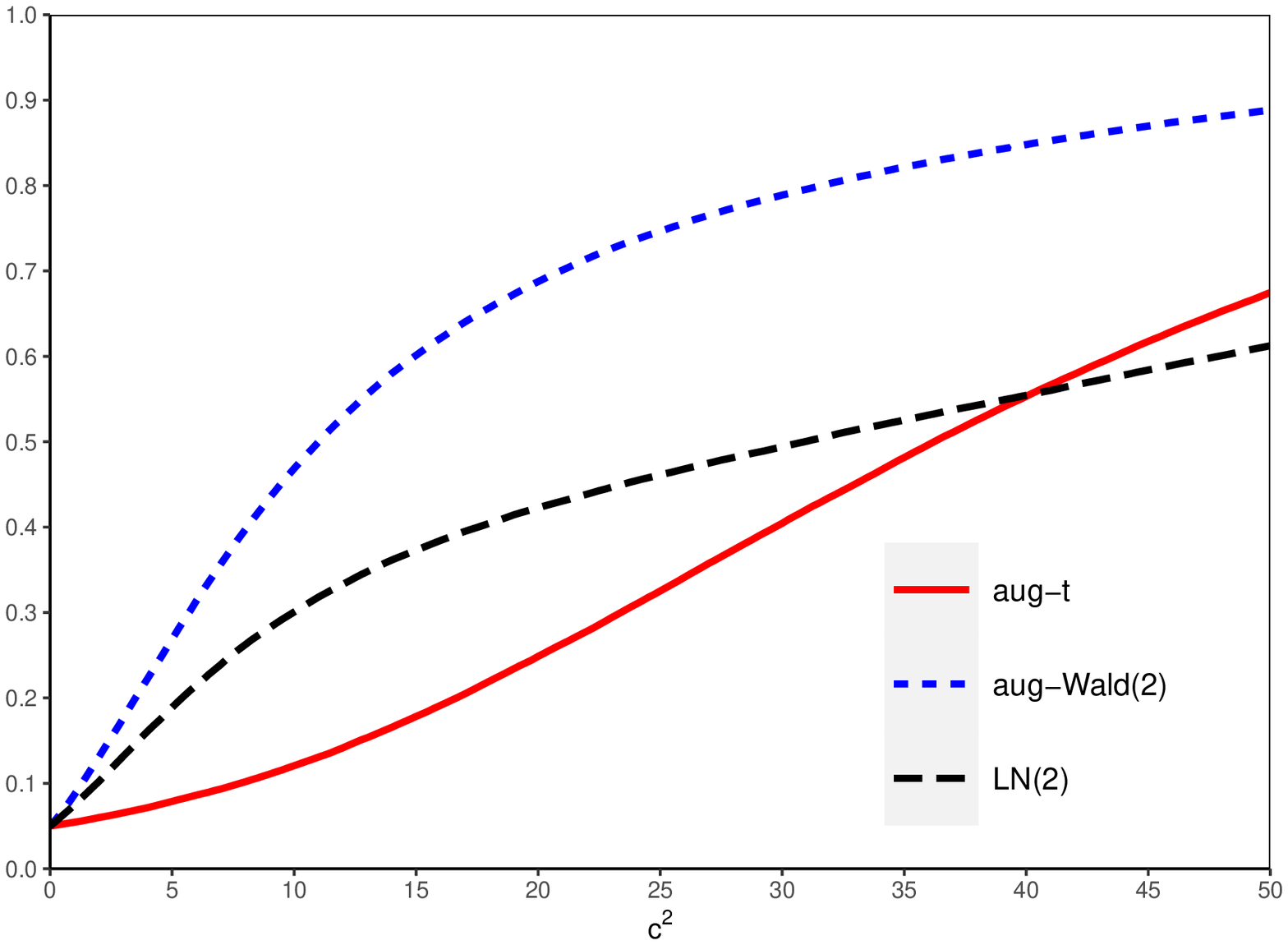}
		\caption{$q=2$}
	\end{subfigure} \quad
	\begin{subfigure}{0.47\textwidth}
		\includegraphics[width=\textwidth]{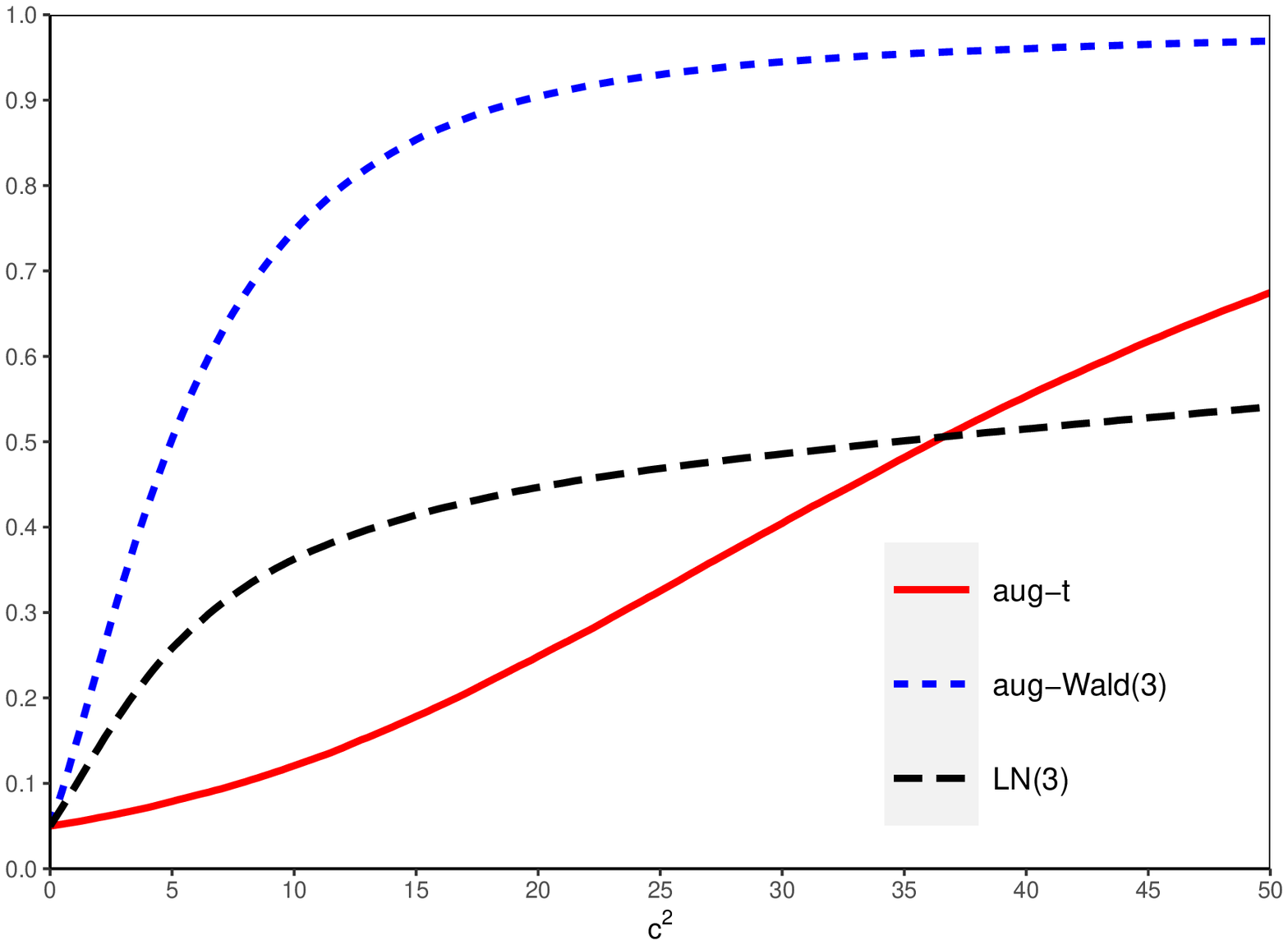}
		\caption{$q=3$}
	\end{subfigure} \quad
	\caption{Asymptotic power functions under $a=-5$}
	\label{fig:power_ln_t_wald_general_5}
\end{sidewaysfigure}

\clearpage
\begin{sidewaysfigure}
	\centering
	\begin{subfigure}{0.47\textwidth}
		\includegraphics[width=\textwidth]{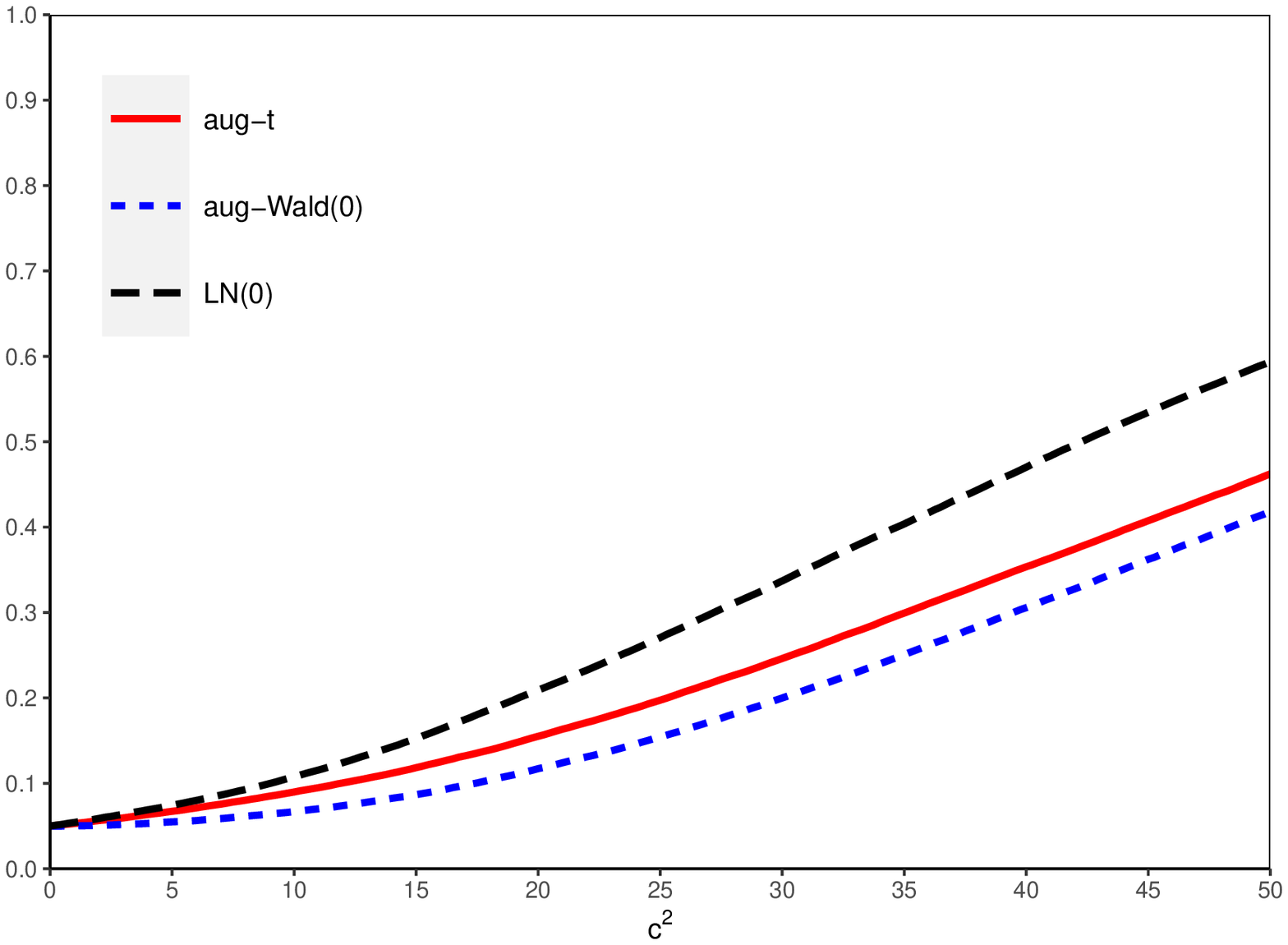}
		\caption{$q=0$}
	\end{subfigure} \quad
	\begin{subfigure}{0.47\textwidth}
		\includegraphics[width=\textwidth]{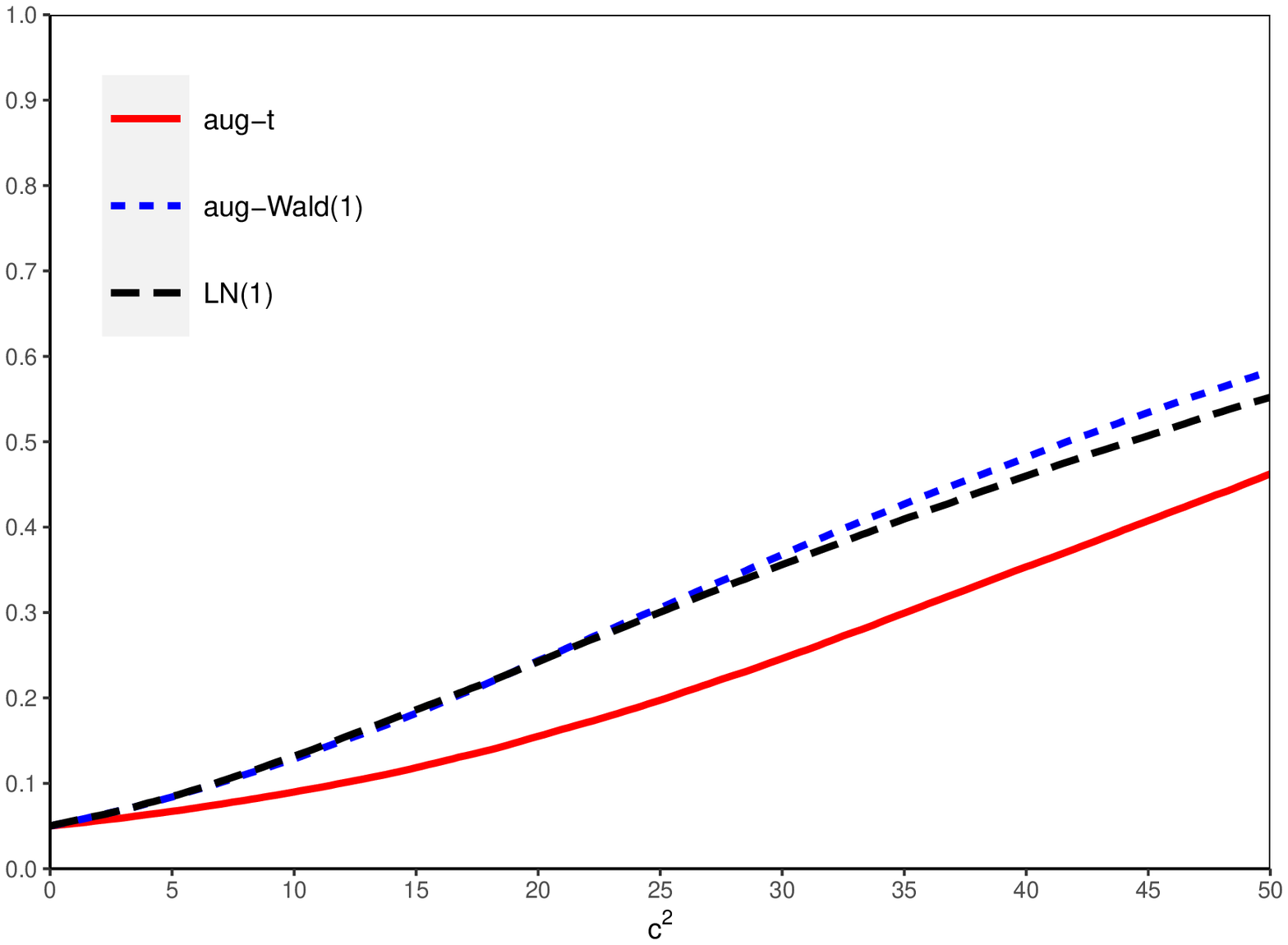}
		\caption{$q=1$}
	\end{subfigure} \quad
	\begin{subfigure}{0.47\textwidth}
		\includegraphics[width=\textwidth]{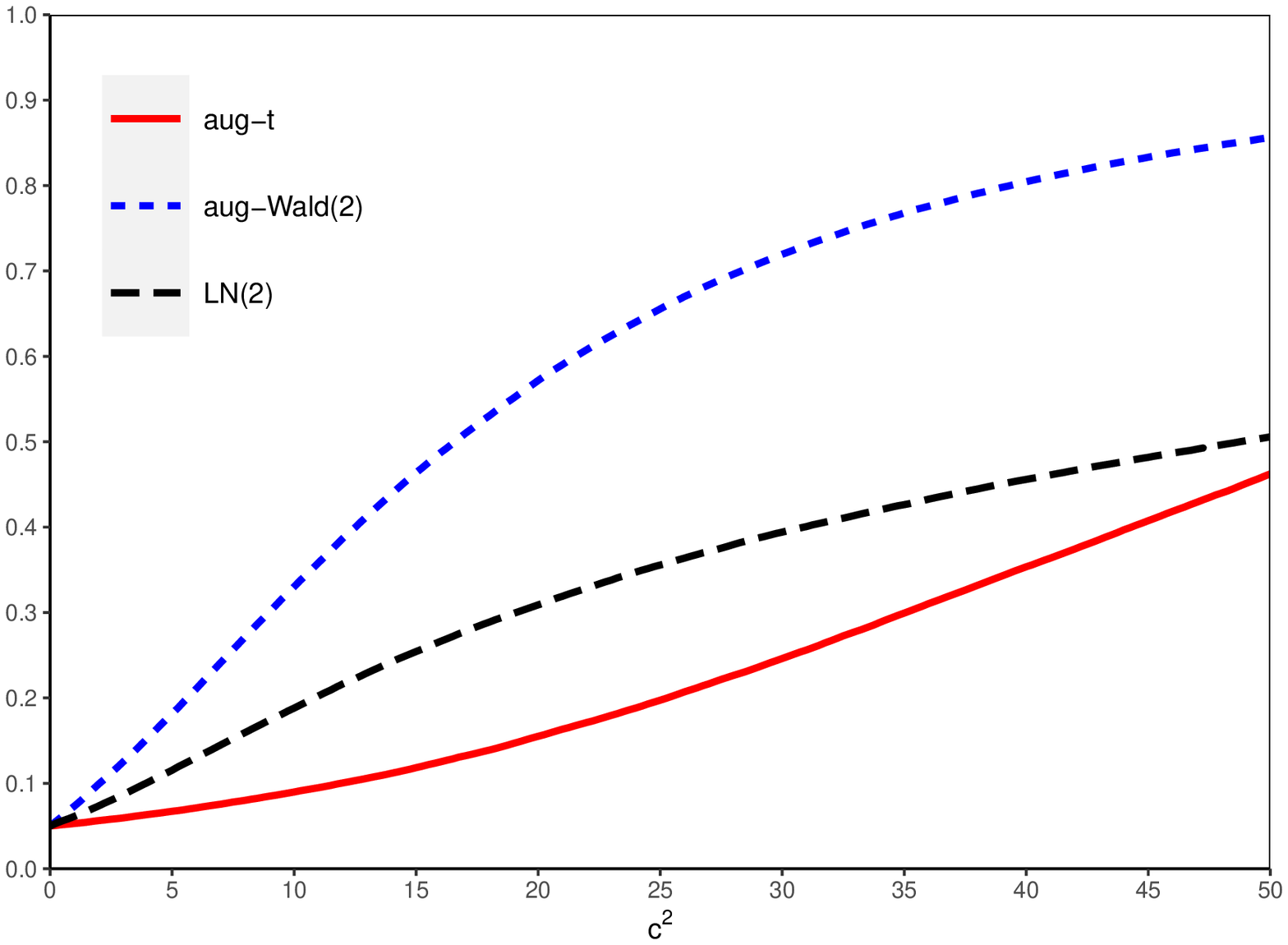}
		\caption{$q=2$}
	\end{subfigure} \quad
	\begin{subfigure}{0.47\textwidth}
		\includegraphics[width=\textwidth]{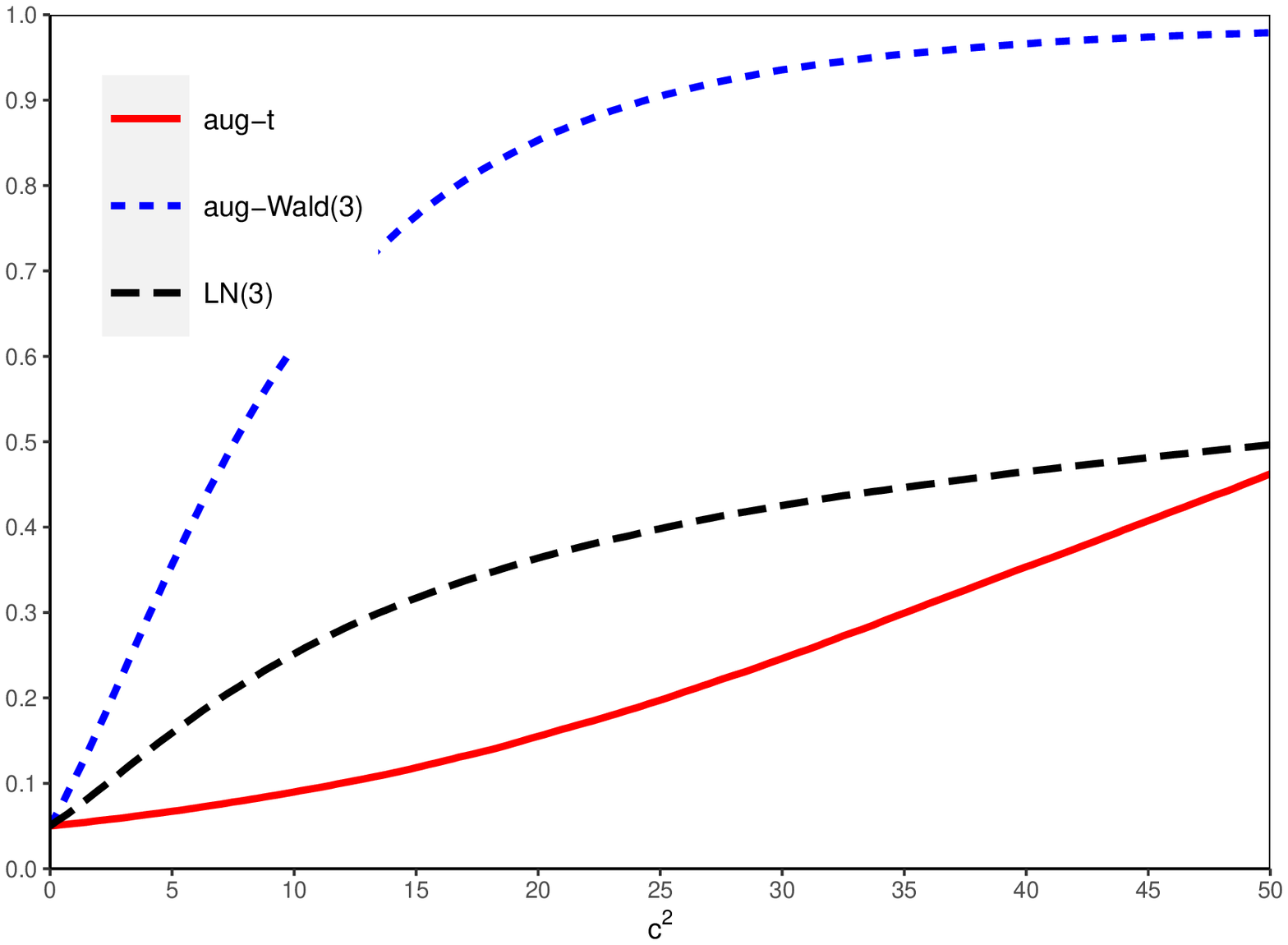}
		\caption{$q=3$}
	\end{subfigure} \quad
	\caption{Asymptotic power functions under $a=-10$}
	\label{fig:power_ln_t_wald_general_10}
\end{sidewaysfigure}

\clearpage
\begin{sidewaysfigure}
	\centering
	\begin{subfigure}{0.47\textwidth}
		\includegraphics[width=\textwidth]{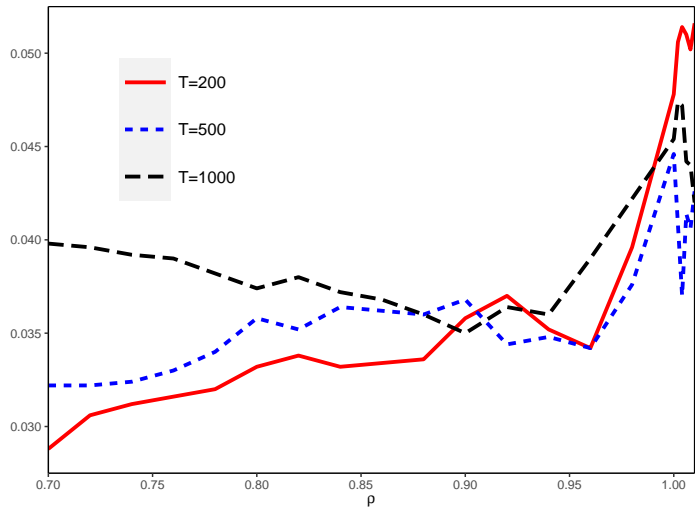}
		\caption{$\varepsilon_t \sim \mathrm{i.i.d} \ N(0,1)$}
	\end{subfigure} \quad
	\begin{subfigure}{0.47\textwidth}
		\includegraphics[width=\textwidth]{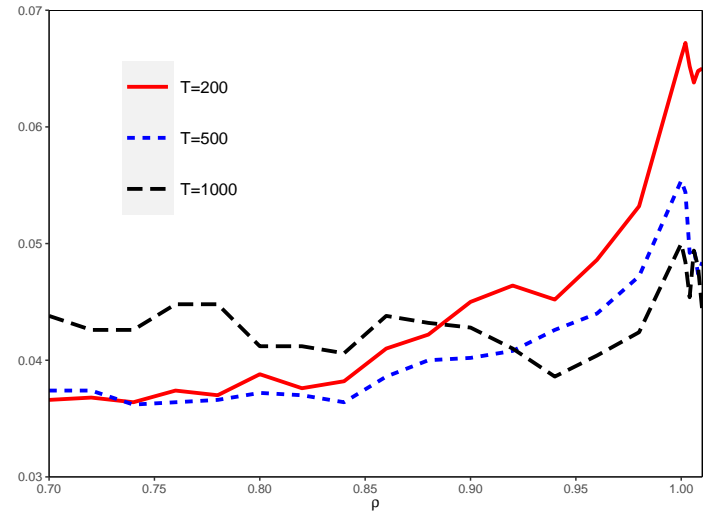}
		\caption{$\varepsilon_t \sim \mathrm{i.i.d} \ (\chi^2(10)-10)/\sqrt(20)$}
	\end{subfigure} \quad
	\begin{subfigure}{0.47\textwidth}
		\includegraphics[width=\textwidth]{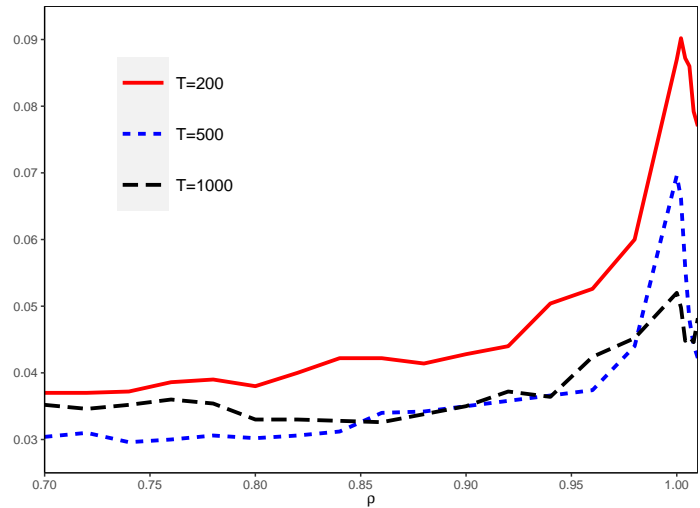}
		\caption{$\varepsilon_t \sim \mathrm{i.i.d} \ (\chi^2(1)-1)/\sqrt(2)$}
	\end{subfigure} \quad
	\caption{Finite-sample rejection rates of the Bonferroni-Wald test under the null with significance level 0.05}
	\label{fig:sizes_fs}
\end{sidewaysfigure}

\clearpage
\begin{sidewaysfigure}
	\centering
	\begin{subfigure}{0.47\textwidth}
		\includegraphics[width=\textwidth]{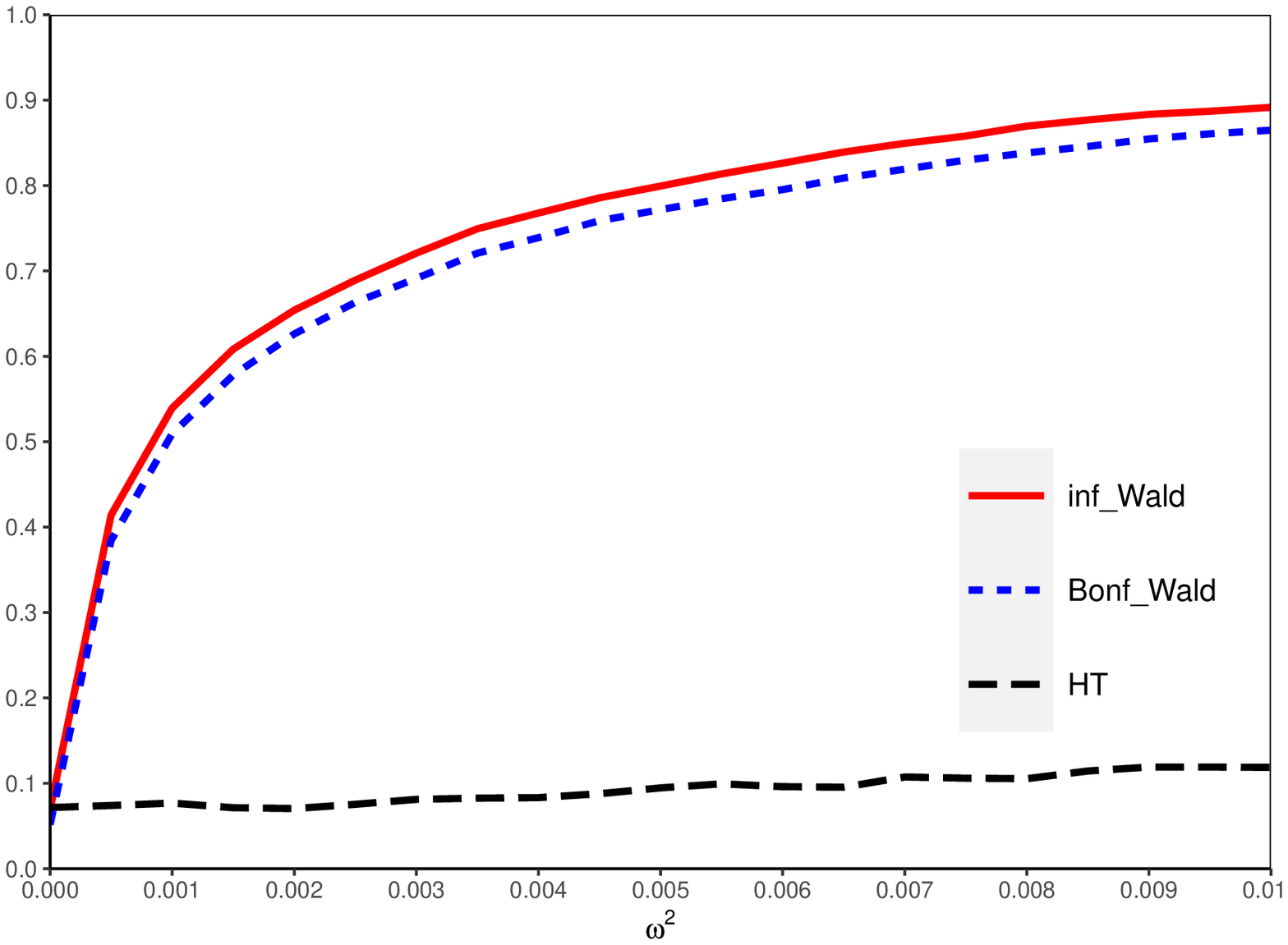}
		\caption{$\mathrm{Corr}(\varepsilon_t,v_t)=0$}
	\end{subfigure} \quad
	\begin{subfigure}{0.47\textwidth}
		\includegraphics[width=\textwidth]{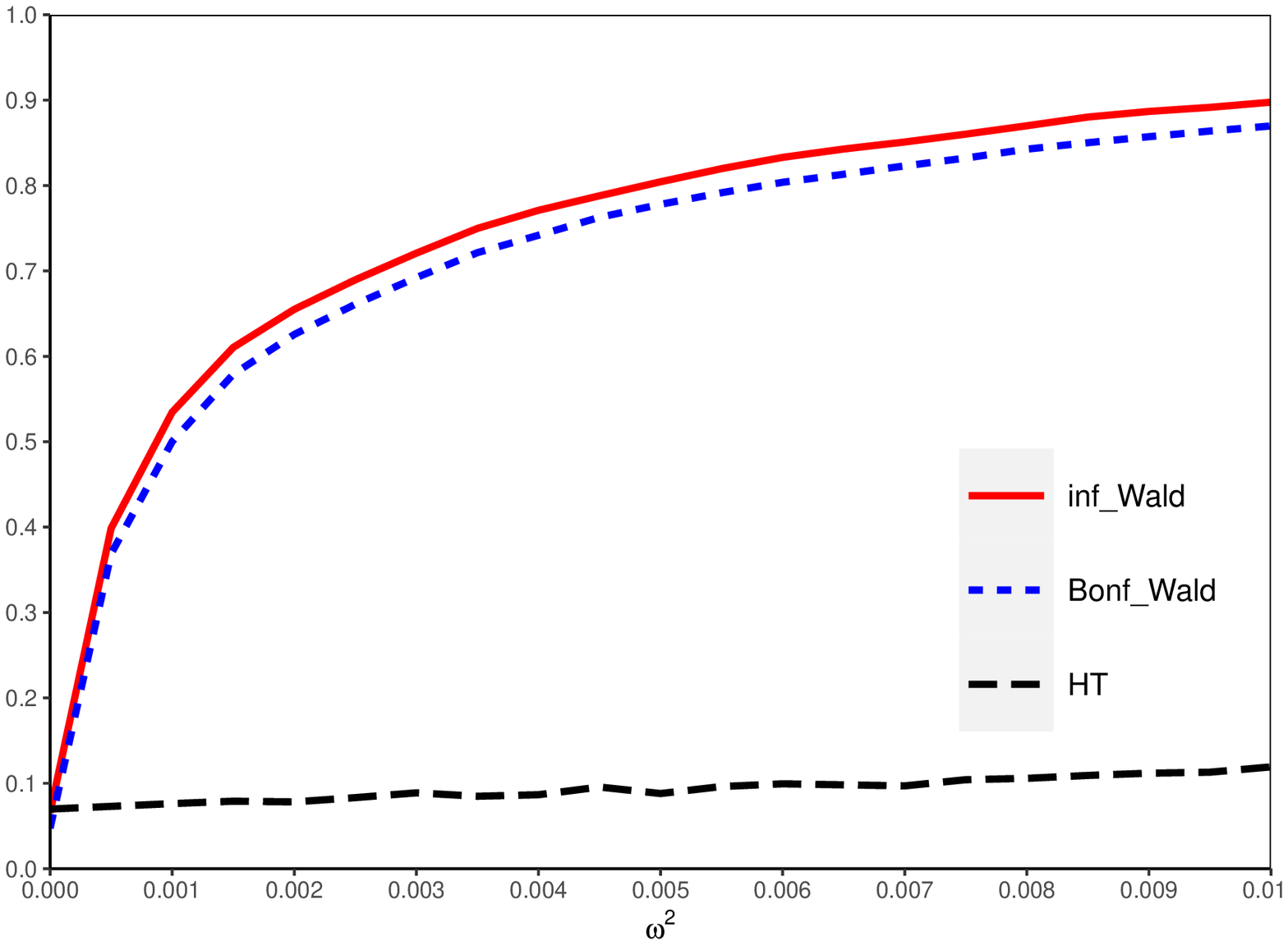}
		\caption{$\mathrm{Corr}(\varepsilon_t,v_t)=0.25$}
	\end{subfigure} \quad
	\begin{subfigure}{0.47\textwidth}
		\includegraphics[width=\textwidth]{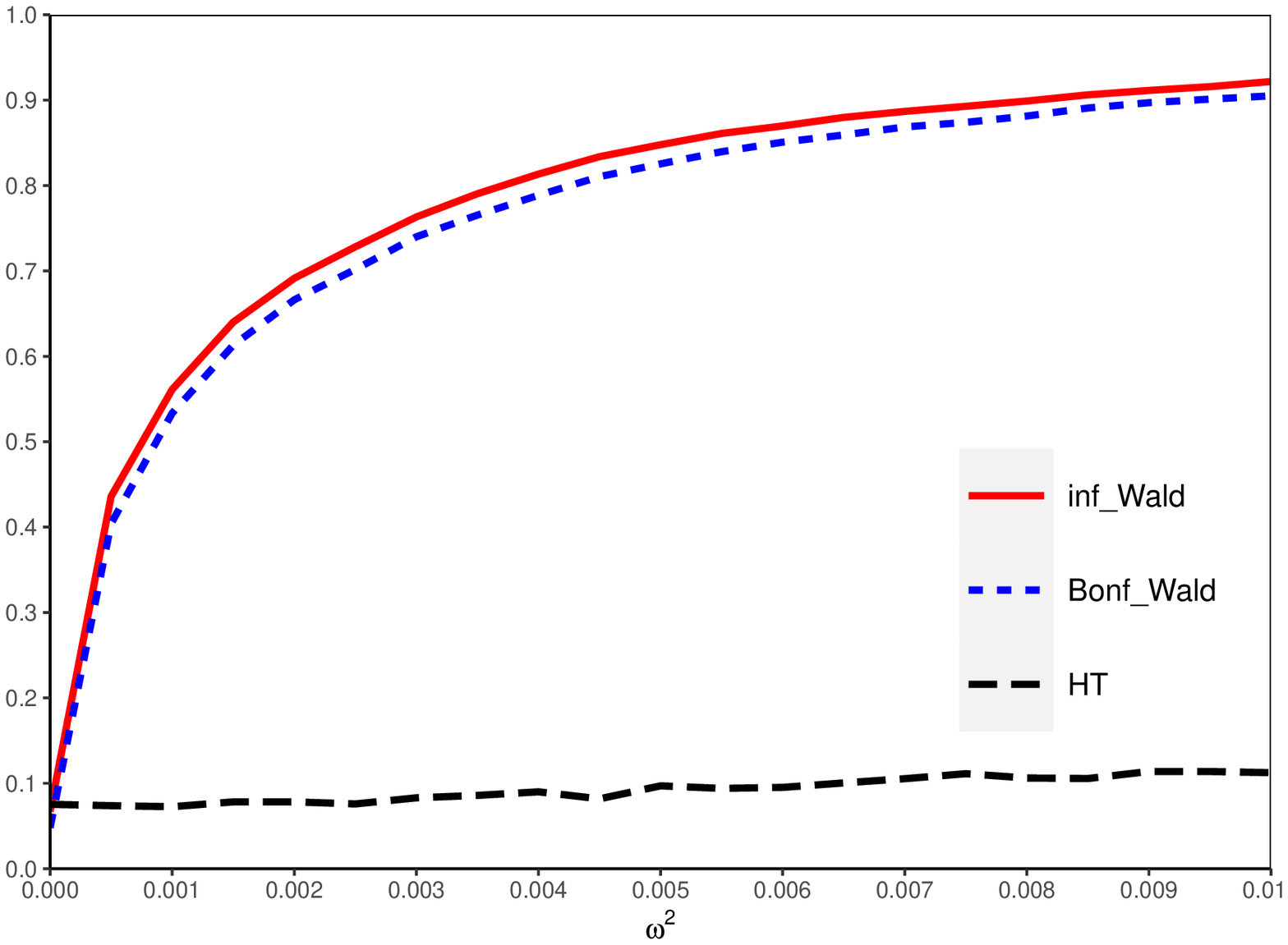}
		\caption{$\mathrm{Corr}(\varepsilon_t,v_t)=0.5$}
	\end{subfigure} \quad
	\begin{subfigure}{0.47\textwidth}
		\includegraphics[width=\textwidth]{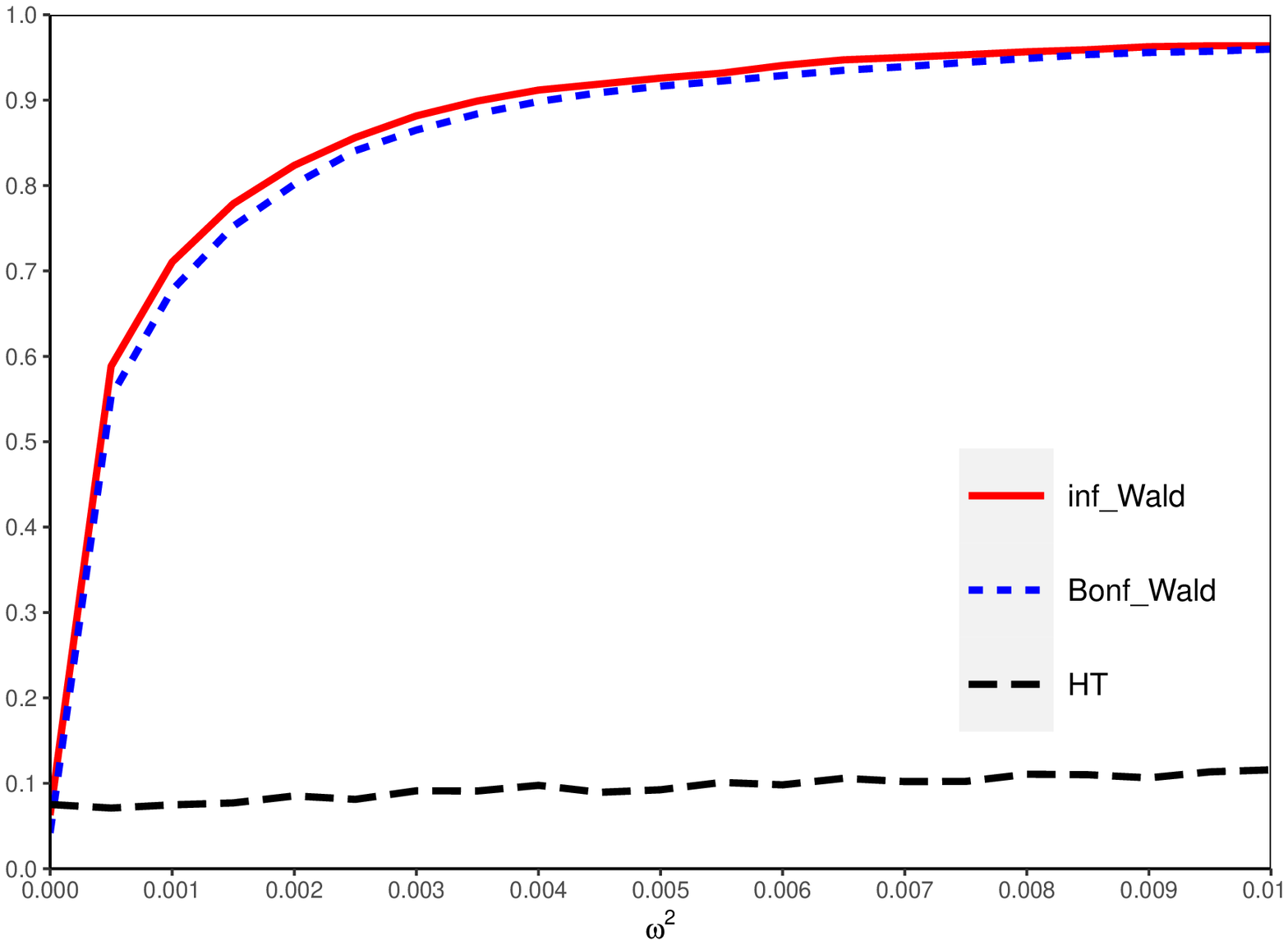}
		\caption{$\mathrm{Corr}(\varepsilon_t,v_t)=0.75$}
	\end{subfigure} \quad
	\caption{Finite-sample power functions under $\rho=1.01$}
	\label{fig:powers_fs_101}
\end{sidewaysfigure}

\clearpage
\begin{sidewaysfigure}
	\centering
	\begin{subfigure}{0.47\textwidth}
		\includegraphics[width=\textwidth]{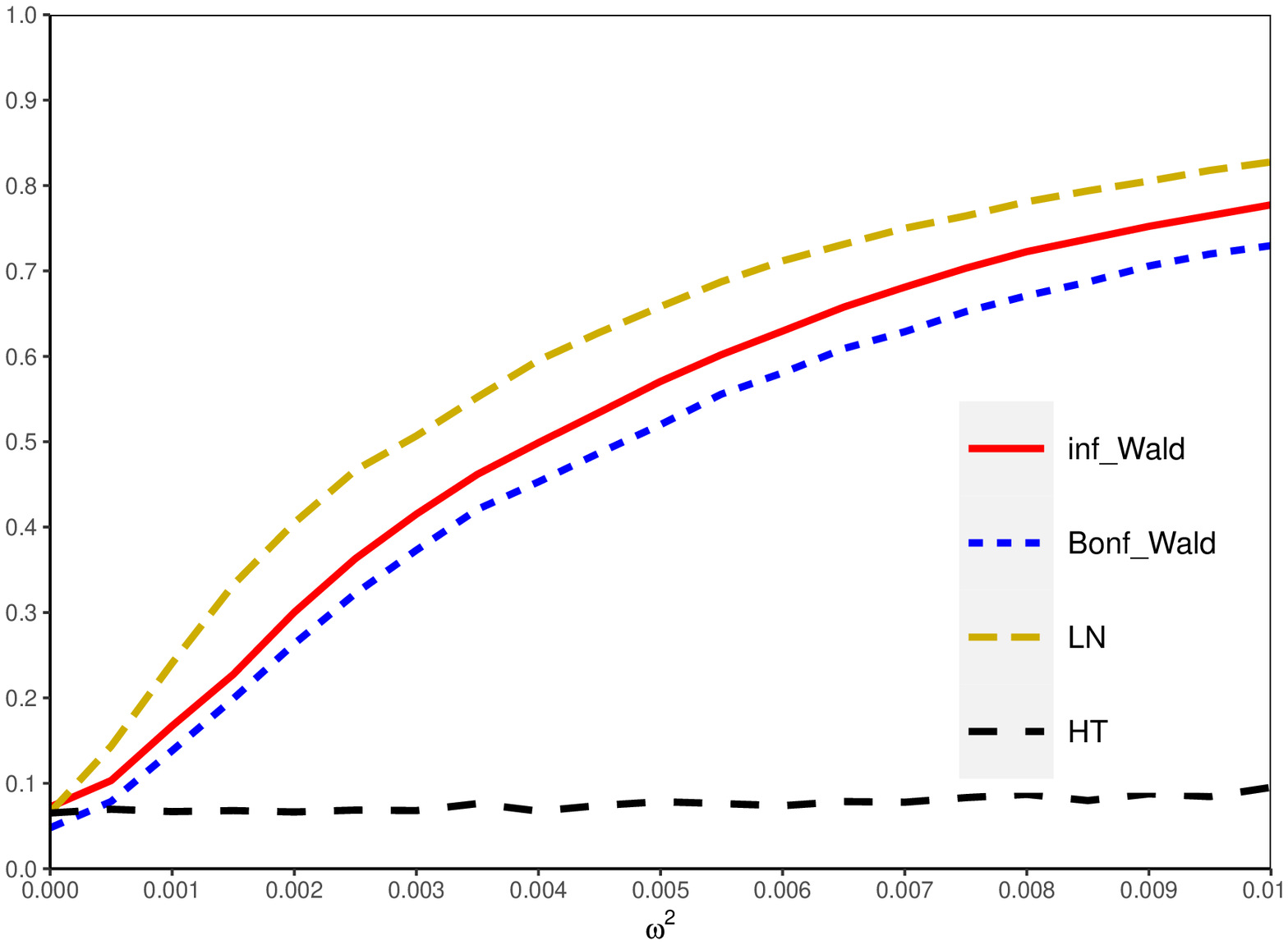}
		\caption{$\mathrm{Corr}(\varepsilon_t,v_t)=0$}
	\end{subfigure} \quad
	\begin{subfigure}{0.47\textwidth}
		\includegraphics[width=\textwidth]{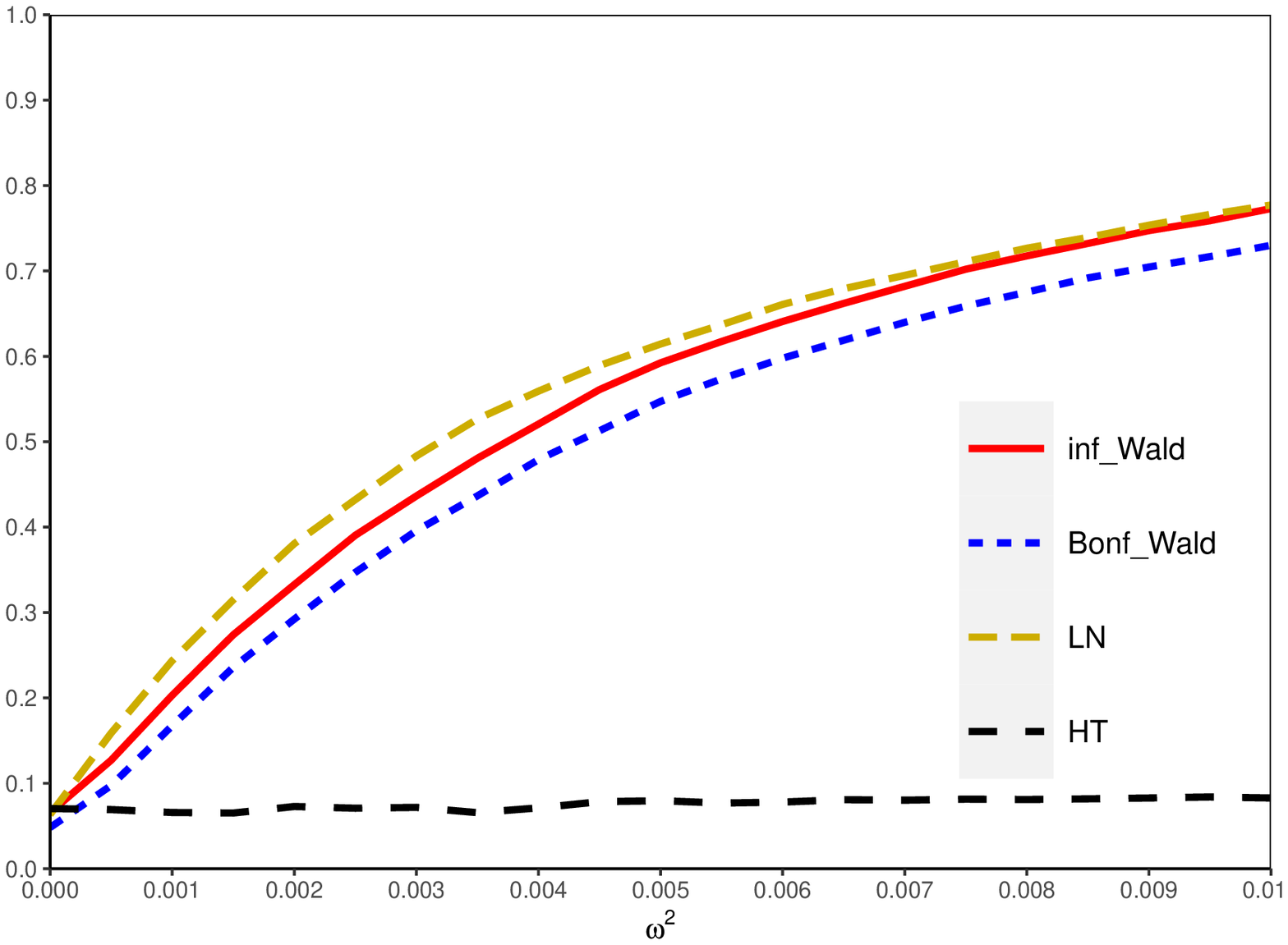}
		\caption{$\mathrm{Corr}(\varepsilon_t,v_t)=0.25$}
	\end{subfigure} \quad
	\begin{subfigure}{0.47\textwidth}
		\includegraphics[width=\textwidth]{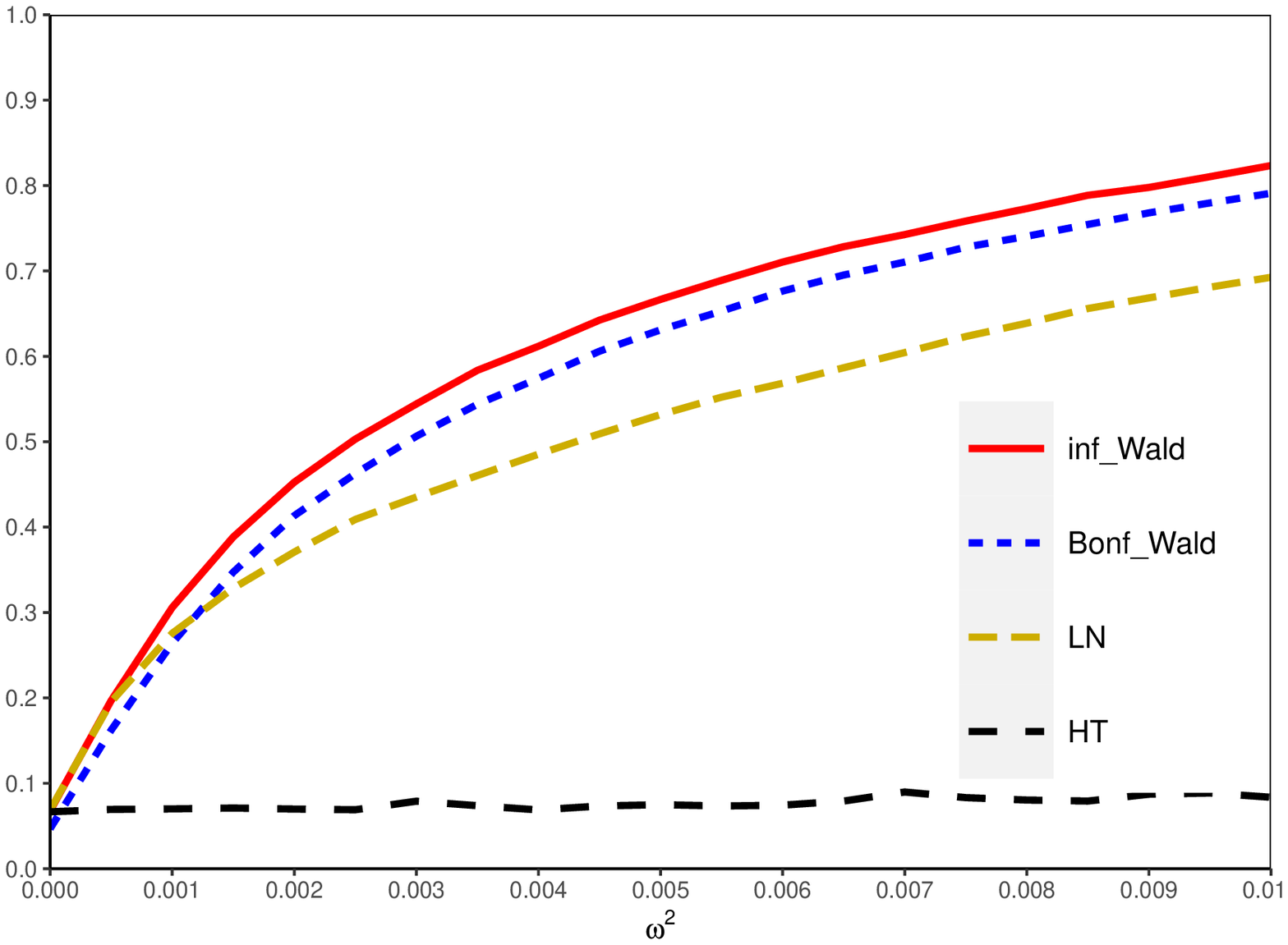}
		\caption{$\mathrm{Corr}(\varepsilon_t,v_t)=0.5$}
	\end{subfigure} \quad
	\begin{subfigure}{0.47\textwidth}
		\includegraphics[width=\textwidth]{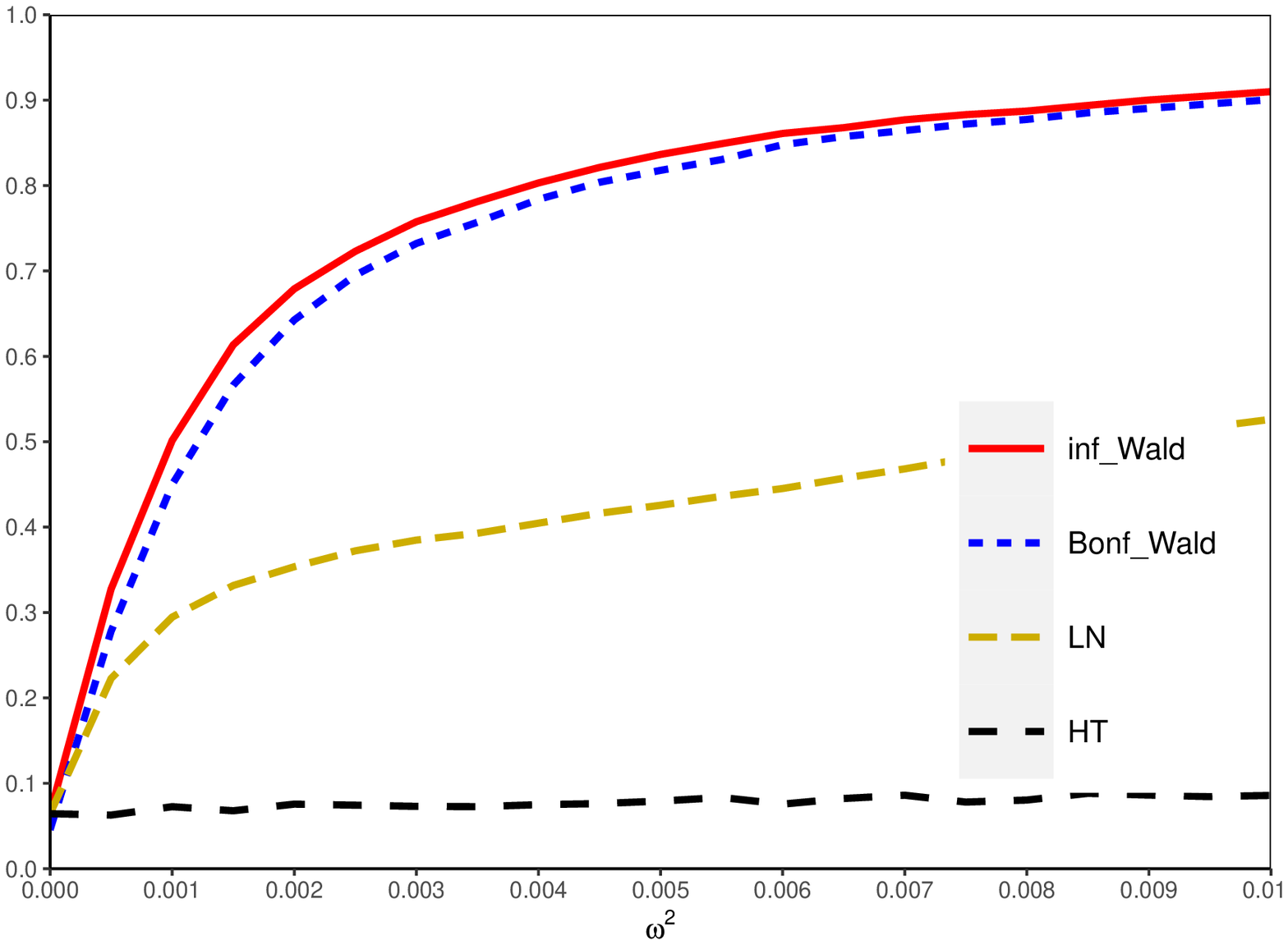}
		\caption{$\mathrm{Corr}(\varepsilon_t,v_t)=0.75$}
	\end{subfigure} \quad
	\caption{Finite-sample power functions under $\rho=1$}
	\label{fig:powers_fs_1}
\end{sidewaysfigure}

\clearpage
\begin{sidewaysfigure}
	\centering
	\begin{subfigure}{0.47\textwidth}
		\includegraphics[width=\textwidth]{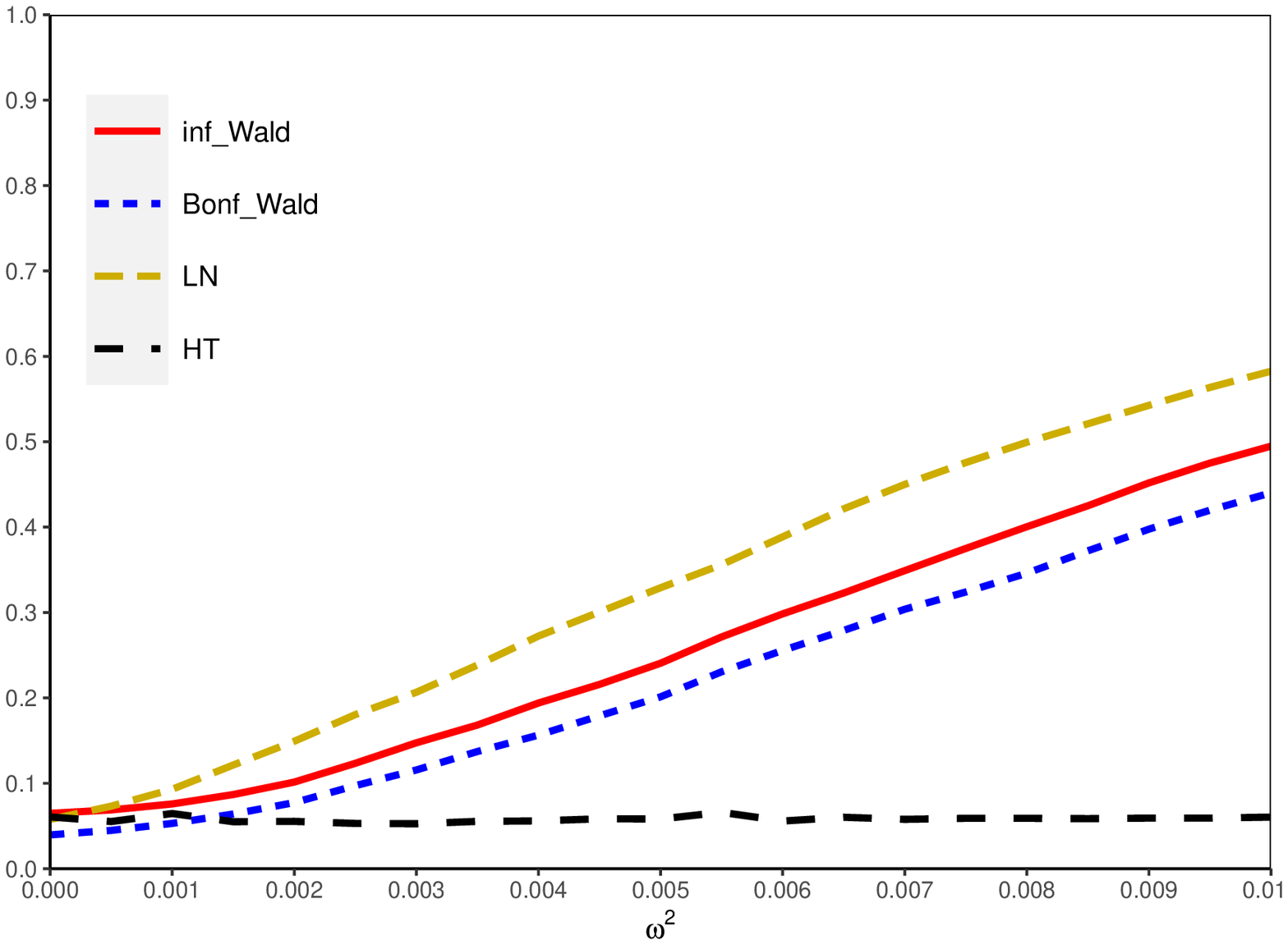}
		\caption{$\mathrm{Corr}(\varepsilon_t,v_t)=0$}
	\end{subfigure} \quad
	\begin{subfigure}{0.47\textwidth}
		\includegraphics[width=\textwidth]{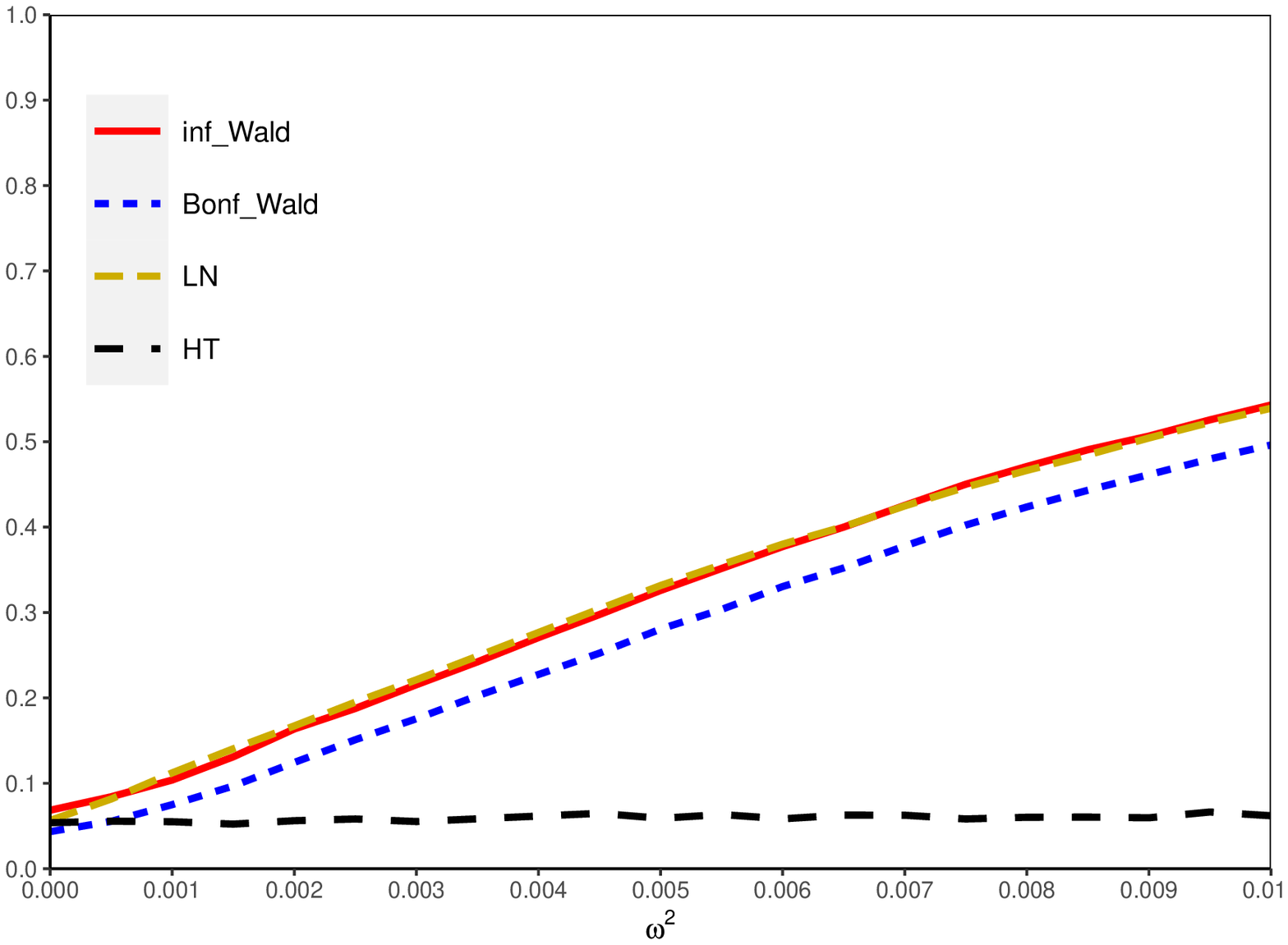}
		\caption{$\mathrm{Corr}(\varepsilon_t,v_t)=0.25$}
	\end{subfigure} \quad
	\begin{subfigure}{0.47\textwidth}
		\includegraphics[width=\textwidth]{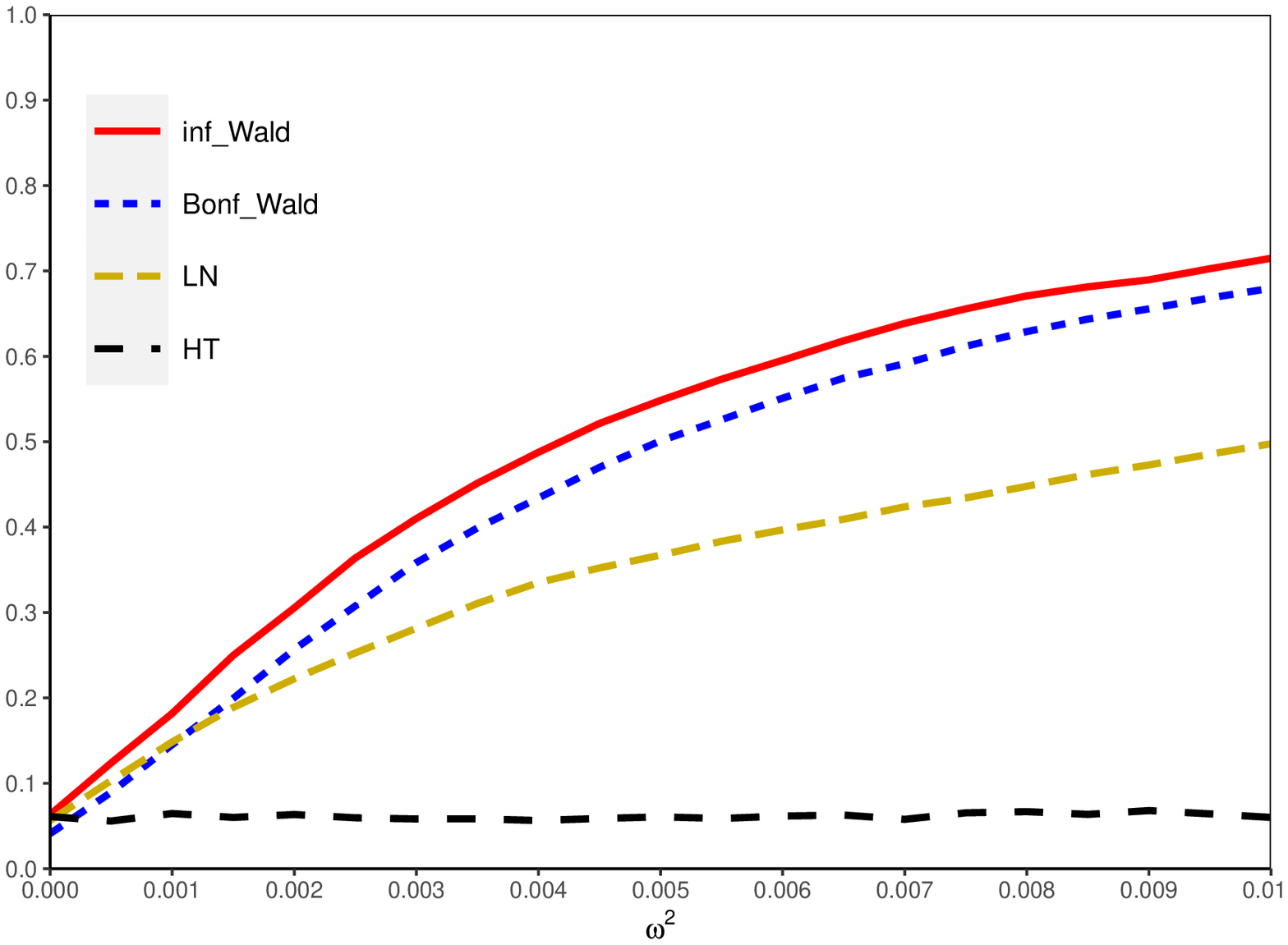}
		\caption{$\mathrm{Corr}(\varepsilon_t,v_t)=0.5$}
	\end{subfigure} \quad
	\begin{subfigure}{0.47\textwidth}
		\includegraphics[width=\textwidth]{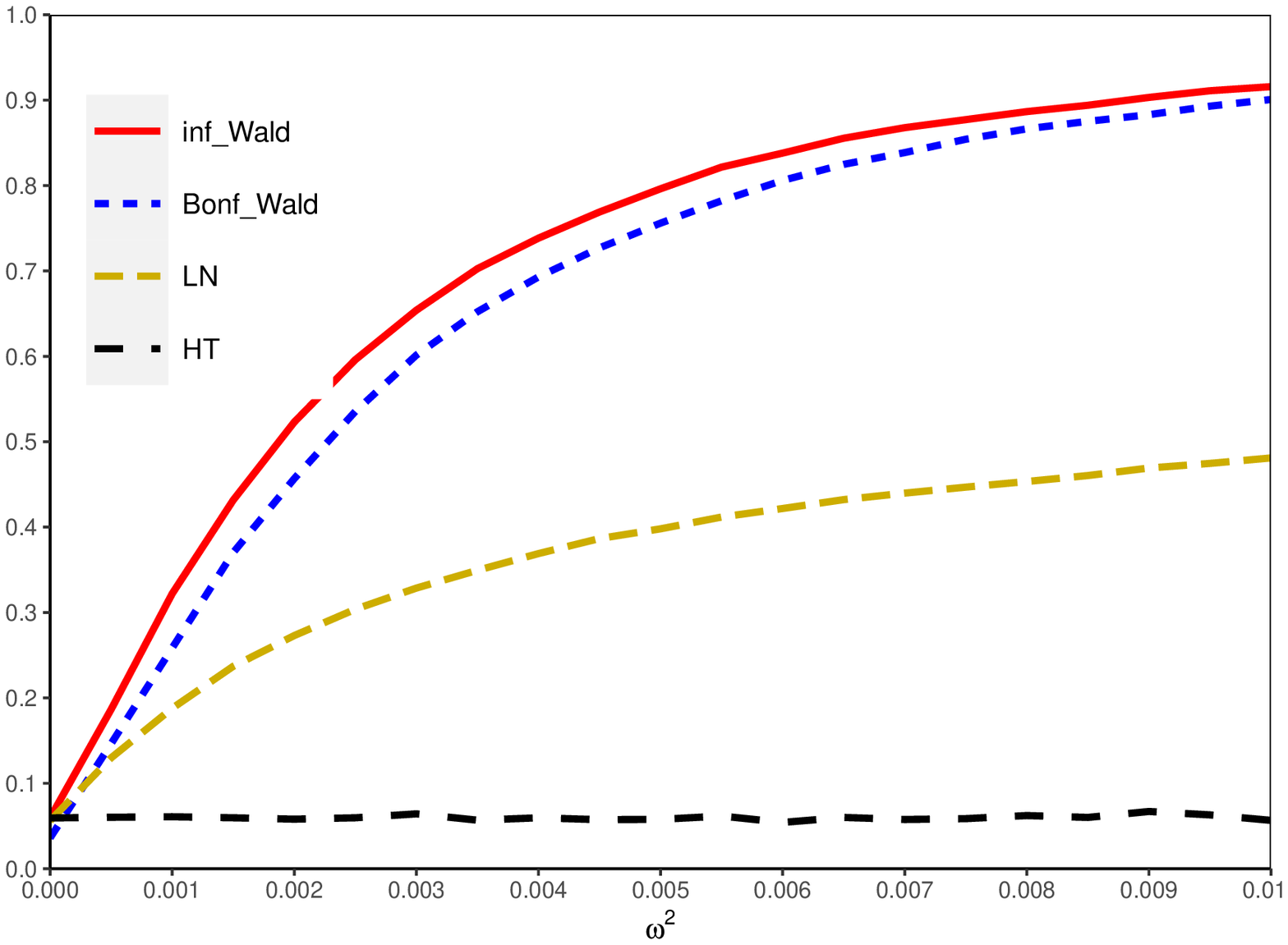}
		\caption{$\mathrm{Corr}(\varepsilon_t,v_t)=0.75$}
	\end{subfigure} \quad
	\caption{Finite-sample power functions under $\rho=0.98$}
	\label{fig:powers_fs_098}
\end{sidewaysfigure}

\clearpage
\begin{sidewaysfigure}
	\centering
	\begin{subfigure}{0.47\textwidth}
		\includegraphics[width=\textwidth]{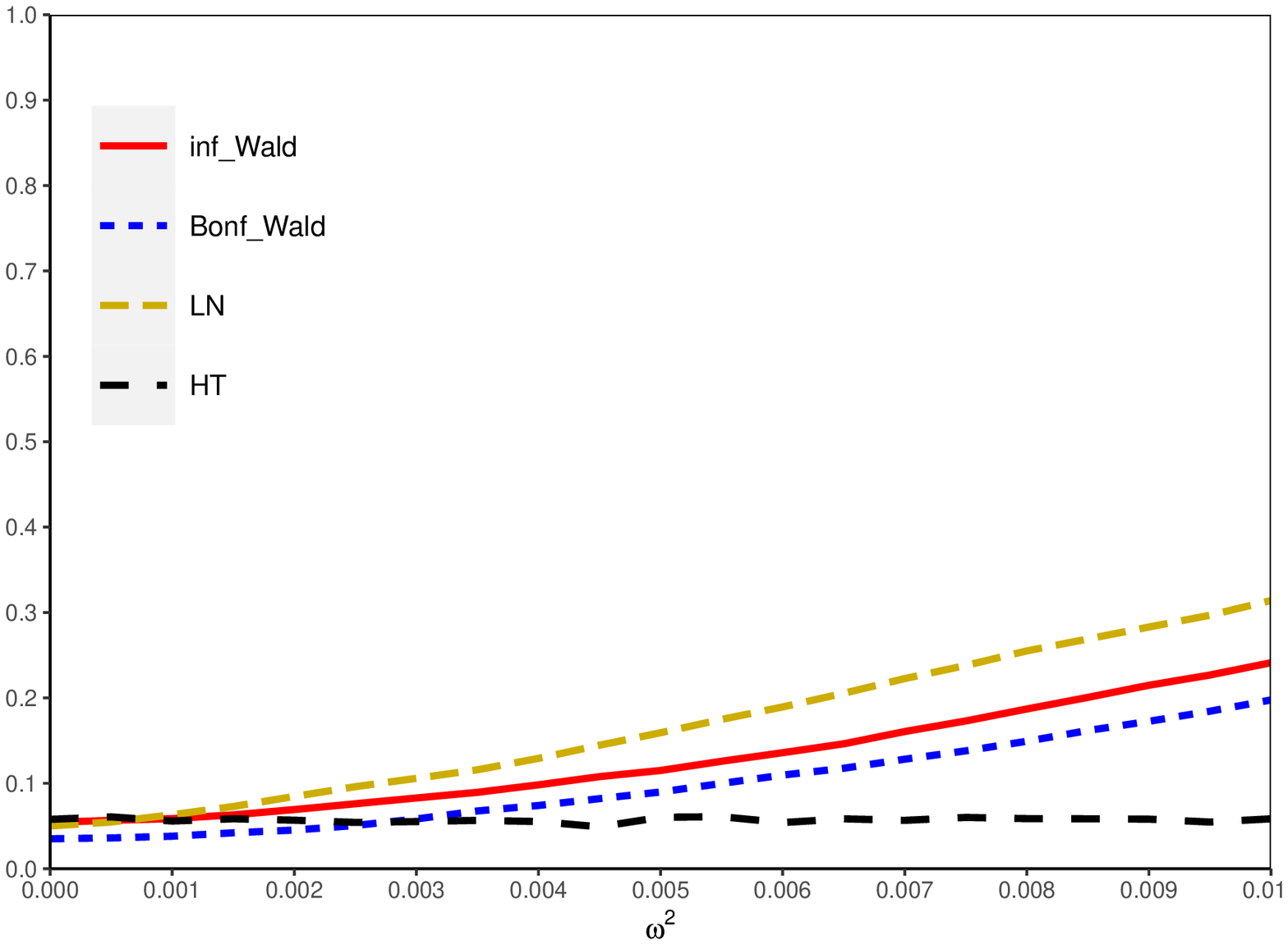}
		\caption{$\mathrm{Corr}(\varepsilon_t,v_t)=0$}
	\end{subfigure} \quad
	\begin{subfigure}{0.47\textwidth}
		\includegraphics[width=\textwidth]{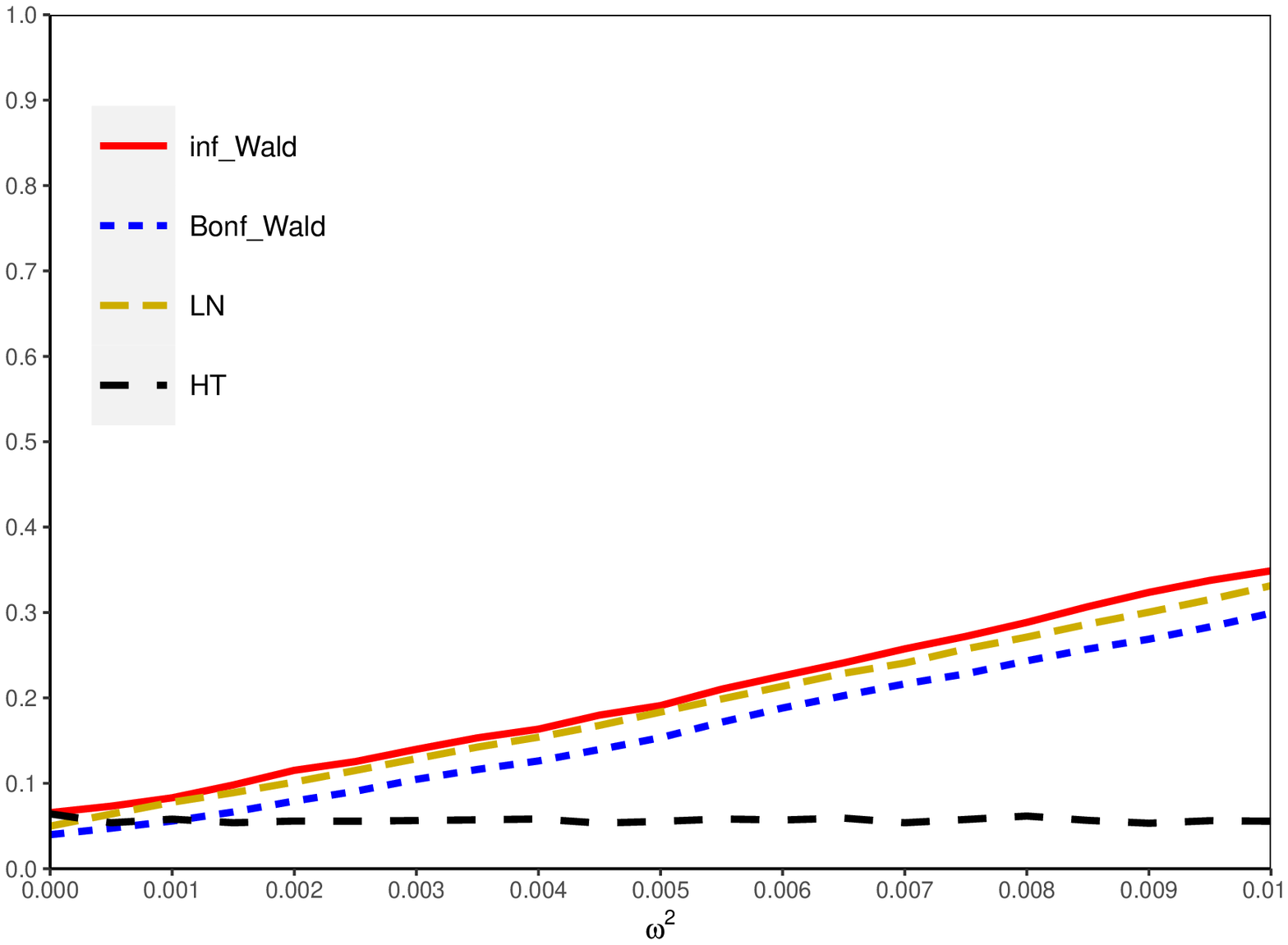}
		\caption{$\mathrm{Corr}(\varepsilon_t,v_t)=0.25$}
	\end{subfigure} \quad
	\begin{subfigure}{0.47\textwidth}
		\includegraphics[width=\textwidth]{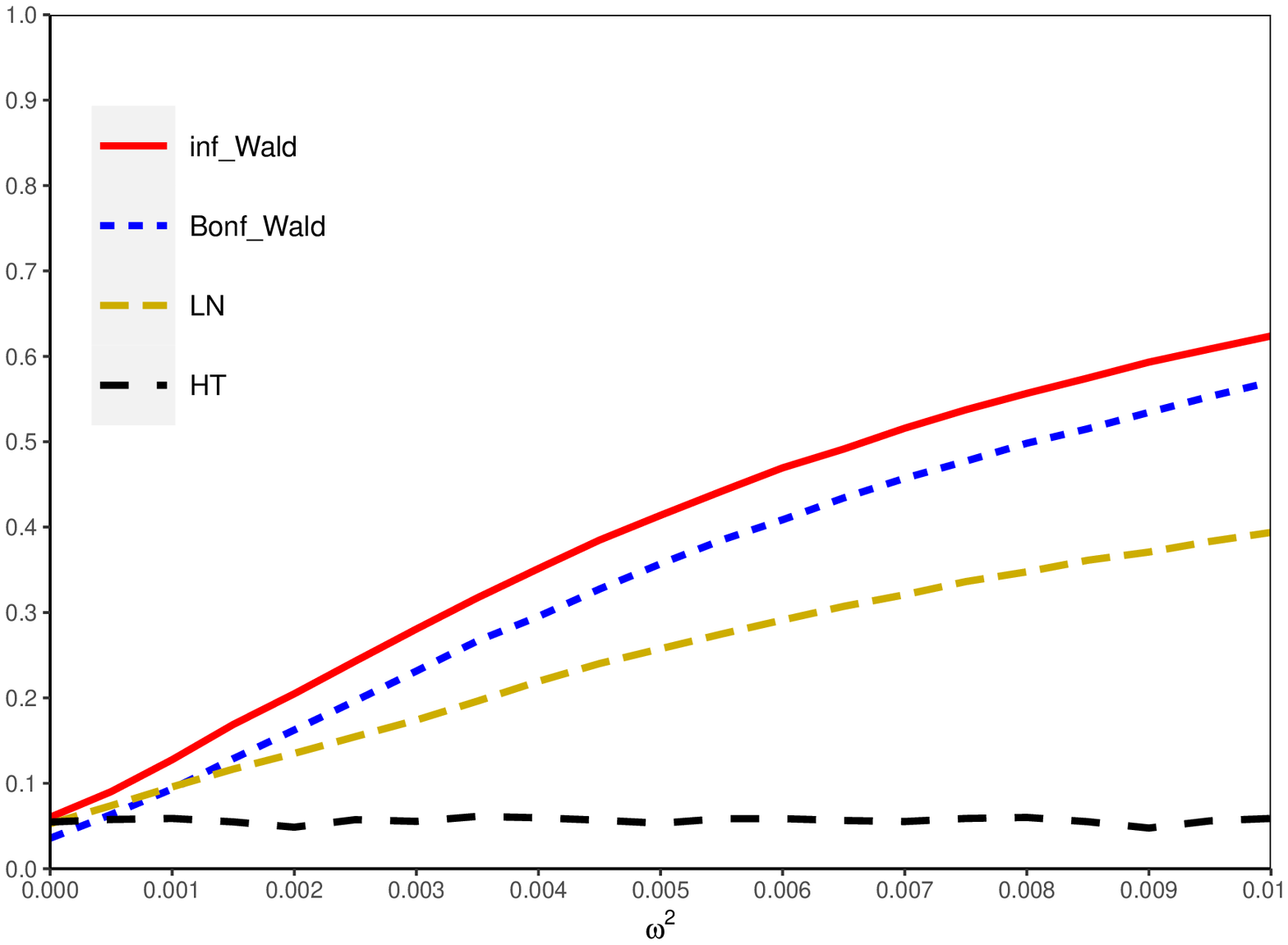}
		\caption{$\mathrm{Corr}(\varepsilon_t,v_t)=0.5$}
	\end{subfigure} \quad
	\begin{subfigure}{0.47\textwidth}
		\includegraphics[width=\textwidth]{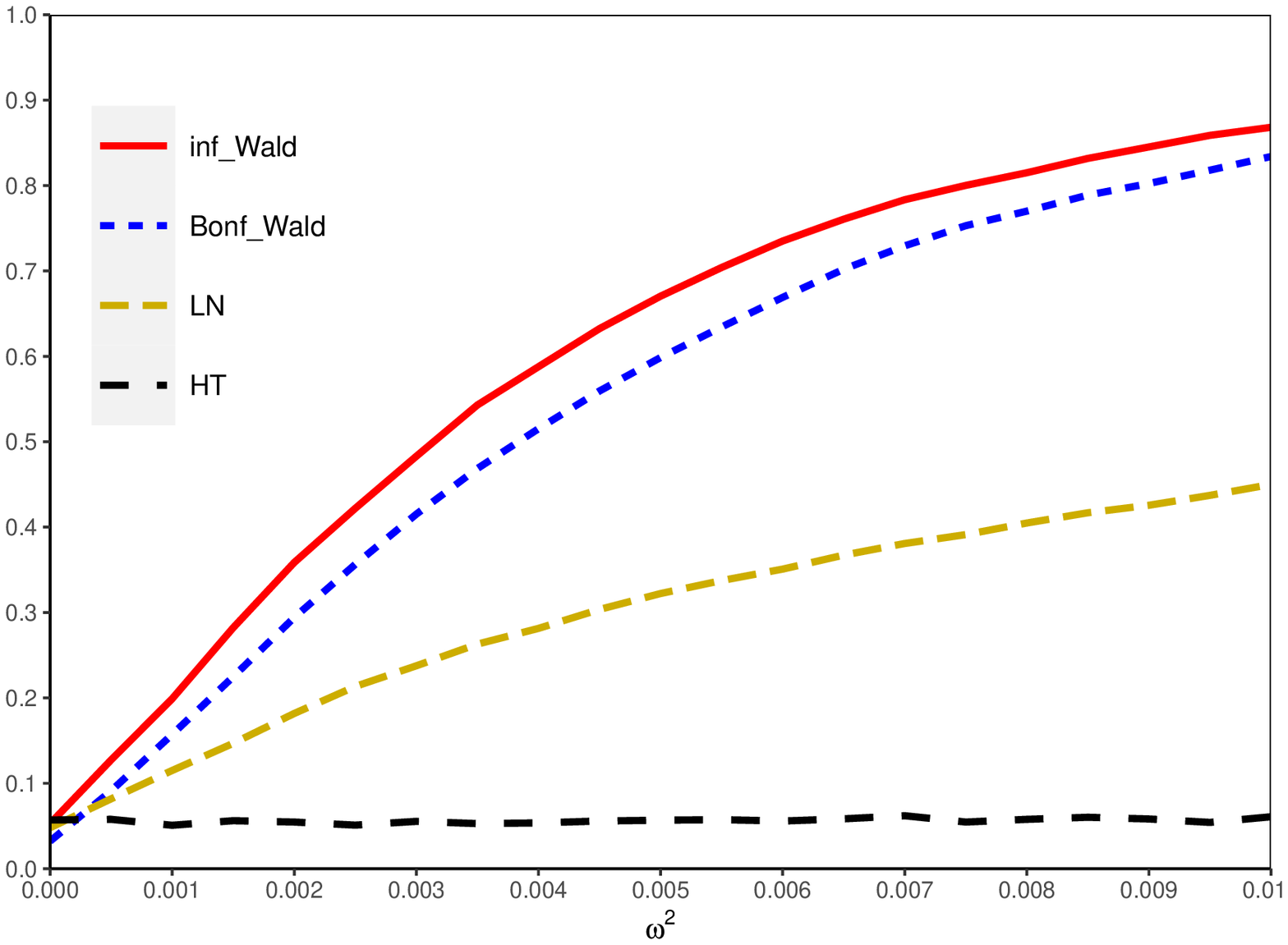}
		\caption{$\mathrm{Corr}(\varepsilon_t,v_t)=0.75$}
	\end{subfigure} \quad
	\caption{Finite-sample power functions under $\rho=0.95$}
	\label{fig:powers_fs_095}
\end{sidewaysfigure}

\end{document}